\documentclass[a4paper, 12pt, reqno]{amsart}

% ----------- packages -----------------
% ---- original ----
% basic packages
%\usepackage{amsmath, amsthm, amstext}
%\usepackage{amssymb, array, amsfonts}
%\usepackage{dsfont}
% french-related packages
\usepackage[english]{babel}
\usepackage[T1]{fontenc}
% ---- NV packages ----
% math commands (abbreviations) package
\usepackage{NV_MathCommandsPack}
% theorem-type environments for the english language
\usepackage{NV_Environments_English}

% --------- page format ----------------
\topmargin -0.5in
\textheight 9in
\oddsidemargin -0.4in
\evensidemargin -0.4in
\textwidth 7in

% ----------- environments -------------
% equations numbering
\numberwithin{equation}{section}

% ----------- local notations ----------
% fonction de comptage (counting function),
% N is reserved for the integrated density of states
\def \fCount{\calN}
% any boundary condition
\def \anyBC{\bigstar}
% ground state energy
\def \groundEn{\frE}
% thermodynamic limit for the energy (energy per particle)
\def \densEn{\calE}
% sign for any quantum statistics (no, bosons or fermions)
\def \signStat{\sharp}

%=========================================================================

\title[Thermodynamic Limit for Interacting Particles in Random Media]
{The Existence of the Thermodynamic Limit\\ 
for the System of Interacting Quantum Particles\\ 
in Random Media}

\author{Nikolaj~A.~Veniaminov}
\thanks{The author is partially supported by the grant 
ANR-08-BLAN-0261-01.}
\address{Laboratoire Analyse Géométrie et Applications, 
Université Paris 13 Nord,
99 avenue Jean-Baptiste Clément, 
93430 Villetaneuse, France}
\email{veniaminov@math.univ-paris13.fr}

\date{\today}

% ========================================================================
% ========================= T E X T ======================================
% ========================================================================
\begin{document}

\maketitle

\begin{abstract}
The thermodynamic limit of the internal energy and the entropy 
of the system of quantum 
interacting particles in random medium is shown to exist under
the crucial requirements of stability and temperedness of interactions.
The energy turns out to be proportional 
to the number of particles and/or volume of the system 
in the thermodynamic limit.
The obtained results require very general assumptions on the random 
one-particle model.
The methods are mainly based on subadditive type inequalities.
\end{abstract}

%\tableofcontents

\section{Introduction}

Since the fundamental work \cite{Anderson_1} of P.~W.~Anderson,
the theory of random Schrödinger operators has been 
an extensively studied field of mathematical physics.
The greatest attention has been paid since now to the one-particle 
approximation and we do not try to list here even the major works 
on this topic.

There are relatively few papers where finitely many particles are considered.
There is a series of papers by Michael Aizenman and Simone Warzel that 
generalize the techniques of fractional moments method
(see, for example, \cite{AizenmanWarzel_LocBoundsMultipartilce})
and another series of articles by Victor Chulaevsky, Yuri Suhov and their 
collaborators that make use of the multiscale analysis
\cite{ChulaevskySuhov_MultiparticleAndersonLoc, 
ChulaevskySuhovMonvel_DynLocMultiParticle}.
The common point of these works is that they consider the number of particles 
being fixed and study the infinite volume limit for such system.

The present paper is an attempt to give an insight of what happens if 
both the number of particles and the volume go to infinity together so that
the number of particles per unit volume is kept constant.

The same question has already been addressed by various authors 
in case of absence of background
potential, i.e., when one-particle propagation is given by pure Laplacian.
In this paper, we will frequently follow the framework developed by 
David Ruelle in \cite{Ruelle_StatMech}, though
the presence of random potential presents certain mathematical difficulty,
which we will explain later.
We would also like to refer to an outstanding article
\cite{Lieb_Lebowitz_ExistenceThDynElectNucl}
of Elliot H.~Lieb and J.~L.~Lebowitz, 
where Coulomb interactions (always in absence of background potential)
are treated.

The idea that the number of particles grows with the volume 
looks natural in the context of condensed matter physics.
As a reference real-world example consider a piece of metal or semiconductor.
A bigger piece should contain proportionally more electrons.
As macroscopic objects are composed of many atoms 
(Avogadro constant $N_A \approx 6 \times 10^{23} \mathrm{mol}^{-1}$), 
and thus, ions and electrons, 
it turns out that the corresponding mathematical notion is the thermodynamic 
limit. 
Its existence for thermodynamic  quantities, 
such as internal energy, free energy, calorific capacity, and so
on, is the mathematical verification of the fact that these quantities 
are extensive. The latter is barely assumed in physics but actually 
needs rigorous verification.

Let us briefly discuss the mathematical objects we study.
All the notions will be introduced later in full regularity.
Let
\begin{equation}
H_\omega = -\laplace_d + V_\omega
\label{eq:HomegaIntro}
\end{equation}
be the random Schrödinger operator that describes a
single quantum particle in random environment $V_\omega$.
Kinetic part $\laplace_d$ is $d$-dimensional Laplacian.
One may also consider magnetic Schrödinger operator 
or whatever, provided that a number of basic facts,
such as Wegner estimate (see Proposition \ref{prop:WegnerEstimate}),
from the theory of one particle random operators hold true.
Actually, the whole ideology of this paper is that we take 
one-particle operators for known and deduce on this base properties
for multiparticle operators.

The restriction of $H_\omega$ to the domain $\Lambda$ is denoted by
$H_\omega(\Lambda)$.
For one particle Hamiltonian as in \eqref{eq:HomegaIntro}, 
we define, with a slight abuse of notation,
the $n$ particle operator (restricted in physical space to 
domain $\Lambda$) with pair interactions potential $U$ by
\begin{equation*}
H_\omega(\Lambda, n) 
= -\laplace_{n d} + \sum_{i = 1}^n V_\omega(x^i) 
+ \sum_{i \ne j} U(x^i - x^j) \text{,}
\end{equation*}
where $x^i \in \Lambda$, $i = 1, \hdots, n$, are particles' coordinates.

Using the notations introduced above, the general question we want to 
understand is the behavior of $H_\omega(\Lambda, n)$ in the thermodynamic limit:
\begin{equation}\label{eq:ThDynLimIntro}
H_\omega(\Lambda, n)\; \to\; ?, \quad 
|\Lambda| \to \infty,\; n \to \infty,\; n / |\Lambda| \to \const \text{.}
\end{equation}
In this paper, we answer a much more modest question than 
\eqref{eq:ThDynLimIntro}.
Namely, let $\groundEn_\omega(\Lambda, n)$ be the ground state energy of
$H_\omega(\Lambda, n)$.
In Theorem \ref{th:enConv}, we show, in particular, 
that the ground state energy 
per particle admits the thermodynamic limit:
\begin{equation}\label{eq:EnFondLimIntro}
\exists\; \lim \frac{\groundEn_\omega(\Lambda, n)}{n}, \quad 
|\Lambda| \to \infty,\; n \to \infty,\; n / |\Lambda| \to \const \text{.}
\end{equation}
Moreover, the same theorem gives a bit more general result
that allows to scale on the eigenenergy number in the spectrum. 
Roughly, the eigenenergy number (counting function) 
should be of order of exponent of the 
number of particles to ensure the convergence.

Theorem \ref{th:entrConv} gives the reciprocal result interchanging roles
of energy and the counting function 
in the spectrum (the theorem is stated in terms of
entropy which is the logarithm of counting function).

The main tool we use to obtain our results is a modified version of
subadditive ergodic theorem (see Proposition \ref{propESubAdd}).
For instance, one may show that the ground state energy 
$\groundEn_\omega(\Lambda, n)$ is additive with respect to the pair 
$(\Lambda, n)$ up to an error term that can be taken into account.
To make use of subadditivity we follow the construction of D.~Ruelle 
\cite{Ruelle_StatMech}.
Nevertheless, significant modifications are made in the proof because of
the fact that instead of full translation invariance of free Laplacian, 
we have only the covariance property of the family of random operators.
In general,
we are only able to prove the convergence in $L^2$ with respect to randomness 
(see Theorem \ref{th:enConv} case (\ref{it:CaseL2})).
A stronger convergence in $L^1$ and almost surely is established for 
compactly supported interactions (Theorem \ref{th:enConv} 
case (\ref{it:CaseL1as})).

In the last part of the present paper, we consider the system of noninteracting 
fermions in random medium.
We show that nontrivial effects arise due to Fermi-Dirac statistics even
in absence of interactions. In particular, we give an exact expression for 
the limit \eqref{eq:EnFondLimIntro} in terms of the 
one particle density of states measure (see Theorem \ref{th:densEnFerm})
and we find an interesting relation with the Fermi energy.

The rest of the paper is organized as follows.
The model of interacting quantum particles in random media and 
the notion of thermodynamic limit are introduced in
Section \ref{sect:ModelAndNotations}.
The results (mainly on the existence of thermodynamic limit) constitute 
Section \ref{sect:MainResults}, 
followed by the proofs in Section \ref{sect:Proofs}.
In addition, Section \ref{sect:Proofs} uncovers some extra
properties of the energy density (see Subsections 
\ref{subsect:CriticalDensityOfParticles} and 
\ref{subsect:PropertiesOfEnergyDensity}).
The proofs themselves may be instructive as well.
In Section \ref{sect:FreeParticles}, 
simple calculations concerning the thermodynamic 
limit for vanishing interactions are provided.

The author is grateful to his thesis advisor, Prof.~Frédéric Klopp, 
for proposing the problem, for his constant interest and support, 
as well as many valuable discussions.

\section{Model and Notations}
\label{sect:ModelAndNotations}

\subsection{Model of  Interacting Quantum Particles in Random Media}
\label{subsect:modelInterQuantPart}

We consider a system of $n$ interacting quantum particles in a random medium.
The discrete and continuum cases are treated simultaneously and 
an explicit indication is given if a result is valid only for one setting.
In the discrete case, the configuration space is given by 
$\calV = \bbZ^d$ 
and for the continuous case by
$\calV = \bbR^d$. 
In uniform manner, the one-particle Hilbert space is given by
\begin{equation*}
\frH = \frH^1 = L^2(\calV) \text{.}
\end{equation*}
The $n$-particle Hilbert space definition depends on the statistics (physical nature of quantum particles).
The following statistics are considered.
\begin{enumerate}
 \item
  \textbf{The Maxwell - Boltzmann statistics.}
The particles are physically distinguishable and 
no restrictions are imposed on a multiparticle wavefunctions.
This model is suitable, in particular, for the description 
of heavy atomic nuclei, 
i.~e., for particles that exhibit classical properties.
The corresponding Hilbert space is given by
\begin{equation*}
\frH^n = \bigotimes_{j = 1}^n \frH =  L^2(\calV^{n}) \text{.}
\end{equation*}
 \item
  \textbf{The Bose - Einstein statistics}: the particles are bosons. 
The wavefunction is necessarily symmetric with respect to the permutations of coordinates:
\begin{equation*}
\frH^n_+ = \Sym^{n} \frH = L_{+}^2(\calV^n) \text{,}
\end{equation*}
where $\Sym$ is the symmetrised tensor product.
 \item
  \textbf{The Fermi - Dirac statistics}: they describe fermions.
Wavefunctions are restricted to the antisymmetric subspace
\begin{equation*}
\frH^n_- = \bigwedge_{j = 1}^n \frH = L_{-}^2(\calV^n) \text{,}
\end{equation*}
where $\bigwedge$ is the external product.
\end{enumerate}
$\frH_+^n$ and $\frH_-^n$ are proper subspaces of $\frH^n$. 
For $\signStat \in \{\varnothing, +, -\}$, we
write $P_\signStat$ to denote the orthogonal projector on $\frH^n_\signStat$, 
where
$\signStat = \varnothing$ stands for the Maxwell - Boltzmann statistics,
$\signStat = +$ for the Bose - Einstein statistics and
$\signStat = -$ for the Fermi - Dirac statistics.
Obviously, 
$P = P_\varnothing = \ds1_{\frH^n}$ 
is the trivial projector.

One particle Hamiltonian is given by
\begin{equation*}
 H_\omega = H_\omega(1) = -\laplace + V_\omega \text{,}
\end{equation*}
and acts on $\Dom(H_\omega) \subset \frH$, where
\begin{itemize}
 \item 
$\laplace$ is either discrete or continuous Laplacian,
 \item
a random potential $V_\omega$ is (at least) $\bbZ^d$-ergodic 
and satisfies a decorrelation (independence at a distance) condition: 
$$
\exists R_0 > 0, \text{ such that if }
\dist(A, B) > R_0, \text{then }
\{V_\omega(x)\}_{x \in A} \text{ and } \{V_\omega(x)\}_{x \in B}
\text{ are independent.}
\eqno{\bf(IAD)}
$$
\end{itemize}

\begin{remark}
We also take into account the classes of random potentials that have the
ergodic group reacher than $\bbZ^d$-translations.
For instance, everything what follows remains true for the Poisson model.
\end{remark}

\begin{notation}\label{not:tauGamma}
We write $\Omega$, $\bbP$ and $\bbE$ for the associated probability space, 
probability measure and expectation respectively.
For $\gamma \in \bbZ^d$,
we denote by $\tau_\gamma$ the corresponding translations (measure preserving 
transformations) in $\Omega$
and by $T_\gamma$ the corresponding unitary transformations (coordinate 
shifts) in $L^2(\calV)$.
Namely, 
%if $\omega = (\omega_i)_{i \in \bbZ^d}$, then
\begin{equation}\label{eq:covarRel}
H_{\tau_\gamma(\omega)} = T_\gamma^\ast H_\omega T_\gamma
%\left(\tau_\gamma(\omega)\right)_i = \omega_{i - \gamma} 
\text{,}
\end{equation}
where $T_\gamma f(x) = f(x - \gamma)$, $f \in L^2(\calV)$, $x \in \calV$.
\end{notation}

By $H_\omega^{(i)}$ we denote 
a corresponding operator in $\frH^n$ that acts only on the $i$-th particle. 
More precisely,
\begin{equation}\label{eq:HomegaiDef}
 H_\omega^{(i)} = \underbrace{\ds1_\frH \otimes \hdots \otimes \ds1_\frH}_{\mbox{$i - 1$ times}} \otimes H_\omega \otimes \underbrace{\ds1_\frH \otimes \hdots \otimes \ds1_\frH}_{\mbox{$n - i$ times}} 
\text{.}
\end{equation}
The $n$-particle Hamiltonian in random environment $V_\omega$ 
and with interactions $W$ is given by the following self-adjoint operator 
on $\frH^n_\signStat$:
\begin{equation}\label{eqHomeganDef}
 H_{\omega, \signStat}(n) 
= P_\signStat \left[\sum_{i = 1}^n H_\omega^{(i)} + W_n\right] \text{.}
\end{equation}
For each $n \in \bbN$, $W_n$ is an interaction potential given 
by a function of the $n$ particles coordinates 
$\bfx = (x^1, \hdots, x^n)$, $x_j \in \calV$. 
%A priori this is an arbitrary function of coordinates $\bfx$.
We refer to the whole collection
$W = \{W_n\}_{n \in \bbN}$ as \emph{interactions} in general.
Remark also that in this model interactions are deterministic and all 
particles live in the same random background potential $V_\omega$.

In \eqref{eqHomeganDef}, the free part
\begin{equation*}
H^0_{\omega, \signStat}(n) = \sum_{i = 1}^{n} H_\omega^{(i)}
\end{equation*}
is called \emph{the second quantization} of 
$H_\omega$ in context of the Fock space 
(see, for example, \cite{BratteliRobinson2}).
Namely, we have to restrict the second quantization of $H_\omega$ 
to the $n$-particle subspace of the whole Fock space:
\begin{equation*}
H^0_{\omega, \signStat}(n) 
= \bigl.d\Gamma(H_\omega)\bigr|_{\frH^n_\signStat} \text{,}
\end{equation*}
where $d\Gamma$ denotes second quantization procedure.

\begin{remark}
$H^0_{\omega, \signStat}(n)$ 
acts from $\frH^n_\signStat$ into itself for any choice of $\signStat$,
whereas an arbitrary interaction potential $W_n$ does not necessarily preserve 
complete (anti)symmetry.
That is why the projector $P_\signStat$ a-priori acts non trivially in this 
formula.
However, potentials that we consider later are permutation symmetric 
(confer Section \ref{sect:MainResults}, property \textbf{(PI)}), 
so that the projector becomes obsolete in \eqref{eqHomeganDef}, i.e.,
\begin{equation*}
H_{\omega, \signStat}(n) = H^0_{\omega, \signStat}(n) + W_n \text{.}
\end{equation*}
\end{remark}

The Dirichlet and Neumann restrictions of $H_{\omega, \signStat}(n)$ 
to a finite box $\Lambda \subset \calV$ are denoted by
$H_{\omega, \signStat}^\anyBC(\Lambda, n)$,
where $\anyBC \in \{D, N\}$.
$H_{\omega, \signStat}^\anyBC(\Lambda, n)$ is a self-adjoint operator on
$\frH_\signStat^n(\Lambda) = L_\signStat^2(\Lambda^n)$.
We omit $\anyBC$ in notations frequently.

The operator $H_{\omega, \signStat}(\Lambda, n)$ has a discrete spectrum.
We call the \emph{counting function} associated to this operator
\begin{equation*}
\fCount_{\omega, \signStat}(E, \Lambda, n) =
\card\{E_k(\Lambda, n, \omega; \signStat) \leq E\} \text{,}
\end{equation*} 
where $E_k(\Lambda, n, \omega; \signStat)$ are the eigenvalues of $H_{\omega, \signStat}(\Lambda, n)$.
For the reasons that will become apparent later, 
\emph{the entropy} is a more convenient quantity:
\begin{equation}\label{eqSdef} 
S_{\omega, \signStat}(E, \Lambda, n) 
= \log \fCount_{\omega, \signStat}(E, \Lambda, n) \text{.}
\end{equation}

\begin{notation}
Sometimes we will drop some (if not all) of the indices and arguments
of the counting function and the entropy.
For example, if we are interested in the dependence on energy,
we will write just:
\begin{equation*}
S(E) = S_{\omega, \signStat}(E, \Lambda, n), \quad
\fCount(E) = \fCount_{\omega, \signStat}(E, \Lambda, n) \text{.}
\end{equation*}
\end{notation}

\begin{remark} 
As the counting function $\fCount$ takes its values in $\bbN \cup \{0\}$,
the entropy takes its values in 
$\log \bbN \cup \{-\infty\}$.
\end{remark}

\begin{observation}
For fixed $\omega$, $\signStat$, $\Lambda$ and $n$, 
the entropy $E \mapsto S_{\omega, \signStat}(E, \Lambda, n)$ 
is a non-decreasing right-continuous step function.
\end{observation}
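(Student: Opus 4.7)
The statement is essentially a direct consequence of the spectral structure already invoked in the preceding paragraph (the operator $H_{\omega, \signStat}(\Lambda, n)$ has discrete spectrum), together with the definition $S = \log \fCount$. So the plan is simply to reduce the claim for $S$ to the corresponding properties of $\fCount$, and then to verify those properties from the definition of $\fCount$ as the eigenvalue counting function of an operator with purely discrete spectrum on the finite-dimensional-per-window scale (no finite accumulation point).

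First I would establish the three properties for $\fCount(E) = \fCount_{\omega, \signStat}(E, \Lambda, n)$. Monotonicity: since enlarging $E$ can only enlarge the set $\{E_k \leq E\}$, the function $E \mapsto \fCount(E)$ is non-decreasing by inspection. Step function and right-continuity: the eigenvalues $E_k = E_k(\Lambda, n, \omega; \signStat)$ form, by discreteness of the spectrum of $H_{\omega, \signStat}(\Lambda, n)$, a non-decreasing sequence with no finite accumulation point. Consequently, for every $E \in \bbR$ there exists $\delta = \delta(E) > 0$ such that the interval $(E, E+\delta]$ contains no eigenvalue, so $\fCount$ is constant on $[E, E+\delta]$, giving right-continuity at $E$. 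Between consecutive distinct eigenvalues $\fCount$ is likewise constant, while at each eigenvalue it jumps by its multiplicity (a positive integer); this is precisely the definition of a non-decreasing integer-valued step function.

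Then I would transfer each property through the logarithm. Since $\log: (0, \infty) \to \bbR$ is strictly increasing and continuous, and extending by $\log 0 = -\infty$ preserves monotonicity and right-continuity (in the extended sense), the function $S(E) = \log \fCount(E)$ is non-decreasing, right-continuous, and takes constant values on each interval of constancy of $\fCount$, hence is itself a step function (with value $-\infty$ strictly below the ground state energy $\groundEn_\omega(\Lambda, n) = E_1$, and finite values $\log k$ for $k \in \bbN$ above it).

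There is no real obstacle in this proof; the only mild subtlety is handling the value $-\infty$ that $S$ takes on the half-line below the ground state energy, which is resolved simply by noting that right-continuity at $E_1$ is equivalent to $\fCount$ being right-continuous and positive there, both of which are immediate.
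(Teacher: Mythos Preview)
Your proposal is correct. The paper states this as an \emph{Observation} without proof, treating it as an immediate consequence of the discreteness of the spectrum and the definition $S = \log \fCount$; your argument spells out exactly this routine verification, so there is nothing to compare.
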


The monotonicity of $S(E) = S_{\omega, \signStat}(E, \Lambda, n)$ 
allows to define a (quasi-)inverse function
$E_{\omega, \signStat}(\Lambda, n, S)$.
As $\fCount(\cdot, \Lambda, n)$ is not a local bijection at any point, 
the inverse function doesn't exist in a canonical manner. 
Our choice of the inverse is the following.

For $S$ such that $e^S \in \bbN$ we define
\begin{equation}\label{eqEdef} 
E_{\omega, \signStat}(\Lambda, n, S) =
E_{\exp{S}}(\Lambda, n, \omega; \signStat) \text{.}
\end{equation}
The application $S \mapsto E(S)$ is a right inverse 
of the entropy \eqref{eqSdef} in the following meaning.
For $S \in \log \bbN$ one has
\begin{equation}\label{eqSEinversion} S_{\omega, \signStat}\left(E_{\omega,
\signStat}(\Lambda, n, S), \Lambda, n\right) = S \text{.}
\end{equation}
Reciprocally, if $E \geq E_1(H_{\omega, \signStat}(\Lambda, n))$, then
\begin{equation}\label{eq:ESinversion}
E_{\omega, \signStat}\left(\Lambda, n, S_{\omega, \signStat}(E, \Lambda, n)\right) 
= E^- \text{,}
\end{equation}
where $E^-$ is the closest from below to $E$ 
eigenenergy of $H_{\omega, \signStat}(E, \Lambda, n)$.

The relations \eqref{eqSEinversion} and \eqref{eq:ESinversion}
motivate this choice of an inverse function.

\begin{definition} 
We denote by
$\groundEn = \groundEn_{\omega, \signStat}(\Lambda, n)$
the ground state energy of the operator
 $H_{\omega, \signStat}(\Lambda, n)$:
\begin{equation*} 
\groundEn_{\omega, \signStat}(\Lambda, n) =
\inf_{\substack{\varphi \in \Dom(H_{\omega, \signStat}(\Lambda, n))\\
\varphi \ne 0}} \frac{\left\langle H_{\omega, \signStat}(\Lambda, n)
\varphi, \varphi \right\rangle}{\|\varphi\|^2}
\text{.}
\end{equation*}
\end{definition}

Two characterizations of the ground state energy in terms of entropy 
are given below.

\begin{proposition}\label{propEFondCar}
$\groundEn$
is the ground state energy if and only if 
$\fCount(\groundEn - 0) = 0$ 
and
$\fCount(\groundEn + 0) > 0$
or, equivalently, if and only if 
$S(\groundEn - 0) = -\infty$ 
and 
$S(\groundEn + 0) \geq 0$.
\end{proposition}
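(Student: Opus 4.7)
The proposition is essentially a translation of the definition of the ground state energy into the language of the counting function $\fCount$ and its logarithm $S$, so the plan is to unwind both sides and check that they describe the same property of the spectrum.

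First I would recall that, since $H_{\omega,\signStat}(\Lambda,n)$ has purely discrete spectrum, $\groundEn_{\omega,\signStat}(\Lambda,n)$ coincides with the smallest eigenvalue $E_1(\Lambda,n,\omega;\signStat)$: the Rayleigh--Ritz infimum in the definition is attained on the ground-state eigenvector. Hence the property ``$\groundEn$ is the ground state energy'' is equivalent to the pair of conditions (a) there is no eigenvalue of $H_{\omega,\signStat}(\Lambda,n)$ strictly less than $\groundEn$, and (b) $\groundEn$ itself is an eigenvalue.

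Next I would translate (a) and (b) into statements about $\fCount(\cdot,\Lambda,n)$. Since $\fCount(E)$ counts eigenvalues $\le E$ and is a nondecreasing step function that jumps exactly at the eigenvalues, the left limit $\fCount(\groundEn - 0)$ equals the number of eigenvalues strictly below $\groundEn$; thus (a) is equivalent to $\fCount(\groundEn - 0) = 0$. By the right-continuity observation already stated in the paper, $\fCount(\groundEn + 0) = \fCount(\groundEn)$ equals the number of eigenvalues $\le \groundEn$, so under (a) the condition $\fCount(\groundEn + 0) > 0$ is equivalent to $\groundEn$ being itself an eigenvalue, i.e.\ to (b). Conversely, if both $\fCount(\groundEn - 0) = 0$ and $\fCount(\groundEn + 0) > 0$, the jump forces $\groundEn$ to be an eigenvalue and to be the smallest one, hence to coincide with $\groundEn_{\omega,\signStat}(\Lambda,n)$.

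Finally, the equivalence with the entropy formulation follows from $S = \log \fCount$ together with the convention $\log 0 = -\infty$: $\fCount(\groundEn - 0) = 0$ is the same as $S(\groundEn - 0) = -\infty$, and $\fCount(\groundEn + 0) \ge 1$ is the same as $S(\groundEn + 0) \ge 0$ (recall that $\fCount$ takes integer values, so strict positivity means $\ge 1$). There is no real obstacle here; the only point that deserves a line of care is the use of right-continuity of $\fCount$ to pass from $\fCount(\groundEn+0)$ to the number of eigenvalues $\le \groundEn$, and the fact that $\fCount$ takes values in $\bbN \cup \{0\}$ (so $> 0$ and $\ge 1$ coincide), making the two formulations strictly equivalent.
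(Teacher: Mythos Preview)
Your proposal is correct; the paper states this proposition without proof, treating it as an immediate consequence of the definitions of $\groundEn$, $\fCount$, and $S$. Your careful unpacking of the left/right limits and the use of right-continuity and integer-valuedness of $\fCount$ is exactly the routine verification that the paper leaves implicit.
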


\begin{proposition}
Alternatively, the ground state energy is given by the zero entropy: 
\begin{equation*}
 \groundEn_{\omega, \signStat}(\Lambda, n) = E_{\omega, \signStat}(\Lambda, n, 0) \text{.}
\end{equation*}
\end{proposition}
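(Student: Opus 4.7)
The plan is to recognize that this proposition is essentially a reformulation of definitions, and to chain together \eqref{eqEdef} with the characterization in Proposition \ref{propEFondCar}.

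First I would evaluate $E_{\omega, \signStat}(\Lambda, n, 0)$ directly from \eqref{eqEdef}. Setting $S = 0$ gives $e^S = 1 \in \bbN$, so the definition applies and yields
\begin{equation*}
E_{\omega, \signStat}(\Lambda, n, 0) = E_{\exp 0}(\Lambda, n, \omega; \signStat) = E_1(\Lambda, n, \omega; \signStat) \text{,}
\end{equation*}
that is, the smallest eigenvalue (with multiplicity, by convention) of $H_{\omega, \signStat}(\Lambda, n)$.

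Next I would verify that this smallest eigenvalue coincides with $\groundEn_{\omega, \signStat}(\Lambda, n)$. This is standard since $H_{\omega, \signStat}(\Lambda, n)$ is self-adjoint with discrete spectrum bounded below, but it is instructive to derive it from Proposition \ref{propEFondCar}, which is already stated and proved in the text. Indeed, $\fCount(\groundEn - 0) = 0$ means that no eigenvalue is strictly less than $\groundEn$, while $\fCount(\groundEn + 0) > 0$ together with the right-continuity of $\fCount$ shows that $\groundEn$ itself is an eigenvalue. Consequently $\groundEn$ is precisely the first eigenvalue, $\groundEn = E_1$.

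Combining the two identifications gives $\groundEn_{\omega, \signStat}(\Lambda, n) = E_{\omega, \signStat}(\Lambda, n, 0)$, which is the claim. There is no real obstacle here beyond unwinding the conventions for the (quasi-)inverse and the indexing of eigenvalues; the substantive content has already been absorbed into the preceding proposition.
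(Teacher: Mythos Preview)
Your proposal is correct and matches the paper's approach: the paper does not give a formal proof but simply remarks that the characterization ``is essentially due to our choice of the inverse function $E(S)$ given by \eqref{eqEdef},'' which is exactly the observation you spell out, namely that $E_{\omega,\signStat}(\Lambda,n,0)=E_{\exp 0}=E_1=\groundEn$. Your additional appeal to Proposition~\ref{propEFondCar} to justify $E_1=\groundEn$ is fine but not strictly needed, since this is immediate from the definition of $\groundEn$ as the infimum of the Rayleigh quotient for an operator with discrete spectrum.
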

The latter characterization is essentially due to our choice of 
the inverse function $E(S)$ given by (\ref{eqEdef})
and would not be valid for another choice of the inverse,
whereas the Proposition \ref{propEFondCar} is universal with respect to
the particular choice of the function $E(S)$.

\subsection{Thermodynamic Limit}

In this section we discuss the notion of thermodynamic limit,
following the approach of \cite{Ruelle_StatMech}. 
For sake of completeness and the ease of reading, 
we repeat here the basic definitions related to the notion of
thermodynamic limit that can be found in various monographs and articles
such as
\cite{Ruelle_StatMech, Lieb_Seiringer_BoseGasCondensation, 
Griffiths_MicrocanonicalInStatMech, Lieb_Lebowitz_ExistenceThDynElectNucl}.
%not sure we need to cite the last article here: it's too specific as it
%treats Coulomb interactions

First of all, we give a precise meaning to the notion of 
a sequence of domains tending to infinity.

\begin{definition}
Let  $\diam(\Lambda)$ be the diameter of $\Lambda$ and 
$\partial_h\Lambda$ be the $h$-neighborhood of $\partial\Lambda$, i.e.,
\begin{equation*}
\partial_h\Lambda = \partial\Lambda + B(0, h) \text{,}
\end{equation*}
where $B(0, h)$ is the open ball of center $0$ and radius $h$.
\end{definition}

\begin{definition}\label{defFisherConv}
The sets $\Lambda$ tend to infinity \emph{in the sense of Fisher} if
\begin{equation*}
 \lim |\Lambda| = +\infty
\end{equation*}
and there exists a ``shape function'' $\pi$ such that
\begin{equation*}
\lim_{\alpha \to 0} \pi(\alpha) = 0
\end{equation*}
and for sufficiently small $\alpha$ and all $\Lambda$
\begin{equation*}
\left.|\partial_{\alpha \diam(\Lambda)}\Lambda| \right/
|\Lambda| \leq \pi(\alpha) \text{.}
\end{equation*}
\end{definition}
In what follows, 
we will always assume that $\Lambda \to \infty$ in the sense of Fisher.

\begin{remark}
Consider a sequence of rectangular domains. 
The fact that they tend to infinity in the sense of Fisher is equivalent
to say that all their sides tend to infinity at a comparable speed, i.e.,
\begin{equation*}
 \prod_{j = 1}^d [0, L_j] \to \infty \quad \Leftrightarrow \quad 
\left\{\begin{array}{l}
        \min_j L_j \to \infty,\\
        1 \geq \min_j L_j / \max_j L_j > 1 / C \text{.}
       \end{array}
\right.
\end{equation*}
\end{remark}

\begin{definition}\label{def:LimThd}
The limit 
$\Lambda \to \infty$, $n / |\Lambda| \to \rho$,
where $\rho$ is a positive constant (density of particles),
is called the \emph{thermodynamic limit}.
\end{definition}

Usually one is interested in extensive quantities per particle
or per unit of volume 
(that is the same thing up to a multiplicative constant due to
Definition \ref{def:LimThd})
while considering the thermodynamic limit.

\begin{definition}
Let $X_\omega(\Lambda, n; \calP)$ 
be a random variable that depends on 
a domain $\Lambda$, 
a number of particles $n$
and on a set of parameters $\calP$. 
We say that 
$X_\omega(\Lambda, n, \calP)$ \emph{admits the thermodynamic limit} 
if the limit
\begin{equation*}
 \lim_{\substack{\Lambda \to \infty\\n / |\Lambda| \to \rho\\\frL[\calP]}} \frac{X_\omega(\Lambda, n, \calP)}{n}
\end{equation*}
exists in some sense with respect to randomness $\omega$
(almost sure, in probability, in $L^2$).
Here $\frL[\calP]$ is a certain limiting procedure for the parameters $\calP$,
i.e., it determines the way how the parameters $\calP$ evolve 
when $\Lambda$ and $n$ go to infinity in the thermodynamic limit.
For example, see (\ref{eq:conv}), where an extra parameter is entropy $S$, and 
the limiting procedure for the entropy reads as it should tend to 
infinity linearly with the number of particles and/or the volume of the 
system.
\end{definition}

In thermodynamics, some commonly used quantities (such as internal energy, 
for example) are assumed to be extensive, i.e., additive with respect to
volume.
The existence of the thermodynamic limit
is the mathematically rigorous way of verifying the above assumption.
Thus, it is one of the fundamental questions of statistical physics.
Some authors go even further and refer to the question of existence of
thermodynamic limit purely as ``existence of thermodynamics'' 
\cite{Lieb_Lebowitz_ExistenceThDynElectNucl}.
%Maybe, I should explain more here.

In what follows, we will be primarily concerned with the existence of 
the thermodynamic limit for the energy 
$E_\omega(\Lambda, n, S)$ with
$S / n \to \sigma \geq 0$ and, in particular, 
the ground state energy $\groundEn_\omega(\Lambda, n)$,
i.e., for $\sigma = 0$.

\section{Main Results}
\label{sect:MainResults}

Throughout this section we work with Dirichlet boundary conditions
\begin{equation*}
H_\omega(\Lambda, n) = H_\omega^D(\Lambda, n)
\end{equation*}
and we omit the explicit indication $D$ in notations.
We give a series of statements concerning the existence of 
the thermodynamic limit for the model of interacting quantum particles 
in random media, which was introduced 
in Section \ref{subsect:modelInterQuantPart}.
Basic properties of the thus defined limits are discussed.

We shall need some assumptions on the model that we introduce now.

\textbf{Pair translation invariant interactions.}
The interactions are \emph{by pairs} and are 
\emph{invariant under translations}
if for all $n \in \bbN$
$$
 W_n(\bfx) = \sum_{1 \leq i < j \leq n} U(x^i - x^j) \text{,}
\eqno{\bf(PI)}
$$
where $U$ is a function on $\calV$.
We also assume that pair interactions are symmetric: 
$U(x) = U(-x)$, $x \in \calV$.

\textbf{Tempered interactions.}
Assume {\bf(PI)} and that there exist $R_0 > 0$, $A$ and $\lambda > d$
such that for all $|x| \geq R_0$
$$
|U(x)| \leq A |x|^{-\lambda} \text{.}
\eqno{\bf(PTI)}
$$
This condition 
(together with an additional assumption that $U$ is integrable in a 
neighborhood of zero) 
guarantees that interactions are of short range, i.e.,
\begin{equation*}
\int_{\bbR^d} |U(x)| {\rmd}x < +\infty \text{.}
\end{equation*}

The temperedness or similar conditions on the behavior of the interactions 
at the infinity have been used by various authors such as
Léon van~Hove, 
Joel L.~Lebowitz, 
Robert B.~Griffits and, in particular, 
Michael E.~Fisher and 
David Ruelle.
The reader is referred to \cite{Fisher_FreeEnergy}, \cite{Ruelle_StatMech},
\cite{FisherRuelle_StabManyParticles}, \cite{Lebowitz_StatMech},
\cite{Griffiths_MicrocanonicalInStatMech}.

\begin{remark}\label{rem:tempInter}
The above assumption of temperedness of interactions can be physically
motivated by the following argument.
Consider electrons in metal or semiconductor as a reference system.
Though electrons interact via Coulomb potential ($\sim 1 / r$) in vacuum,
the situation is different in metal where each electron is surrounded by
a ``cloud'' of other electrons and lives in a grid of ions.
This leads to what is called screening of Coulomb potential in metal 
(see \cite{AshcroftMermin_SSP, Zagoskin_QTofManyBodySyst})
and results to the effective interaction potential of the 
form
\begin{equation}\label{eq:YukawaPotential}
U(r) = \frac{Q}{r} \exp(-r / \lambda) \text{.}
\end{equation}
The interaction is between quasiparticles ``electron+cloud'', that are
called plasmons.\footnote{The potential 
\eqref{eq:YukawaPotential} is called Yukawa 
potential, though it usually arises in a context of nuclear physics.}
\end{remark}

\textbf{Lower-bounded one particle Hamiltonian.}
The one-particle random operator is bounded from below
uniformly with respect to randomness $\omega$:
$$
\exists C > 0, \text{ such that } H_\omega \geq -C 
\text{ for all } \omega \in \Omega \text{.}
\eqno{\bf(LB)}
$$

% \textbf{Positive background.}
% The random potential forms \emph{obstacles} and not traps:
% $$
% V_\omega \geq 0 \text{.}
% \eqno{\bf(Obst)}
% $$

% \begin{remark}\label{rem:LBB2Obst}
% Obviously, \textbf{(Obst)} implies \textbf{(LBB)}.
% Reciprocally, if the random potential $V_\omega$ is lower-bounded,
% then $\widetilde{V}_\omega = V_\omega + C$ is nonnegative and 
% \begin{equation*}
% H_\omega(n) = \wtH_\omega(n) - n C \text{,}
% \end{equation*}
% where the Hamiltonian $\wtH_\omega$ corresponds to the random potential 
% $\widetilde{V}_\omega$.
% Thus, all the results valid for positive background may be translated 
% to the case of lower-bounded background.
% In what follows, we will use the condition \textbf{(Obst)} as it 
% is more convenient technically.
% \end{remark}

% We introduce the following notations 
% before proceeding further with stating the assumptions.

\begin{notation}
We write $\bbN_n = \{1, \hdots, n\}$.
For an index set $I = \{i_1, \hdots, i_n\} \subset \bbN$
we write
\begin{equation*}
x^I = (x^{i_1}, \hdots, x^{i_n}) \in \bbR^{n d}
\end{equation*}
for the vector of the coordinates of the particles enumerated by $I$,
where the elements are ordered in a nondecreasing fashion: 
$i_p < i_q$ if $p<q$.
\end{notation}

\begin{definition}\label{def:InterTerm}
Let $I_1 \cup I_2 = \bbN_{n_1 + n_2}$, $|I_j| = n_j$, 
be a partition of $n_1 + n_2$ particles in two disjoint subsets.
The \emph{term of interaction} 
between the particles $I_1$ and $I_2$ is given by
\begin{equation*}
W_{I_1, I_2}(x^{\bbN_{n_1 + n_2}}) = 
W_{n_1 + n_2}(x^{\bbN_{n_1 + n_2}}) - 
W_{n_1}(x^{I_1}) - W_{n_2}(x^{I_2}) \text{.}
\end{equation*}
\end{definition}

\textbf{Repulsive interactions.}
The interactions are \emph{repulsive}, if
for all $I_1$, $I_2$ as in Definition \ref{def:InterTerm}
it holds
$$
W_{I_1, I_2} \geq 0 \text{.}
\eqno{\bf(Rep)}
$$
If one assumes {\bf(Rep)} and that there are no self-interactions: 
$W_1 = 0$, then for all $n \in \bbN$
\begin{equation}\label{eq:RepInt1}
W_n(x^{\bbN_n}) \geq W_{n - 1}(x^{\bbN_{n - 1}}) + W_1(x^n) \geq 
W_{n - 1}(x^{\bbN_{n - 1}}) \geq \hdots \geq 0, \quad
x^{\bbN_n} \in \bbR^{n d} \text{.}
\end{equation}
If one also assumes {\bf(PI)}, then {\bf(Rep)} is equivalent to say that
$$
U \geq 0 \text{.}
$$

\textbf{Stable interactions.}
The interactions are stable if there exists $B > 0$,
such that for all $n \in \bbN$
$$
W_n(\bfx) \geq -n B \text{.}
\eqno{\bf{(SI)}}
$$
By \eqref{eq:RepInt1}, repulsive interactions are stable with $B = 0$.
The stability of interactions for various models is widely discussed, 
in particular, in \cite{FisherRuelle_StabManyParticles}.

\textbf{Compactly supported interactions.}
Using the notations of Definition \ref{def:InterTerm}, 
the interactions $W$ have \emph{compact support} if there exists $R_0 > 0$ 
such that 
$$
W_{I_1, I_2}(x^{\bbN_{n_1 + n_2}}) = 0
\eqno{\bf(Comp)}
$$
for all $x^{\bbN_{n_1 + n_2}} \in \calV^{n_1 + n_2}$ such that 
$\dist(x^{I_1}, x^{I_2}) \geq R_0$.

\begin{remark}
Obviously, for pair interactions, compact support is stronger 
than temperedness, i.e.,
\begin{equation*}
\mathbf{(PI)} + \mathbf{(Comp)} \Longrightarrow \mathbf{(PTI)} 
\text{ with } A = 0 \text{.}
\end{equation*}
\end{remark}

Let us now discuss the physical validity of the above assumptions.
For more details on classical electrodynamics, see, for example 
\cite{Jackson_ClassicalElectrodynamics} and 
for the electrodynamics of continuous media, 
see, for example \cite{LandauLifschitz8}.

\begin{itemize}
\item
The model of pair translation invariant {\bf(PI)} repulsive {\bf(Rep)} 
interactions
is natural for a description of identical quantum particles such as 
electrons.
\item
The condition of temperedness {\bf(PTI)} might seem more restrictive 
at first glance, but is usually circumvented as described in Remark 
\ref{rem:tempInter} by replacing actual interactions by screened 
interactions and bare electrons by quasiparticles.
\item
The condition of compactly supported interactions {\bf{(Comp)}} is a 
technical one and allows us to treat interaction of higher order than pair
(triple, etc.). However, even short range Yukawa interactions 
(\ref{eq:YukawaPotential}) are not compactly supported.
\item
The repulsive nature of interactions between identical particles 
{\bf{(Rep)}} is widely accepted.
Though, mathematically only the condition of stability {\bf{(SI)}} is 
needed.
Further discussion of stability condition and examples of catastrophic, 
i.e., not stable, potentials may be found in \cite{Ruelle_StatMech}.
\item
Finally, the lower boundedness of the one-particle operator {\bf{(LB)}} 
seems a natural basic assumption.
%This is not very convincing, indeed!
\end{itemize}

The following theorem is the main result of this paper on 
the existence of thermodynamics for the model described in 
Section \ref{sect:ModelAndNotations}.

\begin{theorem}[existence of thermodynamic limit]
\label{th:enConv}
Suppose that the one particle operator is lower bounded {\bf(LB)} and 
that the interactions are stable {\bf(SI)}.
Let also any of the following two cases hold:
\begin{enumerate}
\item\label{it:CaseL2}
interactions are translation invariant and by pairs, i.e., they satisfy
{\bf(PTI)}
\item\label{it:CaseL1as}
interactions are compactly supported, i.e., they satisfy {\bf(Comp)}.
\end{enumerate}
Then, the energy per particle admits thermodynamic limit, namely
\begin{equation}\label{eq:conv}
\frac{E_\omega(\Lambda, n, S)}{n}
\to \densEn(\rho, \sigma)
\qquad \text{as }
\Lambda \to \infty, 
\frac{n}{|\Lambda|} \to \rho,
\frac{S}{n} \to \sigma \text{,}
\end{equation}
where $\rho > 0$ and $\sigma \geq 0$. 
The convergence takes place in $L^2(\Omega)$ in case \eqref{it:CaseL2} and
in $L^1(\Omega)$ and $\omega$-almost sure in case \eqref{it:CaseL1as}.
The \emph{limiting energy density} $\densEn(\rho, \sigma)$ 
is defined by \eqref{eq:conv}, is a non-random function (does not depend 
on $\omega$) and
the limit is the same if both conditions 
\eqref{it:CaseL2} and \eqref{it:CaseL1as} are satisfied.
\end{theorem}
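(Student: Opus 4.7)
The plan is to establish an approximate subadditivity for the map $(\Lambda, n, S) \mapsto E_\omega(\Lambda, n, S)$ and then invoke the modified subadditive ergodic theorem (Proposition \ref{propESubAdd}). The construction closely mirrors Ruelle's, with the randomness controlled by the covariance relation \eqref{eq:covarRel}, ergodicity of the translations $\{\tau_\gamma\}$, and the independence-at-distance hypothesis \textbf{(IAD)}.

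First I would cover $\Lambda$ by a tiling of smaller disjoint boxes $\Lambda_1, \ldots, \Lambda_K$ separated by corridors of width $h$, and split $n = \sum_k n_k$ and $S = \sum_k S_k$ so that the local densities $n_k / |\Lambda_k|$ and entropies per particle $S_k / n_k$ match $\rho$ and $\sigma$ in the limit. For the upper bound, I would use tensor products of Dirichlet eigenfunctions of the local operators as trial states. The key algebraic point is that if each local operator has at least $e^{S_k}$ eigenvalues below level $\varepsilon_k$, then the decoupled Hamiltonian $\bigoplus_k H_\omega(\Lambda_k, n_k)$ has at least $\prod_k e^{S_k} = e^{S}$ eigenvalues below $\sum_k \varepsilon_k$. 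Plugging this into the Min--Max principle yields
\begin{equation*}
E_\omega(\Lambda, n, S) \leq \sum_{k = 1}^K E_\omega(\Lambda_k, n_k, S_k) + R_{\mathrm{cross}},
\end{equation*}
where $R_{\mathrm{cross}}$ collects the pair interactions between particles sitting in distinct boxes. Under \textbf{(PTI)}, $R_{\mathrm{cross}}$ is bounded by $O(n^2) \int_{|x| \geq h} |U(x)|\, dx = O(n^2 h^{d - \lambda})$, which is $o(n)$ when $h$ is allowed to grow slowly with $|\Lambda|$; under \textbf{(Comp)} it vanishes outright as soon as $h \geq R_0$. For the reverse inequality, I would invoke Dirichlet bracketing to bound $H_\omega^D(\Lambda, n)$ from below by the decoupled operator on the tiling, and then absorb the lost cross interactions into an additive $O(n)$ error using stability \textbf{(SI)} together with lower boundedness \textbf{(LB)}; the boundary layer introduced by the Dirichlet walls is negligible by the Fisher shape condition.

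The two opposite inequalities combine into the approximate subadditivity required by Proposition \ref{propESubAdd}. The covariance \eqref{eq:covarRel} together with ergodicity of $\tau_\gamma$ then forces the limit to be a deterministic function $\densEn(\rho, \sigma)$. The hardest part, and the source of the dichotomy in the conclusion, lies in the stochastic coupling between boxes. Under \textbf{(Comp)}, separating boxes by $R_0$ is free, and \textbf{(IAD)} then makes the local energies genuinely independent, so a Kingman-type argument yields almost sure and $L^1$ convergence. Under merely \textbf{(PTI)}, the polynomially decaying tails of $U$ couple all boxes and independence holds only approximately; one then has to settle for a variance estimate, which only delivers convergence in $L^2$. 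Finally, the passage from cubes to arbitrary Fisher sequences is carried out by a standard sandwiching argument using the shape function $\pi$ to control the boundary layer $\partial_{\alpha \diam \Lambda} \Lambda$, and uniqueness of the limit under both hypotheses follows from the fact that the two upper/lower bounds coincide whenever \textbf{(Comp)} is additionally imposed.
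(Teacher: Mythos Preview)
Your overall shape is right---subadditivity plus an ergodic/averaging principle---but there are two genuine gaps.

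\textbf{The lower bound does not work as written.} Dirichlet bracketing goes the wrong way: adding Dirichlet walls on a partition of $\Lambda$ \emph{raises} the quadratic form, so it gives $E_\omega(\Lambda,n,S)\le$ (decoupled expression), not $\ge$. Nor does stability \textbf{(SI)} let you ``absorb the lost cross interactions into an additive $O(n)$ error'': \textbf{(SI)} bounds $W_n$ from below as a whole, but the cross term between two groups of particles is $O(n_1 n_2)$ in general (you used exactly this in the upper bound). The paper in fact never proves a matching superadditive inequality. It gets convergence on a special dyadic sequence of cubes $\Lambda_N$ from the \emph{one-sided} inequality alone, via Proposition~\ref{propDecrMoyenne}: if $X_{N+1}\le (\text{average of i.i.d.\ copies of }X_N)+G_N$ with $X_N\ge 0$ and $\sum G_N<\infty$, then $\mathbb E X_N$ is nearly decreasing (hence converges) and a direct second-moment computation forces the variance to zero. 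The nonnegativity comes from \textbf{(LB)}$+$\textbf{(SI)} via $E_\omega\ge -(B+C)n$; no lower bracketing is needed. Only afterwards, for general Fisher domains, is a lower bound required, and there it is obtained by Ruelle's trick of embedding $\Lambda$ into a \emph{larger} special cube and using subadditivity in the reverse direction---not by decoupling $\Lambda$ from inside. (A minor related point: Proposition~\ref{propESubAdd} is the subadditive inequality itself, not the ergodic theorem you want to invoke.)

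\textbf{The source of the $L^2$ versus $L^1$/a.s.\ dichotomy is misidentified.} Under \textbf{(PTI)} the sub-boxes \emph{are} probabilistically independent once separated by $R_0$, thanks to \textbf{(IAD)}; the tails of $U$ do not spoil that. What they spoil is \emph{exact} subadditivity: one is left with an additive error $G_N\sim n_N^2 R_N^{-\lambda}$, which is summable but nonzero. Proposition~\ref{propDecrMoyenne} handles such approximately subadditive, i.i.d.-averaged sequences, but its variance argument yields only $L^2$ convergence. Under \textbf{(Comp)} the error term vanishes, so one has genuine subadditivity for arbitrary disjoint domains, and the multidimensional subadditive ergodic theorem (Akcoglu--Krengel/Smythe) applies to give $L^1$ and almost sure convergence.
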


The energy density has the following basic properties.

\begin{proposition}[critical density of particles]
\label{prop:CritDens}
There exists \emph{a critical density} $\rho_c \in [0, +\infty]$ such that
\begin{equation*}
\begin{cases}
\densEn(\rho, \sigma) < +\infty,& \text{if $\rho < \rho_c$,}\\
\densEn(\rho, \sigma) = +\infty,& \text{if $\rho > \rho_c$,}
\end{cases}
\end{equation*}
for all $\sigma \geq 0$.
\end{proposition}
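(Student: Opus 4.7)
My plan is to take $\rho_c := \sup\{\rho>0 : \densEn(\rho,0)<+\infty\}\in[0,+\infty]$ (with $\sup\emptyset=0$) and to verify separately that $\densEn(\rho,\sigma)=+\infty$ when $\rho>\rho_c$ and $\densEn(\rho,\sigma)<+\infty$ when $\rho<\rho_c$, both for every $\sigma\geq 0$. Two easy monotonicities underpin the argument. Since the eigenvalues $E_k(\Lambda,n,\omega;\signStat)$ are non-decreasing in $k$, $E_\omega(\Lambda,n,S_1)\leq E_\omega(\Lambda,n,S_2)$ whenever $S_1\leq S_2$, and passing to the limit in Theorem \ref{th:enConv} shows that $\sigma\mapsto\densEn(\rho,\sigma)$ is non-decreasing. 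For the $\rho$-variable, given $\rho_1<\rho_2$ and a Fisher sequence $\Lambda$ with $n/|\Lambda|\to\rho_1$ and $S/n\to\sigma$, I place inside $\Lambda$ a sub-box $\Lambda'$ of volume $|\Lambda|\rho_1/\rho_2$, also tending to infinity in the sense of Fisher, so that $n/|\Lambda'|\to\rho_2$; extending any Dirichlet eigenfunction of $H^D_\omega(\Lambda',n)$ by zero is admissible for $H^D_\omega(\Lambda,n)$, so min-max gives $E_\omega(\Lambda,n,S)\leq E_\omega(\Lambda',n,S)$ and, after dividing by $n$ and taking the limit, $\densEn(\rho_1,\sigma)\leq\densEn(\rho_2,\sigma)$. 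Combined, the two monotonicities give $\densEn(\rho,\sigma)\geq\densEn(\rho,0)=+\infty$ when $\rho>\rho_c$.

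The substantive step is to propagate the finiteness of $\densEn(\rho,0)$ to every $\sigma\geq 0$ for $\rho<\rho_c$. The idea is to manufacture arbitrarily large combinatorial entropy at a prescribed density below $\rho_c$. Fix $\rho<\rho_0<\rho_c$ and $\alpha=\rho/\rho_0\in(0,1)$. I partition a Fisher sequence $\Lambda$ into $N$ congruent sub-boxes $\Lambda_1,\dots,\Lambda_N$ of fixed volume $V_0$, and for each subset $C\subset\{1,\dots,N\}$ with $|C|=m=\lfloor\alpha N\rfloor$ I form the trial vector that places $\rho_0V_0$ particles in the Dirichlet ground state of $H^D_{\omega,\signStat}(\Lambda_k,\rho_0V_0)$ for $k\in C$ and leaves the remaining sub-boxes empty. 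Because Dirichlet boundary conditions decouple the sub-boxes, distinct $C$'s produce orthogonal vectors in $\frH^n_\signStat$ with $n=m\rho_0V_0\sim\rho|\Lambda|$, and the binomial count $\log\binom{N}{m}\sim NH(\alpha)$, with $H$ the binary entropy, yields $S/n\to H(\alpha)/(\rho V_0)$. Choosing $V_0$ small (in the continuum case, or supplementing with excited sub-box eigenstates in the discrete case) makes this ratio exceed any prescribed $\sigma$, and the $\sigma$-monotonicity above then transfers finiteness from that $\sigma'>\sigma$ down to the target.

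The principal obstacle is uniform control of the energy across the trial subspace, since the random potential has no a priori upper bound. I would restrict to ``good'' configurations $C$ whose sub-boxes all satisfy $\groundEn_\omega(\Lambda_k,\rho_0V_0)\leq E^*$; by ergodicity of the family $\{\groundEn_\omega(\Lambda_k,\rho_0V_0)\}_k$, the fraction of good sub-boxes converges to $p(E^*):=\bbP(\groundEn_\omega(\Lambda_1,\rho_0V_0)\leq E^*)$, which tends to $1$ as $E^*\to+\infty$ because \textbf{(LB)} and \textbf{(SI)} guarantee that this ground-state energy is almost surely finite. The reduced count $\binom{\lfloor p(E^*) N\rfloor}{m}$ still produces $S/n$ exceeding the prescribed $\sigma$ for $p(E^*)$ sufficiently close to $1$, while min-max now gives the bound $mE^*+\sup_C W_C$ on the corresponding eigenvalue, with the inter-sub-box interaction $W_C$ vanishing under \textbf{(Comp)} for $V_0$ larger than the range and being $o(n)$ under \textbf{(PTI)} thanks to the integrability of $|U|$ at infinity. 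Dividing by $n$ yields an energy per particle bounded by $E^*/(\rho_0V_0)+o(1)$, uniform in $\omega$ and $C$, establishing $\densEn(\rho,\sigma)<+\infty$. The real technical delicacy is the joint handling of the ergodic averaging over sub-box energies and of the tempered interaction correction, particularly in the $L^2$-only case \eqref{it:CaseL2} of Theorem \ref{th:enConv}.
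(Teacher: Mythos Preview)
Your approach is workable in outline but takes a far longer road than the paper's. The paper disposes of the $\sigma$-independence in one stroke: since $\densEn(\rho,\sigma)$ is (essentially) the decreasing limit of $n_N^{-1}E_\omega(\Lambda_N,n_N,S_N)$, it is finite iff some finite-volume term is; but $E_\omega(\Lambda_N,n_N,S_N)<+\infty$ holds for \emph{every} $S_N$ as soon as it holds for one, because a Schr\"odinger operator on a bounded domain has an infinite discrete spectrum whenever the region $\{W_{n_N}<+\infty\}\subset\Lambda_N^{n_N}$ has positive measure, and has no spectrum at all otherwise. Thus the finiteness of $\densEn(\rho,\sigma)$ is automatically a function of $\rho$ alone, and the hard-core/close-packing picture supplies the monotone threshold $\rho_c$.

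Your constructive route---manufacturing entropy by choosing which sub-boxes to populate---would ultimately work, but it carries avoidable baggage. There is a genuine tension in your sketch: you want $V_0$ small so that $H(\alpha)/(\rho V_0)$ exceeds the target $\sigma$, yet for interactions with hard cores you need $V_0$ large enough that $\rho_0 V_0$ particles actually fit in a box of volume $V_0$ (the definition of $\rho_c$ is thermodynamic and does not control arbitrarily small finite boxes). Your remedy of ``supplementing with excited sub-box eigenstates'' is exactly right, but notice what it presupposes: that each populated sub-box operator $H_\omega(\Lambda_k,\rho_0 V_0)$ has infinitely many eigenvalues. That is precisely the paper's observation, and once you grant it the combinatorics of occupation, the ergodic selection of ``good'' sub-boxes, and the tempered inter-box corrections all become unnecessary---you can simply read off $E_\omega(\Lambda_N,n_N,S_N)<+\infty$ directly. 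In short, your argument is not wrong, but its substantive step silently relies on the very fact that makes the elaborate construction redundant.
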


\begin{proposition}[energy density properties]
\label{prop:DensProp}
The energy density $\densEn(\rho, \sigma)$ is
\begin{enumerate}
\item
\label{it:densConv}
a convex function of variables $(\rho^{-1}, \sigma)$;
\item
\label{it:densMonot}
a nondecreasing function of $\rho$ and $\sigma$;
\item
\label{it:densCont}
a continuous function in the region 
$\{0 < \rho < \rho_c\} \times \{\sigma \geq 0\}$.
\end{enumerate}
\end{proposition}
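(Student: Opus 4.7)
The plan is to deduce parts (1) and (2) from comparing $H_\omega^D(\Lambda, n_1+n_2)$ with Dirichlet operators on disjoint subboxes, exploiting (LB) and (SI) to control the cross-interaction, and then to obtain part (3) from (1), (2) and Proposition \ref{prop:CritDens} by general properties of convex functions.

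\emph{Convexity.} I would establish the same sort of subadditive inequality that should already underlie the proof of Theorem \ref{th:enConv}. Given two triples $(\Lambda_i, n_i, S_i)$, $i = 1,2$, place $\Lambda_1$ and $\Lambda_2$ inside a Fisher-regular box $\Lambda$ with $|\Lambda| \simeq |\Lambda_1| + |\Lambda_2|$ and separated by a corridor of width $r$. Taking (anti)symmetrized tensor products of the first $e^{S_i}$ Dirichlet eigenfunctions of $H_\omega^D(\Lambda_i, n_i)$ yields at least $e^{S_1 + S_2}$ admissible trial states for $H_\omega^D(\Lambda, n_1 + n_2)$ with energies at most $E_\omega(\Lambda_1, n_1, S_1) + E_\omega(\Lambda_2, n_2, S_2) + \Delta$, where $\Delta$ is the cross-interaction between the two groups of particles. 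By min-max,
$$E_\omega(\Lambda, n_1 + n_2, S_1 + S_2) \leq E_\omega(\Lambda_1, n_1, S_1) + E_\omega(\Lambda_2, n_2, S_2) + \Delta.$$
Under (Comp), $\Delta = 0$ as soon as $r \geq R_0$; under (PTI), $\Delta$ is bounded by the pair potential summed over the two groups, giving $\Delta = o(n_1 + n_2)$ as $r \to \infty$ since $\lambda > d$. Taking $n_i = t_i n$ with rational $t_1 + t_2 = 1$, $|\Lambda_i| = n_i v_i$, $S_i = n_i \sigma_i$, dividing by $n$ and passing to the thermodynamic limit via Theorem \ref{th:enConv} yields
$$\densEn(1/v,\sigma) \leq t_1 \densEn(1/v_1, \sigma_1) + t_2 \densEn(1/v_2, \sigma_2), \quad v = t_1 v_1 + t_2 v_2, \ \sigma = t_1 \sigma_1 + t_2 \sigma_2,$$
on a dense set of convex combinations, whence convexity in $(\rho^{-1}, \sigma)$ by density.

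\emph{Monotonicity.} Since the eigenvalues $E_k(\Lambda, n, \omega;\signStat)$ are indexed in nondecreasing order, $S \mapsto E_\omega(\Lambda, n, S)$ is nondecreasing, and this property is inherited by the thermodynamic limit in the variable $\sigma$. For monotonicity in $\rho$ at fixed $\sigma$, I would take $0 < \rho_1 < \rho_2 < \rho_c$ and sequences of Fisher-regular domains with $|\Lambda_i| = n/\rho_i$ arranged so that $\Lambda_2 \subset \Lambda_1$; Dirichlet domain monotonicity (enlarging a box lowers Dirichlet eigenvalues) gives $E_\omega(\Lambda_2, n, S) \geq E_\omega(\Lambda_1, n, S)$, and dividing by $n$ and passing to the thermodynamic limit produces $\densEn(\rho_2, \sigma) \geq \densEn(\rho_1, \sigma)$.

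\emph{Continuity and main obstacle.} Part (1) makes $(v, \sigma) \mapsto \densEn(1/v, \sigma)$ a convex function, finite on the open convex set $\{v > 1/\rho_c\} \times \{\sigma > 0\}$ by Proposition \ref{prop:CritDens}; any such function is automatically continuous on the interior of its effective domain, and composing with the homeomorphism $\rho \mapsto 1/\rho$ transports continuity to the $\rho$-variable. Continuity down to the boundary $\sigma = 0$ then follows by combining convexity in $\sigma$ with the monotonicity established above, since a convex, monotone, finite function is one-sidedly continuous at the endpoint. The main technical obstacle throughout will be the careful bookkeeping of $\Delta$ in the tempered case (PTI): one must calibrate the corridor width $r = r(|\Lambda|)$ so that both $r \to \infty$ (killing the $|x|^{-\lambda}$ tails) and Fisher regularity of $\Lambda$ are preserved as $|\Lambda| \to \infty$, and control the interaction sum uniformly in $\omega$ via (SI) rather than pointwise.
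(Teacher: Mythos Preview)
Your proposal is correct and follows essentially the same route as the paper: convexity from the almost-subadditive inequality (which is precisely Proposition~\ref{propESubAdd}, derived from the test-function construction of Lemma~\ref{lemmeFonctTest}), monotonicity from the domain/entropy monotonicity of Dirichlet eigenvalues (Lemma~\ref{lem:Emonotone}), and continuity from convexity together with finiteness below the critical density. One small correction: in your final sentence, the cross-interaction $\Delta$ is controlled pointwise by \textbf{(PTI)} (giving $\Delta \le A n_1 n_2 r^{-\lambda}$), not by \textbf{(SI)}; the stability condition \textbf{(SI)} enters only to provide the uniform lower bound on the energy needed for the limit to be finite from below.
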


\begin{corollary}
The energy density $\densEn(\rho, \sigma)$ admits an inverse 
$\sigma(\rho, \densEn)$. 
The latter is convex upwards with respect to
$(\rho^{-1}, \densEn)$ and is nondecreasing in $\densEn$ for any fixed $\rho$.
p\end{corollary}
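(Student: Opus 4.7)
The plan is to realize $\sigma(\rho, \densEn)$ as a right generalized inverse of $\sigma \mapsto \densEn(\rho, \sigma)$, and then to transfer the properties of $\densEn$ to $\sigma$ via the standard epigraph/hypograph duality for convex functions.

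I would fix $\rho \in (0, \rho_c)$, for which, by Proposition~\ref{prop:DensProp}\,(\ref{it:densMonot}) and (\ref{it:densCont}), the partial map $\sigma \mapsto \densEn(\rho, \sigma)$ is nondecreasing and continuous; hence its image is an interval. For every $e$ in that image, set
\begin{equation*}
\sigma(\rho, e) = \sup\bigl\{\sigma \geq 0 : \densEn(\rho, \sigma) \leq e \bigr\}.
\end{equation*}
Monotonicity and continuity of $\densEn(\rho, \cdot)$ then give the key equivalence
\begin{equation*}
\densEn(\rho, \sigma) \leq e \iff \sigma \leq \sigma(\rho, e),
\end{equation*}
the ``$\Leftarrow$'' direction using that continuity makes the supremum attained. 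Monotonicity of $e \mapsto \sigma(\rho, e)$ for fixed $\rho$ is then immediate, since enlarging $e$ enlarges the set defining the supremum.

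For the concavity (``convex upwards'') statement I would pass to the variable $u = \rho^{-1}$ and consider the function $F(u, \sigma) := \densEn(u^{-1}, \sigma)$, which by Proposition~\ref{prop:DensProp}\,(\ref{it:densConv}) is convex in $(u, \sigma)$. Its epigraph
\begin{equation*}
\bigl\{(u, \sigma, e) : F(u, \sigma) \leq e \bigr\}
\end{equation*}
is therefore convex. By the equivalence above, this set coincides (up to a permutation of the last two coordinates) with
\begin{equation*}
\bigl\{(u, e, \sigma) : \sigma \leq \sigma(u^{-1}, e) \bigr\},
\end{equation*}
i.e.\ with the hypograph of $(u, e) \mapsto \sigma(u^{-1}, e)$. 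A function with convex hypograph is concave, which is exactly the ``convex upwards'' property in $(u, e) = (\rho^{-1}, \densEn)$.

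The one delicate point I anticipate is notational rather than mathematical: since $\densEn(\rho, \cdot)$ need not be strictly increasing, one must commit to a definite convention for the inverse before the key equivalence becomes meaningful; the supremum convention above is the natural one, and with it no further work is required beyond the short epigraph/hypograph identification. The argument tacitly restricts $\rho$ to the subcritical regime $(0, \rho_c)$, outside which $\densEn \equiv +\infty$ by Proposition~\ref{prop:CritDens} and the inverse degenerates.
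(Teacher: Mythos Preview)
Your argument is correct, and in fact the paper gives no explicit proof of this Corollary at all: it is stated immediately after Proposition~\ref{prop:DensProp} as a direct consequence of the convexity and monotonicity recorded there. Your epigraph/hypograph identification is exactly the standard way to make that ``direct consequence'' rigorous, so you are essentially supplying the details the paper omits rather than taking a different route.

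One small remark: the paper's own discussion (the Remark following Theorem~\ref{th:entrConv}) explicitly allows the degenerate possibility $\densEn(\rho,\cdot)\equiv\mathrm{const}$, in which case your supremum definition yields $\sigma(\rho,e)=+\infty$ and the inverse degenerates for reasons unrelated to $\rho\geq\rho_c$. Your final paragraph flags degeneracy only outside the subcritical regime; it would be worth noting this second degenerate case as well, since it is the one the paper singles out.
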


Next we state a reciprocal result exchanging the roles of
energy and entropy
(the proof follows \cite{Griffiths_MicrocanonicalInStatMech}).

\begin{theorem}[existence of thermodynamic limit for entropy]
\label{th:entrConv}
Let the conditions of Theorem \ref{th:enConv} be satisfied.
Then for $0 < \rho < \rho_c$ and $\densEn \in \Ran{\densEn(\rho, \cdot)}$
\begin{equation*}\label{eq:convSigma}
\frac{S_\omega(E, \Lambda, n)}{n}
\to \sigma(\rho, \densEn)
\qquad \text{as }
\Lambda \to \infty, 
\frac{n}{|\Lambda|} \to \rho,
\frac{E}{n} \to \densEn \text{.}
\end{equation*}
The convergence takes place in the same sense 
as given by Theorem \ref{th:enConv}.
\end{theorem}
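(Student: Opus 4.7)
The plan is to deduce the entropy convergence from the energy convergence of Theorem \ref{th:enConv} by exploiting the fact that, for each fixed $\omega$, $\Lambda$ and $n$, the functions $S \mapsto E_\omega(\Lambda, n, S)$ and $E \mapsto S_\omega(E, \Lambda, n)$ are essentially inverse monotone step functions, as recorded in the identities \eqref{eqSEinversion}--\eqref{eq:ESinversion}. Combined with the continuity of $\densEn(\rho, \sigma)$ in $\sigma$ on $\{0 < \rho < \rho_c\}$ established in Proposition \ref{prop:DensProp}, this inverse relationship should allow us to transfer the convergence from $E$ to $S$.

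More concretely, I would fix $\rho \in (0, \rho_c)$ and $\densEn$ in the range of $\densEn(\rho, \cdot)$, set $\sigma^\ast := \sigma(\rho, \densEn)$, and for each small $\epsilon > 0$ pick $\sigma_\pm = \sigma^\ast \pm \epsilon$ with $\sigma_- \geq 0$. Using the monotonicity and continuity supplied by Proposition \ref{prop:DensProp} and fixing $\sigma(\rho, \cdot)$ as an appropriate one-sided inverse of $\densEn(\rho, \cdot)$ gives the strict estimate $\densEn(\rho, \sigma_-) < \densEn < \densEn(\rho, \sigma_+)$. Choose sequences $S_n^\pm \in \log \bbN$ with $S_n^\pm / n \to \sigma_\pm$ and apply Theorem \ref{th:enConv} to obtain, in the relevant mode of convergence, $E_\omega(\Lambda, n, S_n^\pm) / n \to \densEn(\rho, \sigma_\pm)$. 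The inversion identities give the pointwise equivalences
\begin{equation*}
\{S_\omega(E, \Lambda, n) \geq S_n^-\} = \{E_\omega(\Lambda, n, S_n^-) \leq E\}
\quad\text{and}\quad
\{S_\omega(E, \Lambda, n) < S_n^+\} = \{E_\omega(\Lambda, n, S_n^+) > E\}.
\end{equation*}
Under the hypothesis $E/n \to \densEn$, each of these events holds eventually: almost surely in case \eqref{it:CaseL1as} and with probability tending to one in case \eqref{it:CaseL2}. This sandwiches $S_\omega(E, \Lambda, n)/n$ between $\sigma_-$ and $\sigma_+$; letting $\epsilon \to 0$ along a countable sequence and intersecting the corresponding full-measure sets yields the desired convergence in probability, respectively almost sure.

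The main obstacle will be to upgrade the convergence in probability thus obtained in case \eqref{it:CaseL2} to convergence in $L^2(\Omega)$, and to promote the a.s.\ convergence in case \eqref{it:CaseL1as} to $L^1(\Omega)$. Since the limit $\sigma(\rho, \densEn)$ is deterministic, both upgrades reduce to a uniform integrability bound on $S_\omega(E, \Lambda, n)/n$. Such a bound should follow from a deterministic Weyl-type a priori estimate: using {\bf(SI)} and {\bf(LB)} to ensure that $H_{\omega, \signStat}(\Lambda, n) + n(B + C)$ is nonnegative, one controls $\fCount_\omega(E, \Lambda, n)$ by a deterministic function polynomial in $|\Lambda|$ and the shifted energy, which makes $S_\omega(E, \Lambda, n)/n$ uniformly bounded in $\omega$ on sequences with $E/n \to \densEn$ and $n/|\Lambda| \to \rho$. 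A secondary technical nuisance is the handling of plateaus of $\densEn(\rho, \cdot)$, where the inverse $\sigma(\rho, \cdot)$ has jumps; this is disposed of by fixing the one-sided inverse from the outset and invoking the hypothesis $\densEn \in \Ran{\densEn(\rho, \cdot)}$.
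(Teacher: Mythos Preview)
Your approach is correct and coincides with what the paper indicates: the paper does not give a detailed proof of Theorem~\ref{th:entrConv} but simply states that it follows \cite{Griffiths_MicrocanonicalInStatMech}, and the sandwich argument via the inversion identities \eqref{eqSEinversion}--\eqref{eq:ESinversion} together with the continuity and monotonicity of $\densEn(\rho,\cdot)$ from Proposition~\ref{prop:DensProp} is precisely the Griffiths method. Your identification of the two residual technicalities (uniform integrability for the upgrade to $L^2$/$L^1$, and the plateau ambiguity in defining $\sigma(\rho,\densEn)$) is accurate; the paper glosses over both, so you are if anything more careful than the source.
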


\begin{remark}
The condition that the energy belongs to the image of the function 
$\densEn(\rho, \cdot)$ is crucial.
One might remark as well that due to monotonicity and convexity properties 
of $\densEn$,
either $\densEn(\rho, \cdot) \equiv \const$ identically, or 
$\Ran{\densEn(\rho, \cdot)} = [\inf{\densEn(\rho, \cdot)}, +\infty)$.
\end{remark}

\section{Proofs}
\label{sect:Proofs}

This section is mainly devoted to the proof of $L^2$-convergence 
(case \eqref{it:CaseL2} of Theorem \ref{th:enConv}).
The basic ideas were inspired by \cite{Ruelle_StatMech} and 
\cite{Griffiths_MicrocanonicalInStatMech},
though the crucial difference is that instead of translation invariance of 
one particle operator (which is free Laplacian for both of the above works)
we have ergodicity, i.e., 
covariance with respect to a family of measure preserving transformations 
of the probability space.

We assume {\bf(LB)}, {\bf(PTI)} and {\bf(SI)} throughout this section,
except for Subsection \ref{subsect:L1conv},
where different assumptions will be made.
% Remark \ref{rem:LBB2Obst} shows how to pass from 
% positive background \textbf{(Obst)} 
% to lower-bounded background \textbf{(LBB)},
% and we will also explain further how to generalize results from 
% repulsive \textbf{(Rep)} to stable interactions \textbf{(SI)}.
We also recall that the Dirichlet boundary conditions are used, i.e.,
$H_\omega = H_\omega^D$.

\subsection{Subadditive Inequalities}

Subadditive inequalities play the key role in our proofs.
The basic idea for all the proofs for existence theorems in this paper 
(and many others: see, for example, \cite{Ruelle_StatMech, 
Lieb_Lebowitz_ExistenceThDynElectNucl, Griffiths_MicrocanonicalInStatMech})
may be summarized as:
\begin{itemize}
\item
find a subadditive type inequality, 
\item
use the existing or prove an analog of subadditive ergodic theorem 
that guarantees the convergence.
\end{itemize}

Next is the core lemma that gives the subadditivity of energy. 

\begin{lemma}[Test function construction]
\label{lemmeFonctTest}
Suppose \textbf{(IAD)} and {\bf(PTI)} are satisfied.
Let the statistics $\signStat \in \{\varnothing, +, -\}$ be fixed. 
Consider domains $\Lambda_1$, $\Lambda_2$ 
such that $\dist(\Lambda_1,\Lambda_2) \geq r \geq R_0$
and functions 
$\varphi_j \in \frH_\signStat^{n_j}(\Lambda_j)$, $j = 1, 2$, 
with energies below $E_j$:
\begin{equation}\label{eq:phijBounds}
 \left\langle H_{\omega, \signStat}(\Lambda_j, n_j) \varphi_j, \varphi_j \right\rangle \leq E_j \|\varphi_j\|^2, \quad j = 1, 2 \text{.}
\end{equation}
Then, using $\varphi_1$, $\varphi_2$,
one can construct explicitly
$\zeta \in \frH_\signStat^{n_1 + n_2}(\Lambda_1 \cup \Lambda_2)$, 
a function of $n_1 + n_2$ particles
defined of a unified box $\Lambda_1 \cup \Lambda_2$ 
with energy below $E_1 + E_2 + A n_1 n_2 r^{-\lambda}$ : 
\begin{equation*}
 \left\langle H_{\omega, \signStat}(\Lambda_1 \cup \Lambda_2, n_1 + n_2) \zeta, \zeta \right\rangle \leq (E_1 + E_2 + A n_1 n_2 r^{-\lambda}) \|\zeta\|^2 \text{.}
\end{equation*}
\end{lemma}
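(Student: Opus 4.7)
The strategy is to construct $\zeta$ as an appropriately (anti)symmetrized tensor product of $\varphi_1$ and $\varphi_2$ extended by zero to $\Lambda_1 \cup \Lambda_2$. The key observation is that since $\dist(\Lambda_1, \Lambda_2) \geq r > 0$, the supports of the extended $\varphi_j$'s in one-particle space are disjoint, which trivializes the combinatorics of symmetrization and kills the cross terms in the kinetic and one-particle parts of the energy.

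More precisely, let $\widetilde{\varphi}_j$ denote $\varphi_j$ extended by zero to $(\Lambda_1 \cup \Lambda_2)^{n_j}$. In the Maxwell--Boltzmann case I would simply set $\zeta = \widetilde{\varphi}_1 \otimes \widetilde{\varphi}_2$. In the bosonic/fermionic case I would take
\begin{equation*}
\zeta = c\, P_\signStat\bigl(\widetilde{\varphi}_1 \otimes \widetilde{\varphi}_2\bigr)\text{,}
\end{equation*}
with a combinatorial normalization $c$ (of order $\sqrt{(n_1+n_2)!/(n_1! n_2!)}$) chosen so that $\|\zeta\|^2 = \|\varphi_1\|^2\|\varphi_2\|^2$. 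The point is that if $\sigma, \sigma' \in \mathfrak{S}_{n_1+n_2}$ are permutations and one takes the $L^2$ inner product of the two summands in the (anti)symmetrization, the result vanishes unless $\sigma$ and $\sigma'$ agree on which indices are routed into $\Lambda_1$ versus $\Lambda_2$, simply because $\Lambda_1\cap\Lambda_2=\varnothing$. Counting the surviving permutations gives the clean norm identity.

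Next I would evaluate $\langle H_{\omega,\signStat}(\Lambda_1\cup\Lambda_2, n_1+n_2)\zeta, \zeta\rangle$. The one-particle part $\sum_i H_\omega^{(i)}$ commutes with $P_\signStat$, so expanding it on the symmetrized tensor product and invoking the disjointness of supports once more eliminates every cross term and leaves
\begin{equation*}
\Bigl\langle \sum_{i=1}^{n_1+n_2} H_\omega^{(i)}\, \zeta, \zeta\Bigr\rangle
= \bigl\langle H^0_{\omega,\signStat}(\Lambda_1, n_1)\varphi_1, \varphi_1\bigr\rangle \|\varphi_2\|^2
+ \|\varphi_1\|^2 \bigl\langle H^0_{\omega,\signStat}(\Lambda_2, n_2)\varphi_2, \varphi_2\bigr\rangle\text{,}
\end{equation*}
where Dirichlet boundary conditions ensure that $H_\omega(\Lambda_j)$ and the restriction of $H_\omega(\Lambda_1\cup\Lambda_2)$ to functions supported in $\Lambda_j$ agree. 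For the interaction, Definition \ref{def:InterTerm} and \textbf{(PI)} decompose $W_{n_1+n_2}$ into the two intra-group pieces $W_{n_j}(x^{I_j})$ plus the cross term $W_{I_1,I_2}=\sum_{i\in I_1,\, j\in I_2} U(x^i-x^j)$. On the support of each summand of $\zeta$, the variables are distributed so that any such pair $(x^i,x^j)$ has $|x^i-x^j|\geq r\geq R_0$, whence \textbf{(PTI)} yields $|U(x^i-x^j)|\leq A r^{-\lambda}$ and the total cross contribution is bounded pointwise by $A n_1 n_2 r^{-\lambda}$.

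Putting \eqref{eq:phijBounds} together with these three estimates gives the claimed bound $E_1+E_2+An_1n_2 r^{-\lambda}$ on the energy. I expect the only real technical nuisance to be the symmetrization bookkeeping — tracking exactly how the $n_1! n_2!$ permutations stabilizing the partition $(I_1,I_2)$ combine with the $(n_1+n_2)!/(n_1!n_2!)$ distinct routings to produce matching normalizations in both the norm and the quadratic form, so that the same constant $c$ works in both places and the inequality is clean.
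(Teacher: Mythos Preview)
Your proposal is correct and follows essentially the same approach as the paper: zero-extension to $\Lambda_1\cup\Lambda_2$, (anti)symmetrized tensor product in the bosonic/fermionic cases, orthogonality of the resulting summands coming from the disjointness of $\Lambda_1$ and $\Lambda_2$, and a bound on the inter-group interaction via \textbf{(PTI)}. The only cosmetic difference is that the paper writes the (anti)symmetrization explicitly as a sum over $n_1$-element index subsets $I$ (with signs $(-1)^{\sum_{i\in I}i}$ in the fermionic case) rather than as $c\,P_\signStat(\widetilde\varphi_1\otimes\widetilde\varphi_2)$, which makes the orthogonality and the norm identity $\|\zeta\|^2=\binom{n_1+n_2}{n_1}\|\varphi_1\|^2\|\varphi_2\|^2$ immediate and sidesteps the permutation bookkeeping you anticipate.
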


\begin{remark}
The construction of a test function is explicit in the proof that follows.
\end{remark}

\begin{proof}[Proof of the Lemma]
We consider the extensions of the functions $\varphi_j$, $j = 1, 2$, 
by zero on $(\Lambda_1 \cup \Lambda_2)^{n_j}$, which we also denote by 
$\varphi_j$. 
Remark that \eqref{eq:phijBounds} implicitly contains the fact that 
$\varphi_j$ are zeros on the respective domains boundaries 
(due to Dirichlet condition)
so that the zero extension is a natural operation.
These extensions obviously preserve (anti)symmetry 
when $\signStat \in \{+, -\}$.
Consequently, 
one has $\varphi_j \in \frH_\signStat^{n_j}(\Lambda_1 \cup \Lambda_2)$ 
for any initial choice of $\signStat$.

We study each statistics separately now.

\begin{description}
\item[\textbf{Boltzmann statistics}]
Take
\begin{equation*}
 \zeta = \varphi_1 \otimes \varphi_2 \text{.}
\end{equation*}
Then, using {\bf(PTI)}
\begin{multline}\label{eqHnostatonZeta}
\bigl\langle H_{\omega}(\Lambda_1 \cup \Lambda_2, n_1 + n_2) \zeta, 
\zeta \bigr\rangle \\
\begin{aligned}
 &
\begin{split}
= \Bigl\langle\bigl\{ H_{\omega}(\Lambda_1, n_1) 
\otimes \ds1^{n_2} &+ \ds1^{n_1} \otimes H_{\omega}(\Lambda_2, n_2)\\
&+ W^{inter}(\Lambda_1, n_1; \Lambda_2, n_2)\bigr\} \varphi_1 
\otimes \varphi_2, \varphi_1 \otimes \varphi_2 \Bigr\rangle 
\end{split} \\
&
\begin{split}
\leq \left\langle H_\omega(\Lambda_1,n_1) \varphi_1, \varphi_1 \right\rangle
 \|\varphi_2\|^2 &+ 
\left\langle H_\omega(\Lambda_2,n_2) \varphi_2, \varphi_2 \right\rangle 
 \|\varphi_1\|^2 \\
&+ 
\left\langle 
\sum_{\substack{i = 1 \hdots n_1\\j = n_1 + 1 \hdots n_1 + n_2}}U(x^i - x^j) 
\varphi_1 \otimes \varphi_2, \varphi_1 \otimes \varphi_2 \right\rangle
\end{split} \\
&\leq (E_1 + E_2 + A n_1 n_2 r^{-\lambda}) \|\zeta\|^2 \text{,} &
\end{aligned}
\end{multline}
where $W^{inter}$ is the potential of interaction between 
$n_1$ particles in $\Lambda_1$ and $n_2$ particles in $\Lambda_2$
(see Definition \ref{def:InterTerm}).
In the last inequality, we used the temperedness of pair interaction 
potential $U$. 
As in each of $n_1 n_2$ terms, 
$x^i \in \Lambda_1$, $x^j \in \Lambda_2$
and $\dist(\Lambda_1, \Lambda_2) \geq r \geq R_0$, then by \textbf{(PTI)}
\begin{equation*}
\int_{\Lambda_1^{n_1}} \int_{\Lambda_2^{n_2}}\rmd{\bfx}
U(x^i - x^j) |\zeta(\bfx)|^2 \leq 
A r^{-\lambda} \int_{\Lambda_1^{n_1}} \int_{\Lambda_2^{n_2}}\rmd{\bfx}
|\zeta(\bfx)|^2 \leq 
A r^{-\lambda} \|\zeta\|^2 \text{,}
\end{equation*}
where we used the shorthand notation $\bfx = x^{\bbN_{n_1 + n_2}}$ 
for the vector of all particles' coordinates.

\item[\textbf{Bosons}]
We construct
\begin{equation}\label{eqZetaBosonicConstr}
 \zeta(x^{\bbN_{n_1 + n_2}}) 
= \sum_{\substack{I \subset \bbN_{n_1 + n_2}\\\card{I} = n_1}} 
\varphi_1(x^I) \varphi_2(x^{\bbN_{n_1 + n_2} \, \setminus \, I}) \text{.}
\end{equation}
This function is symmetric with respect to coordinate permutations.
Moreover, the terms in the sum \eqref{eqZetaBosonicConstr} 
are mutually orthogonal in 
$L^2\left((\Lambda_1 \cup \Lambda_2)^{n_1 + n_2}\right)$;
hence,
\begin{equation*}
 \|\zeta\|^2 = \binom{n_1 + n_2}{n_1} \|\varphi_1\|^2 \|\varphi_2\|^2 \text{,}
\end{equation*}
and the operator 
$H_{\omega, +}(\Lambda_1 \cup \Lambda_2, n_1 + n_2)$ 
preserves this orthogonality:
\begin{equation*}
H_{\omega, +}(\Lambda_1 \cup \Lambda_2, n_1 + n_2) \cdot 
\varphi_1(x^I) \varphi_2(x^{\bbN_{n_1 + n_2} \, \setminus \, I}) 
\; \bot \; \varphi_1(x^J) \varphi_2(x^{\bbN_{n_1 + n_2} \, \setminus \, J}) 
, \quad I \ne J \text{.}
\end{equation*}
Consequently, rewriting \eqref{eqHnostatonZeta} for $\zeta$ 
given by (\ref{eqZetaBosonicConstr})
and using the orthogonality,
one obtains: 
\begin{align*}
\left\langle H_{\omega, +}(\Lambda_1 + \Lambda_2, n_1 + n_2) 
\zeta, \zeta \right\rangle 
&\leq \binom{n_1 + n_2}{n_1} \cdot (E_1 + E_2 + A n_1 n_2 r^{-\lambda}) 
\cdot \|\varphi_1\|^2 \|\varphi_2\|^2 \\
&= (E_1 + E_2 + A n_1 n_2 r^{-\lambda}) \|\zeta\|^2 \text{.}
\end{align*}

\item[\textbf{Fermions}]
The construction for fermions is similar to that for bosons.
We define
\begin{equation}\label{eqZetaFermionicConstr}
 \zeta(x^{\bbN_{n_1 + n_2}}) 
= \sum_{\substack{I \subset \bbN_{n_1 + n_2}\\\card{I} = n_1}}
 (-1)^{\sum_{i \in I}i}  
\cdot \varphi_1(x^I) \varphi_2(x^{\bbN_{n_1 + n_2} \, \setminus \, I}) \text{.}
\end{equation}
This function is antisymmetric with respect to coordinate permutations.
The remaining part of the proof follows exactly that for the bosons.
\end{description}
\end{proof}

\begin{remark}
The construction \eqref{eqZetaFermionicConstr} is a generalization of 
the Slater determinant \cite{greiner2007quantum}.
\end{remark}

\begin{remark}
The Dirichlet boundary conditions are crucial for the proof
as they provide a zero cost (canonical) extension of functions from 
$\Dom{H_\omega(\Lambda_j, n_j)}$
to a larger domain $\Lambda_1 \cup \Lambda_2$
without changing the norm.
At this moment, 
we are not able to prove an analog of Theorem \ref{th:enConv} 
(essentially, we need an analog of Lemma \ref{lemmeFonctTest})
for Neumann or periodic boundary conditions.
\end{remark}

From now on, we omit the statistics sign $\signStat$ in the notations. 
Everything that follows is valid for all the statistics.
However, one should be warned that quantities (such as limiting values) 
may depend on the statistics.

\begin{proposition}
Let the interactions $W$ be tempered {\bf(PTI)}. 
If 
$\dist(\Lambda_1,\Lambda_2) \geq r \geq R_0$, 
then
\begin{align}
\label{eqNLowerEst}
 N_{\omega}(\Lambda_1 \cup \Lambda_2, n_1 + n_2, 
E_1 + E_2 + A n_1 n_2 r^{-\lambda}) 
&\geq N_{\omega}(\Lambda_1, n_1, E_1) N_{\omega, \eps}(\Lambda_2, n_2, E_2) 
\text{,}\\
\label{eqSLowerEst}
 S_{\omega}(\Lambda_1 \cup \Lambda_2, n_1 + n_2, 
E_1 + E_2 + A n_1 n_2 r^{-\lambda}) 
&\geq S_{\omega}(\Lambda_1, n_1, E_1) 
+ S_{\omega}(\Lambda_2, n_2, E_2) \text{.}
\end{align}
\end{proposition}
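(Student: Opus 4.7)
The plan is to deduce the proposition from Lemma \ref{lemmeFonctTest} via the Courant--Fischer (min--max) characterization of the counting function. Recall that
\begin{equation*}
\fCount_\omega(\Lambda, n, E)
= \max\Bigl\{\dim V : V \subset \Dom(H_\omega(\Lambda,n)),\;
\bigl\langle H_\omega(\Lambda,n)\varphi,\varphi\bigr\rangle
\leq E\|\varphi\|^2\;\forall \varphi \in V\Bigr\}.
\end{equation*}
Hence to establish \eqref{eqNLowerEst} it suffices to exhibit, in
$\frH_\signStat^{n_1+n_2}(\Lambda_1 \cup \Lambda_2)$, a subspace of dimension
$N_1 N_2 := \fCount_\omega(\Lambda_1,n_1,E_1) \cdot \fCount_\omega(\Lambda_2,n_2,E_2)$
on which the quadratic form of $H_\omega(\Lambda_1 \cup \Lambda_2, n_1+n_2)$ is bounded by $E_1 + E_2 + A n_1 n_2 r^{-\lambda}$.

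Concretely, I would let $V_j \subset \frH_\signStat^{n_j}(\Lambda_j)$ be the spectral subspace of $H_\omega(\Lambda_j, n_j)$ associated to energies $\leq E_j$, so $\dim V_j = N_j$, and pick orthonormal eigenbases $\{\varphi_j^{(k)}\}_{k=1}^{N_j}$. Extending each $\varphi_j^{(k)}$ by zero to $\Lambda_1 \cup \Lambda_2$ (allowed by the Dirichlet condition) and applying the construction of Lemma \ref{lemmeFonctTest} to each pair produces a family $\{\zeta_{k,l}\}_{1 \leq k \leq N_1,\, 1 \leq l \leq N_2}$ in $\frH_\signStat^{n_1+n_2}(\Lambda_1 \cup \Lambda_2)$. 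The orthogonality argument already present in the lemma (the terms indexed by $I \subset \bbN_{n_1+n_2}$ with $|I|=n_1$ have pairwise disjoint $L^2$-supports, since $\Lambda_1 \cap \Lambda_2 = \varnothing$) shows that $\{\zeta_{k,l}\}$ is an orthogonal family, so their span $V$ has dimension $N_1 N_2$.

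It remains to verify the quadratic form bound on the \emph{whole} subspace $V$, not merely on the individual $\zeta_{k,l}$. For this I would restrict attention to a single block, say $\Lambda_1^{n_1} \times \Lambda_2^{n_2}$ under the natural ordering of particles. On this block, each $\zeta_{k,l}$ reduces (up to a sign inherited from the (anti)symmetrization) to $\varphi_1^{(k)} \otimes \varphi_2^{(l)}$, and the Dirichlet Hamiltonian $H_\omega(\Lambda_1 \cup \Lambda_2, n_1+n_2)$ decouples as $H_\omega(\Lambda_1,n_1) \otimes \ds1 + \ds1 \otimes H_\omega(\Lambda_2,n_2) + W^{inter}$. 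The first two summands are simultaneously diagonal in the basis $\{\varphi_1^{(k)} \otimes \varphi_2^{(l)}\}$ with eigenvalues bounded by $E_1+E_2$, and $W^{inter}$ is a multiplication operator bounded pointwise by $A n_1 n_2 r^{-\lambda}$ thanks to \textbf{(PTI)} and $\dist(\Lambda_1,\Lambda_2) \geq r \geq R_0$. Permutation invariance of the Hamiltonian (\textbf{(PI)}) extends the identical bound to the other blocks. Summing over blocks and using orthogonality across blocks yields the operator inequality
\begin{equation*}
H_\omega(\Lambda_1 \cup \Lambda_2, n_1+n_2)\big|_V \leq (E_1 + E_2 + A n_1 n_2 r^{-\lambda})\,\ds1_V,
\end{equation*}
which proves \eqref{eqNLowerEst} via min--max; inequality \eqref{eqSLowerEst} then follows by taking logarithms and using $\log(ab) = \log a + \log b$.

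The main technical point is the last step: upgrading the individual energy bound supplied by Lemma \ref{lemmeFonctTest} into a bound on the full $N_1 N_2$-dimensional span. The orthogonality of the $\zeta_{k,l}$ and the fact that $H_\omega(\Lambda_1 \cup \Lambda_2, n_1+n_2)$ respects the block decomposition arising from the disjointness of $\Lambda_1$ and $\Lambda_2$ are what make this work uniformly; the (anti)symmetrization for bosons and fermions contributes only overall signs and the multinomial normalization that was already handled in the lemma.
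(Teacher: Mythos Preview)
Your proposal is correct and follows the same route as the paper, which simply says to apply the variational principle with the test functions $\zeta$ from Lemma~\ref{lemmeFonctTest} and then take logarithms. You have usefully spelled out the one point the paper leaves implicit: that the individual bounds on each $\zeta_{k,l}$ must be upgraded to a bound on the full span, which works precisely because (choosing the $\varphi_j^{(k)}$ to be eigenfunctions) the non-interacting part is diagonal on each block while $W^{inter}$ is a pointwise-bounded multiplication operator.
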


\begin{proof}
The proof of \eqref{eqNLowerEst} is done using 
the variational principle
for eigenvalues of 
$H_{\omega}(\Lambda, n)$ 
and the function $\zeta$ from Lemma \ref{lemmeFonctTest} 
as a test function. 
Taking the logarithm, one obtains \eqref{eqSLowerEst}.
%Possibly, this proof deserves being enlarged.
\end{proof}

\begin{proposition}\label{propESubAdd}
Let the interactions $W$ be tempered {\bf(PTI)}.
\begin{enumerate}
 \item 
Take $S_1$, $S_2$ such that $\exp{S_i} \in \bbN$, $i = 1, 2$. 
If $\dist(\Lambda_1,\Lambda_2) \geq r \geq R_0$, then
\begin{equation}\label{eqEupperEst1}
 E_\omega(\Lambda_1 \cup \Lambda_2, n_1 + n_2, S_1 + S_2) 
\leq E_\omega(\Lambda_1, n_1, S_1) + E_\omega(\Lambda_2, n_2, S_2) 
+ A n_1 n_2 r^{-\lambda} \text{.}
\end{equation}
 \item
Take $S_i$ such that 
$\exp{S_i} \in \bbN$, $i = 1, \hdots, m$,
and domains $\Lambda_1, \hdots, \Lambda_m$ 
at mutual distances greater than $r \geq R_0$. Then
\begin{equation}\label{eqEupperEst}
 E_\omega\left(\bigcup_{i = 1}^m \Lambda_i, \sum_{i = 1}^m n_i, \sum_{i = 1}^m S_i\right) \leq \sum_{i = 1}^m E_\omega(\Lambda_i, n_i, S_i) + \frac{A}{2} \left(\sum_{i = 1}^m n_i\right)^2 r^{-\lambda} \text{.}
\end{equation}
\end{enumerate}
\end{proposition}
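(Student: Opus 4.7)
The plan is to derive part (1) directly from the counting-function inequality \eqref{eqNLowerEst}, and then obtain part (2) by induction on $m$, controlling the resulting cross-terms in the error.

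For part (1), I set $E_i := E_\omega(\Lambda_i, n_i, S_i)$ for $i = 1, 2$. By the defining relation \eqref{eqEdef} together with \eqref{eqSEinversion}, and the fact that $\exp S_i \in \bbN$, we have
\begin{equation*}
\fCount_\omega(\Lambda_i, n_i, E_i) \geq e^{S_i}, \qquad i = 1, 2.
\end{equation*}
Plugging into \eqref{eqNLowerEst} gives
\begin{equation*}
\fCount_\omega\bigl(\Lambda_1 \cup \Lambda_2, n_1 + n_2, E_1 + E_2 + A n_1 n_2 r^{-\lambda}\bigr) \geq e^{S_1} \cdot e^{S_2} = e^{S_1 + S_2}.
\end{equation*}
Since $e^{S_1 + S_2}$ is a positive integer, this says that $H_\omega(\Lambda_1 \cup \Lambda_2, n_1 + n_2)$ has at least $e^{S_1 + S_2}$ eigenvalues not exceeding $E_1 + E_2 + A n_1 n_2 r^{-\lambda}$, so by the definition of the quasi-inverse in \eqref{eqEdef},
\begin{equation*}
E_\omega(\Lambda_1 \cup \Lambda_2, n_1 + n_2, S_1 + S_2) \leq E_1 + E_2 + A n_1 n_2 r^{-\lambda},
\end{equation*}
which is exactly \eqref{eqEupperEst1}.

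For part (2), I proceed by induction on $m$. The base case $m = 2$ is part (1). For the inductive step, set $\Lambda' = \bigcup_{i = 1}^{m - 1} \Lambda_i$, $n' = \sum_{i = 1}^{m - 1} n_i$ and $S' = \sum_{i = 1}^{m - 1} S_i$. Since the mutual distances of the $\Lambda_i$ are all at least $r$, we still have $\dist(\Lambda', \Lambda_m) \geq r \geq R_0$, so part (1) applies to the pair $(\Lambda', \Lambda_m)$. Combining with the induction hypothesis for $\Lambda_1, \dots, \Lambda_{m - 1}$, the accumulated error is bounded by
\begin{equation*}
A r^{-\lambda} \sum_{1 \leq i < j \leq m} n_i n_j
= \frac{A}{2} r^{-\lambda} \Biggl[\Bigl(\sum_{i = 1}^m n_i\Bigr)^2 - \sum_{i = 1}^m n_i^2\Biggr]
\leq \frac{A}{2} \Bigl(\sum_{i = 1}^m n_i\Bigr)^2 r^{-\lambda},
\end{equation*}
which yields \eqref{eqEupperEst}.

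No step here is truly delicate; the only point that requires a small amount of care is to make sure that on each inductive step the product $e^{S'} \cdot e^{S_m}$ is still an integer (obvious, as each $e^{S_i} \in \bbN$), so that part (1) is legitimately applicable, and that the ``unioned'' domain $\Lambda'$ satisfies the distance hypothesis with respect to $\Lambda_m$, which it does because pairwise distances are preserved under union.
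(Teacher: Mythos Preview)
Your proof is correct and follows essentially the same route as the paper. For part (1), the paper applies $E_\omega(\Lambda_1 \cup \Lambda_2, n_1+n_2, \cdot)$ to the entropy inequality \eqref{eqSLowerEst} and invokes \eqref{eqSEinversion}, which is exactly your counting-function argument via \eqref{eqNLowerEst}; for part (2), the paper merely states that \eqref{eqEupperEst} is an immediate consequence of \eqref{eqEupperEst1}, and your induction with the cross-term bound $\sum_{i<j} n_i n_j \leq \tfrac{1}{2}\bigl(\sum_i n_i\bigr)^2$ spells this out.
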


\begin{proof}
The inequality \eqref{eqEupperEst} is an immediate consequence of 
\eqref{eqEupperEst1}.
The latter is obtained by taking 
$E_\omega(\Lambda_1 \cup \Lambda_2, n_1 + n_2, \cdot)$ of
\eqref{eqSLowerEst} and using \eqref{eqSEinversion}.
\end{proof}

\subsection{$L^2$-convergence on a Special Sequence of Cubes}
\label{subsectSuiteCubes}

In this section, we will construct a special sequence of cubes 
$\Lambda_N$ in configuration space $\calV$, 
on which the existence of thermodynamic limit will be proven.
The idea is inspired by \cite{Ruelle_StatMech}.

Let $\theta$ be a number that satisfies
\begin{equation*}
 1 < 2^{d / \lambda} < \theta < 2
\end{equation*}
and let
\begin{equation}\label{eq:LDef}
 \wtL > R = \frac{R_0 + \delta}{2 - \theta} \text{,}
\end{equation}
where $\delta > 0$ is a constant that will be fixed later.
For an integer $N \geq 0$ put
\begin{equation}\label{eqLNcubeCote}
 L_N = 2 \left[\frac{1}{2} \left(2^N \wtL - \theta^N R\right)\right] \text{,}
\end{equation}
so that $L_N \in 2 \bbZ$,
and define the cube $\Lambda_N$ by
\begin{equation*}
 \Lambda_N = [-L_N / 2, L_N / 2]^d \subset \calV \text{.}
\end{equation*}
Remark that the vertices of $\Lambda_N$ are at integer points.
According to \eqref{eqLNcubeCote}
it is possible to place $2^d$ translates of 
$\Lambda_N$ (cubes $\Lambda_N^{(i)}$) inside $\Lambda_{N + 1}$ 
at mutual distances at least
\begin{equation*}
 R_N = L_{N + 1} - 2 L_N = \theta^N (2 - \theta) R + \eps 
= \theta^N (R_0 + \delta) + \eps \geq R_0 \text{,}
\end{equation*}
where $\eps \in [-4, 2]$ is the error due to the rounding procedure.
The constant $\delta$ is chosen to compensate a possibly negative error 
term $\eps$, so that the last inequality holds true. 
It suffices, for example, to choose $\delta = 4$.

We remark that cubes $\Lambda_N^{(i)}$ 
are explicitly given by
\begin{equation*}
 \Lambda_N^{(i)} = \Lambda_N + \gamma_i^{(N)} \text{,}
\end{equation*}
where
\begin{equation*}
 \gamma_i^{(N)} = \frac{L_{N + 1} - L_N}{2} \cdot e_i \in \bbZ^d,
\quad e_i = (\pm1, \hdots, \pm1) \in \bbR^d,\quad i = 1, \hdots, 2^d \text{.}
\end{equation*}

\begin{remark}
It is important that the translation vectors $\gamma_i^{(N)}$ are integer,
because it ensures that the restrictions of the random potential $V_\omega$ to 
$\Lambda_i^{(N)}$ for different $i$ are connected by the covariance 
relation \eqref{eq:covarRel}.
\end{remark}

The function 
$E_{\omega}(\Lambda, n, S)$ 
satisfies the following monotonicity properties.

\begin{lemma}\label{lem:Emonotone} 
For fixed $\omega$ and $n$, the energy $E_\omega(\Lambda, n, S)$ is
\begin{enumerate}
 \item a nondecreasing function of $S$,
 \item a nonincreasing function of $\Lambda$.
\end{enumerate}
\end{lemma}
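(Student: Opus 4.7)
The plan is to derive both monotonicity properties directly from the definition \eqref{eqEdef} of $E_\omega(\Lambda, n, S)$ as the $\exp S$-th eigenvalue of $H_\omega(\Lambda, n)$, combined with the min-max (Courant-Fischer) characterization of eigenvalues of self-adjoint operators with compact resolvent.

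For part (1), monotonicity in $S$ is essentially tautological. If $S_1 \leq S_2$ with $\exp S_i \in \bbN$, then $\exp S_1 \leq \exp S_2$, and since the eigenvalues $E_k(\Lambda, n, \omega; \signStat)$ are by convention enumerated in nondecreasing order, one has
\begin{equation*}
E_\omega(\Lambda, n, S_1) = E_{\exp S_1}(\Lambda, n, \omega; \signStat) \leq E_{\exp S_2}(\Lambda, n, \omega; \signStat) = E_\omega(\Lambda, n, S_2) \text{.}
\end{equation*}
No analytical input beyond the ordering convention is needed.

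For part (2), I would take $\Lambda_1 \subset \Lambda_2$ (both finite boxes) and invoke the min-max principle: for each $k$,
\begin{equation*}
E_k(H_\omega(\Lambda, n)) = \inf_{\substack{V \subset \frH^n_\signStat(\Lambda)\\ \dim V = k}} \sup_{\substack{\varphi \in V\\\varphi \ne 0}} \frac{\langle H_\omega(\Lambda, n) \varphi, \varphi\rangle}{\|\varphi\|^2} \text{.}
\end{equation*}
The key observation, already exploited in the proof of Lemma \ref{lemmeFonctTest}, is that Dirichlet boundary conditions provide a zero-cost extension: any $\varphi \in \frH^n_\signStat(\Lambda_1)$ can be extended by zero to a function $\tilde\varphi \in \frH^n_\signStat(\Lambda_2)$ with $\|\tilde\varphi\| = \|\varphi\|$ and $\langle H_\omega(\Lambda_2, n)\tilde\varphi, \tilde\varphi\rangle = \langle H_\omega(\Lambda_1, n)\varphi, \varphi\rangle$ (the potential terms agree on $\Lambda_1^n$, and for the kinetic term the Dirichlet form is preserved by zero extension because $\varphi$ vanishes on $\partial\Lambda_1$). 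Thus every $k$-dimensional subspace $V \subset \frH^n_\signStat(\Lambda_1)$ embeds isometrically as a $k$-dimensional subspace $\tilde V \subset \frH^n_\signStat(\Lambda_2)$ on which the quadratic forms coincide. Taking infimum over a smaller class of subspaces yields
\begin{equation*}
E_k(H_\omega(\Lambda_2, n)) \leq E_k(H_\omega(\Lambda_1, n)) \text{,}
\end{equation*}
and setting $k = \exp S$ gives $E_\omega(\Lambda_2, n, S) \leq E_\omega(\Lambda_1, n, S)$.

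The main (mild) obstacle is the verification that zero extension genuinely preserves both the norm and the quadratic form under the Dirichlet realization; in the continuum this requires noting that $H^1_0(\Lambda_1) \hookrightarrow H^1_0(\Lambda_2)$ via zero extension, while in the discrete case it is immediate. Once this embedding is in place, both statements reduce to straightforward applications of min-max and the ordering of eigenvalues.
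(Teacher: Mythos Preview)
Your argument is correct. The paper actually states Lemma~\ref{lem:Emonotone} without proof, treating both assertions as standard facts about Dirichlet eigenvalues; your write-up supplies exactly the expected justification (ordering of eigenvalues for part~(1), and domain monotonicity via zero extension and min--max for part~(2)).
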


By Lemma \ref{lem:Emonotone}
and the almost-subadditivity condition \eqref{eqEupperEst}
we obtain for $S_i$ in $\log{\bbN}$ that
\begin{align*}
 E_\omega\left(\Lambda_{N + 1}, \sum_{i = 1}^{2^d} n_i, 
\sum_{i = 1}^{2^d} S_i\right) 
&\leq 
E_\omega\left(\bigcup_{i = 1}^{2^d} \Lambda_N^{(i)}, 
\sum_{i = 1}^{2^d} n_i, \sum_{i = 1}^{2^d} S_i\right) \\
&\leq \sum_{i = 1}^{2^d} E_\omega(\Lambda_N^{(i)}, n_i, S_i) + 
\frac{A}{2} \left(\sum_{i = 1}^{2^d} n_i\right)^2 R_N^{-\lambda} \\
&= 
\sum_{i = 1}^{2^d} E_{\tau_{\gamma_i^{(N)}}(\omega)}(\Lambda_N, n_i, S_i) + 
\frac{A}{2} \left(\sum_{i = 1}^{2^d} n_i\right)^2 R_N^{-\lambda} \text{,}
\end{align*}
where $\left\{\tau_\gamma\right\}_{\gamma \in \bbZ^d}$ 
is the family of ergodic transformations of $\Omega$, that
were introduced in Notation \ref{not:tauGamma}.
In particular, for $S \in \log{\bbN}$
\begin{equation}\label{eqSousAdditSuite}
 E_\omega\left(\Lambda_{N + 1}, 2^d n, 2^d S\right) \leq 
\sum_{i = 1}^{2^d} E_{\tau_{\gamma_i^{(N)}}(\omega)}(\Lambda_N, n, S) + 
\frac{A}{2} (2^d n)^2 R_N^{-\lambda} \text{.}
\end{equation}
Let now $\rho$ and $\sigma$ be positive numbers
such that $2^{N_0 d} \rho \wtL^d$ and 
$\exp\left(2^{N_0 d} \sigma \rho \wtL^d\right)$ 
are integer, for a sufficiently large integer $N_0$. 
Plug in \eqref{eqSousAdditSuite}
\begin{equation}\label{eqnNSNdef}
 n_N = 2^{N d} \rho \wtL^d,\quad S_N = \sigma n_N = 2^{N d} \rho \sigma \wtL^d
\end{equation}
for $N \geq N_0$.
Remark that
\begin{equation*}
 n_N / |\Lambda_N| \to \rho,\quad 
 S_N / n_N = \sigma,\quad 
 2^{N d} \wtL^d / |\Lambda_N| \to 1\quad
\text{as } N \to +\infty \text{.}
\end{equation*}
We introduce the following sequence of random variables
\begin{equation}\label{eq:XNdef}
 X_N(\omega) = 2^{-N d} \left(E_\omega(\Lambda_N, n_N, S_N) + (B + C) n_N\right) 
\text{,}
\end{equation}
where $B$ is the constant from \textbf{(SI)} and
$C$ is the constant from \textbf{(LB)}.
By \eqref{eqSousAdditSuite} this sequence satisfies the inequality
\begin{equation*}
 X_{N + 1}(\omega) \leq 2^{-d} \sum_{i = 1}^{2^d} X_N(\tau_{\gamma_i^{(N)}}(\omega)) + G_N
\end{equation*}
with
\begin{equation*}
 G_N = A \rho^2 \wtL^{2 d} 2^{(N + 2) d -1} R_N^{-\lambda} \text{.}
\end{equation*}
In order to show the convergence of the sequence $X_N(\omega)$, 
we establish the following proposition.

\begin{proposition}\label{propDecrMoyenne}
Let $X_N$ be a sequence of nonnegative random variables 
on a probability space $(\Omega, \bbP)$, $X_N(\omega) \geq 0$, 
such that for each $N$ 
there exists a family of probability preserving transformations of $\Omega$, 
$\tau_i^{(N)}$, $i \in I_N$, with $\card I_N < +\infty$, 
such that the variables 
$X_N \circ \tau_i^{(N)}$, $i \in I_N$ are i.i.d. 
(independent identically distributed). 
If the sequence $X_N$ satisfies
\begin{equation}\label{eqDecroissMoyenne}
 X_{N + 1}(\omega) \leq 
\frac{1}{\card I_N} \sum_{i \in I_N} X_N(\tau_i^{(N)} \omega) + G_N,
\quad \text{where} \quad \sum_{N = 1}^{+\infty} |G_N| < +\infty \text{,}
\end{equation}
then there exists a constant $\overline{X}$ 
(that does not depend on $\omega$) such that
\begin{equation}\label{eqL2ConvSuite}
 X_N \xrightarrow[N \to +\infty]{L^2} \overline{X} \text{.}
\end{equation}
\end{proposition}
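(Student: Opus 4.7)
The strategy is to control mean and variance of $X_N$ separately: show $\bbE X_N \to \overline{X}$ for some constant $\overline{X} \in [0,\infty)$, show $V_N := \bbE X_N^2 - (\bbE X_N)^2 \to 0$, and conclude via $\bbE(X_N - \overline{X})^2 = V_N + (\bbE X_N - \overline{X})^2$. For the means, take expectation in \eqref{eqDecroissMoyenne}: since each $\tau_i^{(N)}$ preserves $\bbP$, we get $\bbE X_{N+1} \leq \bbE X_N + G_N$. The sequence $b_N := \bbE X_N - \sum_{k<N} G_k$ is then nonincreasing and bounded below by $-\sum_k |G_k|$, hence convergent; summability of $G_N$ gives $\bbE X_N \to \overline{X}$ for some finite $\overline{X} \geq 0$.

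For the variance, introduce $Z_N := |I_N|^{-1} \sum_{i \in I_N} X_N \circ \tau_i^{(N)} + G_N$, so that $0 \leq X_{N+1} \leq Z_N$ pointwise, whence $\bbE X_{N+1}^2 \leq \bbE Z_N^2$. The i.i.d.\ assumption gives $\bbE Z_N^2 = (\bbE Z_N)^2 + V_N / |I_N|$, and therefore
\begin{equation*}
V_{N+1} \leq \frac{V_N}{|I_N|} + \eta_N \bigl(\bbE Z_N + \bbE X_{N+1}\bigr),
\end{equation*}
where $\eta_N := \bbE Z_N - \bbE X_{N+1} = \bbE X_N + G_N - \bbE X_{N+1} \geq 0$. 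Crucially, the $\eta_N$ telescope: $\sum_{N=1}^M \eta_N = \bbE X_1 - \bbE X_{M+1} + \sum_{N=1}^M G_N$ is bounded uniformly in $M$ by $\bbE X_1 + \sum_N |G_N|$, so $\sum_N \eta_N < \infty$. Since $\bbE Z_N$ and $\bbE X_{N+1}$ both converge to $\overline{X}$, the forcing $h_N := \eta_N(\bbE Z_N + \bbE X_{N+1})$ is summable. In the target application $|I_N| = 2^d$ is constant and at least $2$, so writing $\alpha := 1/|I_N| \in (0,1)$ and iterating $V_{N+1} \leq \alpha V_N + h_N$ gives
\begin{equation*}
V_N \leq \alpha^{N - N_0} V_{N_0} + \sum_{k = N_0}^{N-1} \alpha^{N-1-k} h_k,
\end{equation*}
which is made arbitrarily small by first choosing $N_0$ so that $\sum_{k \geq N_0} h_k$ is small and then sending $N \to \infty$. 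Hence $V_N \to 0$.

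Combining the two steps yields \eqref{eqL2ConvSuite}. The main obstacle is the variance step: the direct estimate $\bbE X_{N+1}^2 \leq \bbE Z_N^2$ produces an excess $(\bbE Z_N)^2 - (\bbE X_{N+1})^2$ on top of the contractive term $V_N/|I_N|$, and the recurrence for $V_{N+1}$ only closes once one recognizes, via the telescoping identity above, that this excess sums to a finite quantity along $N$. A minor additional point is to verify inductively from an $L^2$ initial condition that $X_N \in L^2(\Omega)$ so that $V_N$ is well defined, and to note that the argument requires $|I_N| \geq 2$ (trivially satisfied in the intended application where $|I_N| = 2^d$).
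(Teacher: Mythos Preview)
Your proof is correct. It follows the paper's overall plan---first show $\bbE X_N \to \overline{X}$ via the same telescoping argument, then show the variance $V_N \to 0$---but handles the variance step by a different device. The paper squares \eqref{eqDecroissMoyenne}, expands using the i.i.d.\ hypothesis to get
\[
\bbE X_{N+1}^2 \;\leq\; \tfrac{1}{|I_N|}\,\bbE X_N^2 + \bigl(1-\tfrac{1}{|I_N|}\bigr)(\bbE X_N)^2 + 2G_N\,\bbE X_N + G_N^2,
\]
and then applies Cauchy--Schwarz $(\bbE X_N)^2 \leq \bbE X_N^2$ to deduce $\sqrt{\bbE X_{N+1}^2} \leq \sqrt{\bbE X_N^2} + G_N$; hence $\bbE X_N^2$ converges, and taking the limit in the displayed inequality (using $G_N \to 0$) forces $\lim \bbE X_N^2 \leq \overline{X}^2$, the reverse being Schwarz again. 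Your route instead works directly with $V_{N+1} \leq V_N/|I_N| + h_N$ and closes the recursion via the telescoping observation $\sum_N \eta_N < \infty$, which the paper's argument does not need. Both are short and valid; the paper's Schwarz trick is slightly slicker in that it uses only $\sum |G_N| < \infty$, while yours is more explicit about the contraction mechanism. Your remark that $|I_N| \geq 2$ is required is correct and applies equally to the paper's proof, which tacitly uses $1/|I_N| < 1$ when passing to the limit in the displayed inequality.
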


\begin{proof}
Since the terms at the r.h.s. of \eqref{eqDecroissMoyenne} 
are identically distributed, after taking the expectation 
one obtains:
\begin{equation}\label{eqPresqueDecroiss}
 \bbE X_{N + 1} \leq \bbE X_{N} + G_N \text{.}
\end{equation}
Consider the sequence
\begin{equation*}
 c_N = \bbE X_N - \sum_{i = 1}^{N - 1} G_i \text{.}
\end{equation*}
Obviously, \eqref{eqPresqueDecroiss} guarantees that $c_{N + 1} \leq c_N$. 
Consequently, this sequence converges: 
$c_N \xrightarrow[N \to \infty]{} c_0$. 
Thus, as soon as the sum $\sum_{i = 1}^{+\infty} G_i$ also converges, 
$\bbE X_N$ admits the limit
that we denote by $\overline{X}$:
\begin{equation*}
 \bbE X_N \xrightarrow[N \to +\infty]{} \overline{X} \text{.}
\end{equation*}
Consider now the variance of $X_N$, 
$\bbE(X_N - \bbE X_N)^2 = \bbE(X_N^2) - (\bbE X_N)^2$.
We will show that this variance tends to zero as $N$ goes to infinity. 
By \eqref{eqDecroissMoyenne}
\begin{align*}
 (X_{N + 1}(\omega))^2 
&\leq \left(\frac{1}{\card I_N} \sum_{i \in I_N} X_N(\tau_i^{(N)} \omega) 
+ G_N\right)^2 \\
&
\begin{aligned}
=
\frac{1}{(\card I_N)^2} \sum_{i \in I_N} \left(X_N(\tau_i^{(N)} \omega)\right)^2
&+ \frac{1}{(\card I_N)^2} \sum_{\substack{i, j \in I_N\\i \ne j}} 
X_N(\tau_i^{(N)} \omega) X_N(\tau_j^{(N)} \omega) \\
&+
\frac{2 G_N}{\card I_N} \sum_{i \in I_N} X_N(\tau_i^{(N)} \omega) + G_N^2 \text{.}
\end{aligned}
\end{align*}
Taking the expectation and using the fact that 
$\left(X_N(\tau_i^{(N)})\right)_{i \in I_N}$ are i.i.d. for fixed $N$, we find
% \begin{multline*}
% \bbE(X_{N + 1}(\omega))^2\\
% \begin{aligned}
% &\begin{split}
% \leq
% \frac{1}{(\card I_N)^2} 
% \sum_{i \in I_N} \bbE\left(X_N(\tau_i^{(N)} \omega)\right)^2 
% &+ 
% \frac{1}{(\card I_N)^2} \sum_{\substack{i, j \in I_N\\i \ne j}} 
% \bbE\left(X_N(\tau_i^{(N)} \omega) X_N(\tau_j^{(N)} \omega)\right) \\
% &+ \frac{2 G_N}{\card I_N} \sum_{i \in I_N} 
% \bbE X_N(\tau_i^{(N)} \omega) + G_N^2 
% \end{split}
% \\
% &= 
% \frac{1}{(\card I_N)^2} \sum_{i \in I_N} \bbE(X_N^2) + 
% \frac{1}{(\card I_N)^2} \sum_{\substack{i, j \in I_N\\i \ne j}} (\bbE X_N)^2
% + \frac{2 G_N}{\card I_N} \sum_{i \in I_N} \bbE X_N + G_N^2 \\
% &= 
% \frac{1}{\card I_N} \bbE(X_N^2) + 
% \frac{(\card I_N)(\card I_N - 1)}{(\card I_N)^2} (\bbE X_N)^2 +
% 2 G_N \cdot \bbE X_N + G_N^2 \text{.}
% \end{aligned}
% \end{multline*}
\begin{align*}
&
\begin{aligned}
\bbE(X_{N + 1}(\omega))^2
\leq
\frac{1}{(\card I_N)^2} 
\sum_{i \in I_N} \bbE\left(X_N(\tau_i^{(N)} \omega)\right)^2 
&+ 
\frac{1}{(\card I_N)^2} \sum_{\substack{i, j \in I_N\\i \ne j}} 
\bbE\left(X_N(\tau_i^{(N)} \omega) X_N(\tau_j^{(N)} \omega)\right) \\
\phantom{\leq 
\frac{1}{(\card I_N)^2} 
\sum_{i \in I_N} \bbE\left(X_N(\tau_i^{(N)} \omega)\right)^2}
&+ \frac{2 G_N}{\card I_N} \sum_{i \in I_N} 
\bbE X_N(\tau_i^{(N)} \omega) + G_N^2 
\end{aligned}
\\
&
\begin{aligned}
\phantom{\bbE(X_{N + 1}(\omega))^2}
&= 
\frac{1}{(\card I_N)^2} \sum_{i \in I_N} \bbE(X_N^2) + 
\frac{1}{(\card I_N)^2} \sum_{\substack{i, j \in I_N\\i \ne j}} (\bbE X_N)^2
+ \frac{2 G_N}{\card I_N} \sum_{i \in I_N} \bbE X_N + G_N^2 \\
&= 
\frac{1}{\card I_N} \bbE(X_N^2) + 
\frac{(\card I_N)(\card I_N - 1)}{(\card I_N)^2} (\bbE X_N)^2 +
2 G_N \cdot \bbE X_N + G_N^2 \text{.}
\end{aligned}
\end{align*}
So, we get
\begin{equation}\label{eqVarianceEst}
 \bbE(X_{N + 1}^2) \leq 
\frac{1}{\card I_N} \bbE(X_N^2) + 
\left(1 - \frac{1}{\card I_N}\right) (\bbE X_N)^2 +
2 G_N \cdot \bbE X_N + G_N^2 \text{.}
\end{equation}
By the Schwarz inequality
\begin{equation}\label{eqInegSchwarz}
 (\bbE X_N)^2 \leq \bbE(X_N^2) \text{.}
\end{equation}
Hence, using \eqref{eqVarianceEst}, we obtain
\begin{equation*}
 \bbE(X_{N + 1}^2) \leq \bbE(X_{N}^2) + 2 G_N \sqrt{\bbE(X_{N}^2)} + G_N^2 = 
\left(\sqrt{\bbE(X_{N}^2)} + G_N\right)^2 \text{.}
\end{equation*}
Finally,
\begin{equation*}
 \sqrt{\bbE(X_{N + 1}^2)} \leq \sqrt{\bbE(X_{N}^2)} + G_N \text{.}
\end{equation*}
Arguing as for the expectation, this implies that the sequence
 $\bbE(X_{N}^2)$ converges. 
Taking the limit in \eqref{eqVarianceEst} 
and using the fact that 
$G_N \to 0$ as $N \to +\infty$, 
we obtain
\begin{equation*}
 \lim_{N \to +\infty} \bbE(X_{N}^2) \leq \overline{X}^2 \text{,}
\end{equation*}
but the Schwarz inequality \eqref{eqInegSchwarz} 
immediately gives the reciprocal estimate. We conclude that 
\begin{equation*}
 \lim_{N \to +\infty} \bbE(X_{N}^2) = \overline{X}^2
\end{equation*}
and the variance $\bbE(X_N - \bbE X_N)^2$ 
tends to $0$ as $N$ goes to infinity.
This proves \eqref{eqL2ConvSuite}.
\end{proof}

As an immediate consequence of this general statement 
we obtain the existence of the thermodynamic limit 
on the special sequence of cubes $\Lambda_N$.

\begin{corollary}\label{corL2LimSuite}
Suppose {\bf(LB)}, {\bf(PTI)} and {\bf(SI)}.
Then the thermodynamic limit 
for the energy $E_\omega$ on the sequence of cubes $\Lambda_N$ 
in the sense of $L^2_\omega$ exists, i.e.,
\begin{equation*}
 \frac{E_\omega(\Lambda_N, n_N, S_N)}{n_N} 
\xrightarrow[N \to +\infty]{L^2} \densEn(\rho, \sigma)  \text{,}
\end{equation*}
where $\densEn(\rho, \sigma)$ is defined by this limit 
and is called \emph{the limiting energy per particle} or
\emph{the energy density}.
\end{corollary}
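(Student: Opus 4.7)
The plan is to apply Proposition \ref{propDecrMoyenne} to the sequence of random variables $X_N$ defined by \eqref{eq:XNdef}, with index set $I_N = \{1, \ldots, 2^d\}$ and transformations $\tau_i^{(N)} = \tau_{\gamma_i^{(N)}}$. I need to verify four things: (a) $X_N \geq 0$; (b) the almost-subadditivity inequality \eqref{eqDecroissMoyenne}; (c) that $(X_N \circ \tau_i^{(N)})_{i \in I_N}$ are i.i.d.; and (d) that $\sum_N G_N < \infty$.

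Nonnegativity (a) follows at once from stability \textbf{(SI)} and the lower bound \textbf{(LB)}: together they yield $H_\omega(\Lambda_N, n_N) \geq -(B+C)n_N$, hence $E_\omega(\Lambda_N, n_N, S_N) \geq -(B+C)n_N$, so the shift by $(B+C)n_N$ in \eqref{eq:XNdef} makes $X_N$ nonnegative. The inequality (b) is obtained by dividing \eqref{eqSousAdditSuite} by $2^{(N+1)d}$, absorbing the constants into the definition of $X_N$, and reading off $G_N = \tfrac{A}{2}\rho^2 \wtL^{2d} 2^{(N+2)d-1} R_N^{-\lambda}$. For (d), recall that $R_N = \theta^N(R_0 + \delta) + \eps$, so $G_N \leq \const \cdot (2^d/\theta^\lambda)^N$; the standing choice $2^{d/\lambda} < \theta$ gives $2^d < \theta^\lambda$, ensuring geometric decay and summability.

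The main obstacle is (c), the i.i.d.\ property. Identical distribution is automatic because each $\tau_\gamma$ is measure-preserving on $\Omega$. Independence requires care: by the covariance relation \eqref{eq:covarRel}, the operator $H_{\tau_{\gamma_i^{(N)}}(\omega)}(\Lambda_N, n_N)$ is unitarily equivalent to the operator $H_\omega(\Lambda_N^{(i)}, n_N)$ on the translated cube, so its spectrum (and in particular $E_{\tau_{\gamma_i^{(N)}}(\omega)}(\Lambda_N, n_N, S_N)$) is measurable with respect to the $\sigma$-algebra generated by $\{V_\omega(x)\}_{x \in \Lambda_N^{(i)}}$. Since the cubes $\Lambda_N^{(i)}$ are pairwise separated by at least $R_N \geq R_0$ by construction, the decorrelation assumption \textbf{(IAD)} implies that the families $\{V_\omega(x)\}_{x \in \Lambda_N^{(i)}}$ for distinct $i$ are mutually independent, and therefore so are the random variables $X_N \circ \tau_{\gamma_i^{(N)}}$. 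It is exactly here that the integrality of $\gamma_i^{(N)}$, ensured by the choice of $L_N \in 2\bbZ$ in \eqref{eqLNcubeCote}, is indispensable.

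With (a)--(d) verified, Proposition \ref{propDecrMoyenne} yields a deterministic constant $\overline{X}$ with $X_N \to \overline{X}$ in $L^2(\Omega)$. Since $n_N / 2^{Nd} = \rho\wtL^d$ is independent of $N$, dividing \eqref{eq:XNdef} by $\rho\wtL^d$ gives
\begin{equation*}
\frac{E_\omega(\Lambda_N, n_N, S_N)}{n_N}
= \frac{X_N}{\rho\wtL^d} - (B+C)
\xrightarrow[N \to +\infty]{L^2} \frac{\overline{X}}{\rho\wtL^d} - (B+C)
=: \densEn(\rho, \sigma) \text{,}
\end{equation*}
which is the desired convergence. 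The limit is non-random since $\overline{X}$ is.
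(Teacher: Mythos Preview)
Your proof is correct and follows exactly the paper's approach: apply Proposition \ref{propDecrMoyenne} to the sequence $X_N$ of \eqref{eq:XNdef}. In fact you are more thorough than the paper's own proof, which explicitly checks only the nonnegativity (your item (a)) and leaves the i.i.d.\ property, the subadditive inequality, and the summability of $G_N$ to the preceding discussion; your verification of (c) via \textbf{(IAD)} and the covariance relation, and of (d) via the choice $2^{d/\lambda} < \theta$, makes these implicit steps explicit.
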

\begin{proof}
We only need to prove that the random variable $X_N(\omega)$ introduced in
\eqref{eq:XNdef} is nonnegative.
Recall that
\begin{equation*}
H_\omega(\Lambda_N, n_N) = \sum_{i = 1}^{n_N} H_\omega^{(i)}(\Lambda_N, n_N) 
+ W_{n_N} \text{.}
\end{equation*}
Moreover (see also \eqref{eq:HomegaiDef}),
\begin{equation*}
H_\omega^{(i)}(\Lambda_N, n_N) \geq -C, \quad
i = 1, \hdots, n_N \text{,}
\end{equation*}
by the lower boundedness of the one particle Hamiltonian \textbf{(LB)}, and
\begin{equation*}
W_{n_N} \geq -B n_N
\end{equation*}
by the stability of interactions \textbf{(SI)}. Thus,
\begin{equation*}
H_\omega(\Lambda_N, n_N) \geq -(B + C) n_N \text{,}
\end{equation*}
and consequently
\begin{equation*}
E_\omega(\Lambda_N, n_N, S_N) + (B + C) n_N \geq 0 \text{.}
\end{equation*}
\end{proof}

\subsection{Critical Density of Particles}
\label{subsect:CriticalDensityOfParticles}

We now discuss the finiteness of the thermodynamic limit 
that was announced in Proposition \ref{prop:CritDens}.

Being essentially attained via a nonincreasing sequence, 
the limit is finite if and only if there are finite terms in the sequence 
$\dfrac{E_\omega(\Lambda_N, n_N, S_N)}{n_N}$.
In other words, if for a sufficiently big $N$,
$E_\omega(\Lambda_N, n_N, S_N)$ is finite, 
then $\densEn(\rho, \sigma) < +\infty$.

The situation when the operator $H_\omega(\Lambda_N, n_N)$ doesn't possess
an increasing sequence of eigenvalues may arise, 
according to variational principle, if and only if
a subspace of functions $\varphi$ such that
$(H_\omega(\Lambda_N, n_N) \varphi, \varphi) < +\infty$
is of finite dimension.
But this last condition is possible only if 
the interaction potential $W_n$ takes the value $+\infty$ and if
there are too few configurations with a finite interaction term, i.e.,
\begin{equation}\label{eqCondInfinInter}
 \meas\left\{(x^1, \hdots, x^{n_N}) \in \Lambda_N^{n_N} 
\mid W_n(x^1, \hdots, x^{n_N}) < +\infty\right\} = 0 \text{.}
\end{equation}

As a model case, suppose that the interactions are by pairs {\bf(PI)} and 
that the pair potential $U$ represents hard cores of radius $r_0$
(see \cite{Ruelle_StatMech}):
\begin{equation*}
 U(x) \begin{cases}
       = +\infty,& |x| \leq r_0,\\
       \ne +\infty,& |x| > r_0 \text{.}
      \end{cases}
\end{equation*}
In this case, the condition \eqref{eqCondInfinInter} is satisfied 
if there isn't enough space for $n_N$ balls of radius $r_0 / 2$ 
with centers in the domain $\Lambda_N$.
In other words, define the set of denied spacial configurations of
$n$ particles by
\begin{equation*}
S_{r_0}^n = \left\{(x^1, \hdots, x^n) \in \bbR^{n d}, \text{ such that } 
|x^i - x^j| < r_0 \text{ for some } i \ne j\right\} \text{.}
\end{equation*}
Then the Hamiltonian is defined on $L^2_\signStat(\calV^n \setminus S_{r_0}^n)$,
instead of $L^2_\signStat(\calV^n)$ and
it may happen that 
\begin{equation*}
\meas\left(\calV^n \setminus S_{r_0}^n\right) = 0 \text{.}
\end{equation*}
The last observation suggests
that there exists \emph{a critical density of particles} $\rho_c$ 
such that the energy density $\densEn(\rho, \sigma)$ is finite for 
$\rho < \rho_c$ and infinite for $\rho > \rho_c$.
For example, for the case of hard cores, this is the closed packing density.
Note that $\rho_c = +\infty$ if
the interaction potential takes only finite values.

\subsection{Properties of the Energy Density}
\label{subsect:PropertiesOfEnergyDensity}

Before proceeding with the proof of the existence of thermodynamic limit 
for general domains, we establish some properties of the 
energy density $\densEn$ 
(in particular, we prove Proposition \ref{prop:DensProp}).

From now on, we assume that $\rho < \rho_c$.
Note that till now the function $\densEn(\rho, \sigma)$ 
is defined (as a limit of a sequence) only for particle densities $\rho$ 
and entropy densities $\sigma$ of the form
\begin{equation}\label{eqRhoCond}
 \rho = \frac{m_1}{2^{N_0 d} L^d},\qquad \sigma = \frac{\log{m_2}}{m_1} \text{,}
\end{equation}
where $m_1$, $m_2$ and $N_0$ are positive integers.
In all the statements that follow we implicitly assume that 
$\rho$ and $\sigma$ satisfy \eqref{eqRhoCond}.

\begin{proof}[Proof of Proposition \ref{prop:DensProp} \eqref{it:densConv}]
The convexity of the limiting function is an immediate consequence 
of the almost-subadditivity \eqref{eqEupperEst1}.
\end{proof}

\begin{proof}[Proof of Proposition \ref{prop:DensProp} \eqref{it:densMonot}]
The monotonicity is given by Lemma \ref{lem:Emonotone}.
\end{proof}

\begin{proposition}\label{prop:DensLocalBound}
The energy density $\densEn$ is locally bounded
on the plane of parameters $0 < \rho < \rho_c$ and $\sigma \geq 0$.
\end{proposition}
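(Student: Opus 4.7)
The plan is to split the proposition into a uniform lower bound coming from the stability hypotheses and a local upper bound coming from monotonicity plus pointwise finiteness below $\rho_c$.

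First I would establish the uniform lower bound $\densEn(\rho, \sigma) \geq -(B + C)$ on the whole admissible region. This is immediate from \textbf{(LB)} and \textbf{(SI)}: as in the proof of Corollary \ref{corL2LimSuite}, these give the operator inequality $H_\omega(\Lambda_N, n_N) \geq -(B + C) n_N$, hence by the variational principle $E_\omega(\Lambda_N, n_N, S_N) \geq -(B + C) n_N$ for every admissible $S_N$; dividing by $n_N$ and passing to the $L^2$ limit of Corollary \ref{corL2LimSuite} gives the claim.

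For the upper bound, let $K$ be a compact subset of $(0, \rho_c) \times [0, +\infty)$. Pick $\rho_\ast < \rho_c$ and $\sigma_\ast < +\infty$ such that $K \subset (0, \rho_\ast] \times [0, \sigma_\ast]$. By the monotonicity of $\densEn$ in each variable separately (Proposition \ref{prop:DensProp} (\ref{it:densMonot})), $\densEn(\rho, \sigma) \leq \densEn(\rho_\ast, \sigma_\ast)$ for all $(\rho, \sigma) \in K$, so it suffices to prove pointwise finiteness $\densEn(\rho_\ast, \sigma_\ast) < +\infty$.

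The main obstacle is exactly this pointwise finiteness, which is the content of Proposition \ref{prop:CritDens}. I would establish it by an explicit trial-state construction in $\frH^{n_N}_\signStat(\Lambda_N)$: partition $\Lambda_N$ into $n_N$ sub-cubes $\omega_i$ of side $\ell \sim \rho_\ast^{-1/d}$ placed on an integer lattice, separated by corridors of width at least $R_0$ (which is possible for $N$ large because $\rho_\ast < \rho_c$, by the discussion of Subsection \ref{subsect:CriticalDensityOfParticles}). In each $\omega_i$ pick a one-particle Dirichlet eigenstate $\psi_{i, j_i}$ of $H_\omega$ on $\omega_i$ with index $j_i \in \{1, \dots, \lceil e^{\sigma_\ast} \rceil\}$, and combine the $\psi_{i, j_i}$ into a product respecting the statistic $\signStat$ following the construction in Lemma \ref{lemmeFonctTest}. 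The one-particle terms contribute $O(n_N)$ to the total energy since each $\psi_{i, j_i}$ has one-particle energy bounded uniformly in $\omega$ by \textbf{(LB)} and the finite size of $\omega_i$; the cross sub-cube pair interactions contribute $O(n_N)$ as well because \textbf{(PTI)} with $\lambda > d$ makes $\sum_{\gamma \in \bbZ^d \setminus \{0\}} |\gamma|^{-\lambda}$ summable, so summing $|U(x - y)|$ over all pairs with $x, y$ in distinct sub-cubes is $O(n_N)$. Varying the assignment $(j_1, \dots, j_{n_N})$ yields $\lceil e^{\sigma_\ast} \rceil^{n_N} \geq e^{S_N}$ mutually orthogonal trial states (since the $\omega_i$ have disjoint supports, distinct assignments live in orthogonal subspaces, and this orthogonality survives (anti)symmetrization), so $\fCount_\omega(\const \cdot n_N, \Lambda_N, n_N) \geq e^{S_N}$ and hence $E_\omega(\Lambda_N, n_N, S_N) = O(n_N)$, giving $\densEn(\rho_\ast, \sigma_\ast) < +\infty$.
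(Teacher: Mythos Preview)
Your reduction in the first two paragraphs is correct and is in fact cleaner than what the paper does: the lower bound $\densEn \geq -(B+C)$ from \textbf{(LB)} and \textbf{(SI)}, together with the monotonicity already proved in Proposition~\ref{prop:DensProp}~(\ref{it:densMonot}), reduces local boundedness to the single pointwise statement $\densEn(\rho_\ast,\sigma_\ast)<+\infty$. Since Proposition~\ref{prop:CritDens} and its discussion in Subsection~\ref{subsect:CriticalDensityOfParticles} precede the present proposition, you could simply cite it here and stop. The paper, by contrast, does not use monotonicity at all: it applies the subadditive estimate \eqref{eqEupperEst} recursively to break $\Lambda_N$ into $2^{Nd}$ translates of the \emph{fixed} cube $\Lambda_0$, distributes the particles so that each small cube carries at most $n_0^{\max}=\lfloor \wtL^d\rho_2\rfloor+1$ of them, and bounds the small-cube energies uniformly via Weyl asymptotics; this yields the bound on the whole rectangle $\Delta$ in one stroke.

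Your third paragraph, however, has a genuine gap. You want $n_N\sim\rho_\ast|\Lambda_N|$ sub-cubes of side $\ell\sim\rho_\ast^{-1/d}$ separated by corridors of width $\geq R_0$. Fitting them inside $\Lambda_N$ forces $(\ell+R_0)^d\lesssim|\Lambda_N|/n_N\sim\rho_\ast^{-1}$, hence $R_0<\rho_\ast^{-1/d}$, i.e.\ $\rho_\ast<R_0^{-d}$. This has nothing to do with $\rho_c$: when $U$ takes only finite values the paper states $\rho_c=+\infty$, yet your construction breaks down for every $\rho_\ast\geq R_0^{-d}$. The discussion of Subsection~\ref{subsect:CriticalDensityOfParticles} concerns hard-core packing of \emph{particles}, not $R_0$-separation of one-particle \emph{cells}, so it does not rescue the step. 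The fix is precisely the paper's device: keep the sub-cube size fixed at $\Lambda_0$ (so that the built-in geometry of Subsection~\ref{subsectSuiteCubes} already guarantees the $R_0$-separation) and put a \emph{bounded} number $n_0^{(j)}\leq n_0^{\max}$ of particles in each; the intra-cube interaction energy is then controlled because the number of interaction terms is bounded, and the inter-cube part by \textbf{(PTI)} exactly as you say. Alternatively, drop the third paragraph and just invoke Proposition~\ref{prop:CritDens}.
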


\begin{proof}
Let $0 < \rho_1 < \rho_2 < \rho_c$ and $\sigma_0 > 0$ 
be of the form \eqref{eqRhoCond}. 
We shall show that $\densEn$ is bounded in the region 
$\Delta = 
\{\rho_1 \leq \rho \leq \rho_2\} \times \{0 \leq \sigma \leq \sigma_0\}$. 
First remark that the number of particles $n_N = 2^{N d} \wtL^d \rho$ 
with $\rho \leq \rho_2$ can be represented as
\begin{equation}\label{eqnNrepr}
 n_N = \sum_{j = 1}^{2^{N d}} n_0^{(j)} \text{,}
\end{equation}
where $n_0^{(j)} \in \left\{\left[\wtL^d \rho\right], 
\left[\wtL^d \rho\right] + 1\right\}$. 
Obviously, such a representation depends on $\rho$. 
On the other hand, the bound 
$n_0^{(j)} \leq \left[\wtL^d \rho_2\right] + 1 = : n_0^{max}$ 
depends only on $\rho_2$. 
This representation can be obtained as the result of a consecutive
division of the domain $\Lambda_N$ in sub-domains 
(each time we divide the domain in $2^d$ parts) 
until one obtains the domains $\Lambda_0^{(j)}$. 
In \eqref{eq:LDef}, choose $\wtL$ sufficiently large so that
$\left[\wtL^d \rho\right] \geq 1$, 
i.e., there is at least one particle in each sub-domain $\Lambda_0^{(j)}$. 

Let us denote by $S^{\star}$ the smallest number belonging to $\log{\bbN}$ 
that is larger than $S$:
\begin{equation}\label{eq:SstarDef}
S^{\star} = \inf\{Q \geq S, \exp{Q} \in \bbN\} \text{.}
\end{equation}
Then one calculates:
\begin{equation}\label{eqEBorne2}
 E_\omega(\Lambda_N, n_N, S_N) \leq \sum_{j = 1}^{2^{N d}} 
E_\omega\left(\Lambda_0^{(j)}, n_0^{(j)}, \left(S_N / 2^{N d}\right)^{\star}\right) 
+ \frac{A}{2} n_N n_0^{max} \sum_{m = 0}^{N - 1} 2^{(m + 1) d} R_m^{-2} \text{.}
\end{equation}
Since $S_N / 2^{N d} \leq \wtL^d \rho_2 \sigma_0$ and $n_0^{(j)} \leq n_0^{max}$, 
we can deduce that 
$E_\omega\left(\Lambda_0^{(j)}, n_0^{(j)}, \left(S_N / 2^{N d}\right)^{\star}\right)$ 
is bounded uniformly with respect to $N$ and $\omega$.
The proof is done by a trivial bounding of the potential
(as the number of terms in the potential is bounded) and 
by the application of Weyl asymptotic.
Dividing \eqref{eqEBorne2} by $2^{N d}$, we finish the proof.
\end{proof}

\begin{proof}[Proof of Proposition \ref{prop:DensProp} \eqref{it:densCont}]
Having established Proposition \ref{prop:DensLocalBound},
it is sufficient to apply a standard argument due to Jensen 
(see, for example, \cite{Polya_Szego_vol1}).
\end{proof}

\begin{corollary}
Suppose $\rho < \rho_c$ is of the form \eqref{eqRhoCond}, 
$\sigma_1$ and $\sigma_2$ are fixed.
Then the $L^2$-convergence
\begin{equation*}
n_N^{-1} E_\omega\left(\Lambda_N, 
2^{N d} \wtL^d \rho, 2^{N d} \wtL^d \rho \sigma\right) 
\to \densEn(\rho, \sigma)
\end{equation*}
is uniform in $\sigma \in [\sigma_1, \sigma_2]$
because the pointwise convergence of monotone functions to a
continuous function on a compact interval implies the uniform convergence 
(Dini's theorem).
\end{corollary}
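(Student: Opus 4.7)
The plan is to promote the pointwise $L^2$-convergence established in Corollary~\ref{corL2LimSuite} to uniform convergence over the compact interval $[\sigma_1, \sigma_2]$ by a Dini-type sandwich argument that leverages the two ingredients already at our disposal: monotonicity of the prelimit functions in $\sigma$, and continuity of the limit $\densEn(\rho, \cdot)$.

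First I would set up the relevant functions. For brevity write $f_N(\sigma, \omega) = n_N^{-1} E_\omega(\Lambda_N, n_N, n_N \sigma)$ and $g(\sigma) = \densEn(\rho, \sigma)$. By Lemma~\ref{lem:Emonotone}, each $\sigma \mapsto f_N(\sigma, \omega)$ is nondecreasing (for every $\omega$ and every $N$), and by Proposition~\ref{prop:DensProp} \eqref{it:densMonot}--\eqref{it:densCont} the limit $g$ is nondecreasing and continuous on $[\sigma_1, \sigma_2]$, hence uniformly continuous there. Given $\eps > 0$, I would pick a finite grid $\sigma_1 = \tau_0 < \tau_1 < \hdots < \tau_K = \sigma_2$, each $\tau_k$ of the admissible form \eqref{eqRhoCond}, such that $g(\tau_{k + 1}) - g(\tau_k) < \eps$ for all $k$ (the set of admissible $\sigma$ is dense, so this is possible).

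Next I would run the sandwich step. For any $\sigma \in [\tau_k, \tau_{k+1}]$, monotonicity of $f_N$ and $g$ yields
\begin{equation*}
f_N(\tau_k, \omega) - g(\tau_{k+1})
\leq f_N(\sigma, \omega) - g(\sigma)
\leq f_N(\tau_{k+1}, \omega) - g(\tau_k) \text{,}
\end{equation*}
and after adding and subtracting $g(\tau_k)$ and $g(\tau_{k+1})$ on the outer sides, one arrives at the pointwise bound
\begin{equation*}
\sup_{\sigma \in [\sigma_1, \sigma_2]} |f_N(\sigma, \omega) - g(\sigma)|
\leq \max_{0 \leq k \leq K} |f_N(\tau_k, \omega) - g(\tau_k)| + \eps \text{.}
\end{equation*}
Taking $L^2(\Omega)$-norms and using $\max_k |\cdot| \leq \sum_k |\cdot|$,
\begin{equation*}
\left\| \sup_{\sigma \in [\sigma_1, \sigma_2]} |f_N(\sigma, \cdot) - g(\sigma)| \right\|_{L^2(\Omega)}
\leq \sum_{k = 0}^K \| f_N(\tau_k, \cdot) - g(\tau_k) \|_{L^2(\Omega)} + \eps \text{.}
\end{equation*}
Each term in the finite sum tends to $0$ as $N \to \infty$ by Corollary~\ref{corL2LimSuite}, so the right-hand side is below $2\eps$ for all $N$ large. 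Since $\eps$ was arbitrary, I would conclude uniform $L^2$-convergence on $[\sigma_1, \sigma_2]$.

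The only mild obstacle I anticipate is the discreteness constraint \eqref{eqRhoCond}: the parameter $\sigma$ must be such that $\exp(n_N \sigma) \in \bbN$, so the grid points $\tau_k$ cannot be chosen completely freely, and the value $n_N \sigma$ for a general $\sigma \in [\tau_k, \tau_{k+1}]$ must first be rounded (via a construction analogous to $S^{\star}$ in \eqref{eq:SstarDef}) before the sandwich is strictly applicable. This is a purely technical nuisance: the admissible $\sigma$'s are dense in $[\sigma_1, \sigma_2]$, and the rounding cost in $S$ is of order $1 \ll n_N$, so after dividing by $n_N$ it contributes a term that vanishes as $N \to \infty$ and can be absorbed into $\eps$ without affecting the argument.
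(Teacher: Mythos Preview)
Your proposal is correct and is precisely the standard grid/sandwich argument underlying the variant of Dini's theorem that the paper invokes; the paper itself offers no further details beyond naming the theorem in the statement of the corollary. Your only addition is the explicit adaptation to $L^2$-convergence (taking norms after the pointwise sandwich) and the remark on the discreteness constraint \eqref{eqRhoCond}, both of which are appropriate and handled correctly.
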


From this corollary we deduce the following proposition, 
which weakens the restrictions on the way the entropy must go to infinity in 
the thermodynamic limit.
Instead of a very specifically chosen sequence $S_N$
as in Corollary \ref{corL2LimSuite}, 
we need only the linear dependence of entropy on the number of particles.

\begin{proposition}
Let $\rho > 0$ and $\sigma \geq 0$. 
Analogously to \eqref{eqnNSNdef}, construct a sequence of integers
$n_N = 2^{N d} \wtL^d \rho$. 
Let also $S_N$ be a sequence such that 
$\dfrac{S_N}{n_N} \xrightarrow[N \to +\infty]{} \sigma$. Then
\begin{equation*}
n_N^{-1} E_\omega(\Lambda_N, n_N, S_N) \xrightarrow[N \to +\infty]{L^2} \densEn(\rho, \sigma) \text{.}
\end{equation*}
\end{proposition}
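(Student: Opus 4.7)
The strategy is to sandwich $E_\omega(\Lambda_N,n_N,S_N)$ between two quantities of the special form already handled by Corollary \ref{corL2LimSuite}, exploiting the monotonicity in $S$ given by Lemma \ref{lem:Emonotone}, and then to close the argument via the continuity of $\densEn(\rho,\cdot)$ established in Proposition \ref{prop:DensProp}~\eqref{it:densCont}.

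First, fix $\varepsilon>0$ and choose $\sigma^{-}\leq \sigma\leq \sigma^{+}$ with $\sigma^{+}-\sigma^{-}<\varepsilon$ and both of the admissible form \eqref{eqRhoCond} with respect to the same value of $\wtL^d\rho$ used to define $n_N$ (this is possible because numbers of the form $\log m_2/m_1$ with $m_1=2^{N_0d}\wtL^d\rho$ are dense in $[0,+\infty)$ once $N_0$ is allowed to grow). Set
\begin{equation*}
S_N^{\pm}=n_N\sigma^{\pm}=2^{Nd}\wtL^d\rho\,\sigma^{\pm}.
\end{equation*}
By the divisibility of $m_1$ into $n_N$, each $\exp S_N^{\pm}$ is an integer, so $S_N^{\pm}\in\log\bbN$. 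Since $S_N/n_N\to\sigma$, we have $S_N^{-}\leq S_N\leq S_N^{+}$ for all $N$ sufficiently large (and if $S_N$ itself fails to lie in $\log\bbN$, replace it once and for all by the nearest element of $\log\bbN$; the resulting rounding error is absorbed into $\varepsilon$). By the monotonicity of $S\mapsto E_\omega(\Lambda_N,n_N,S)$,
\begin{equation*}
E_\omega(\Lambda_N,n_N,S_N^{-})\leq E_\omega(\Lambda_N,n_N,S_N)\leq E_\omega(\Lambda_N,n_N,S_N^{+}).
\end{equation*}

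Dividing by $n_N$ and applying Corollary \ref{corL2LimSuite} to both sides gives
\begin{equation*}
n_N^{-1}E_\omega(\Lambda_N,n_N,S_N^{\pm})\xrightarrow[N\to+\infty]{L^2}\densEn(\rho,\sigma^{\pm}).
\end{equation*}
Setting $f_N=n_N^{-1}E_\omega(\Lambda_N,n_N,S_N)$ and $f_N^{\pm}=n_N^{-1}E_\omega(\Lambda_N,n_N,S_N^{\pm})$, the pointwise inequality $f_N^{-}\leq f_N\leq f_N^{+}$ yields $|f_N-\densEn(\rho,\sigma)|\leq |f_N^{+}-f_N^{-}|+|\densEn(\rho,\sigma^{\pm})-\densEn(\rho,\sigma)|$ up to routine bookkeeping, and the triangle inequality in $L^2(\Omega)$ gives
\begin{equation*}
\limsup_{N\to+\infty}\|f_N-\densEn(\rho,\sigma)\|_{L^2}\leq |\densEn(\rho,\sigma^{+})-\densEn(\rho,\sigma^{-})|+|\densEn(\rho,\sigma^{\pm})-\densEn(\rho,\sigma)|.
\end{equation*}
By the continuity of $\densEn(\rho,\cdot)$ on $[0,+\infty)$ (Proposition \ref{prop:DensProp}~\eqref{it:densCont}), the right-hand side can be made arbitrarily small by choosing $\varepsilon$ small. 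This proves the claimed $L^2$-convergence.

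The only delicate point is the realisability of the sandwich: one must verify that values $\sigma^{\pm}$ arbitrarily close to an arbitrary $\sigma\geq 0$ can be chosen so that $n_N\sigma^{\pm}\in\log\bbN$ simultaneously for all $N\geq N_0$. This is where the flexibility of letting $N_0$ in \eqref{eqRhoCond} vary becomes essential, and it is the only step that is not an immediate invocation of the preceding results; everything else is bookkeeping built on the monotonicity lemma, the existence of the limit on $\Lambda_N$ with the special sequence $S_N=n_N\sigma$, and the continuity of the limiting density.
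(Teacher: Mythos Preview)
Your argument is correct and is essentially the same as the paper's: both rely on the monotonicity of $S\mapsto E_\omega(\Lambda_N,n_N,S)$ together with the continuity of $\sigma\mapsto\densEn(\rho,\sigma)$. The paper packages these two ingredients as ``Dini's theorem gives uniform $L^2$-convergence on compact $\sigma$-intervals, hence the result'', while you carry out the equivalent $\sigma^{-}/\sigma^{+}$ sandwich by hand; the content is identical.
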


\subsection{Proof of Theorem \ref{th:enConv} \eqref{it:CaseL2} ($L^2$-Convergence for General Domains)}

Now we are ready to show the existence of 
the thermodynamic $L^2$-limit in full generality.
First of all, we will establish that
\begin{equation}\label{eqGenDomUpperBound}
\limsup_{\Lambda \to \infty} \dfrac{E_\omega(\Lambda, n, S)}{n} 
\leq \densEn(\rho, \sigma) \text{.}
\end{equation}
Let $\rho_0 > \rho$ of the form \eqref{eqRhoCond} 
be close to $\rho$: 
\begin{equation}\label{eqRho0Repr}
 \rho_0 = \frac{n_0}{2^{N_0 d} \wtL^d}, \quad (n_0, N_0) \in \bbN^2 \text{.}
\end{equation}
The representation \eqref{eqRho0Repr} is not unique. 
Among all the representations for a fixed $\rho_0$
there exists the minimal one, i.e., where $N_0$ is minimal: 
$\rho_0 = n_{min} 2^{-N_{min} d} \wtL^{-d}$.
We write:
\begin{equation*}
 n = m n_0 + r_0, \quad 0 \leq r_0 < n_0\text{.}
\end{equation*}
For a fixed $\rho_0$ we choose $N_0$ in \eqref{eqRho0Repr} 
as a function of $n$ such that
\begin{equation}\label{eqN0choix}
 L_{N_0}^d / n \to 0, \quad L_{N_0}^\lambda / n \to +\infty \text{.}
\end{equation}
By Definition \ref{defFisherConv} 
a sufficiently large $\Lambda$ contains $m$ disjoint cubes
with sides $\xi L_{N_0}$, where
\begin{equation*}
 1 < \xi < \left(\frac{\rho_0}{\rho}\right)^{1 / d} \text{.}
\end{equation*}
Consequently, $\Lambda$ contains translated cubes 
$\Lambda_{N_0}^{(i)}$, $i = 1, \hdots, m$, 
at mutual distances at least $(\xi - 1) L_{N_0}$. 
By \eqref{eqEupperEst} one has
\begin{equation}\label{eqEnEst1}
 E_\omega(\Lambda, n, S) \leq \sum_{i = 1}^{m - 1} 
E_\omega(\Lambda_{N_0}^{(i)}, n_0, S / (m - 1)) 
+ E_\omega(\Lambda_{N_0}^{(m)}, n_0 + r_0, 0) 
+ \frac{A}{2} n^2 (\xi - 1)^{-\lambda} L_{N_0}^{-\lambda} \text{.}
\end{equation}
We treat now the term $E_\omega(\Lambda_{N_0}^{(m)}, n_0 + r_0, 0)$. 
By the domain division procedure similar to that described 
in Section \ref{subsectSuiteCubes}
we can reduce $\Lambda_{N_0}^{(m)}$ 
to the union of $2^{(N_0 - N_{min}) d} = n_0 / n_{min}$ 
(this is an integer) 
translates of $\Lambda_{N_{min}}$. We obtain
\begin{equation}\label{eqEN0est1}
 E_\omega(\Lambda_{N_0}^{(m)}, n_0 + r_0, 0) \leq 
\sum_{j = 1}^{n_0 / n_{min}} E_\omega(\Lambda_{N_{min}}^{(j)}, n_{min} + r_j, 0) 
+ \frac{A}{2} (2 n_0) \sum_{m = N_{min}}^{N_0 - 1} 2^{(m + 1) d} R_m^{-2}
\end{equation}
where $0 \leq r_j < n_{min}$ and $j \in \{1, \hdots, n_0 / n_{min}\}$. 
Clearly,
\begin{equation*}
 E_\omega(\Lambda_{N_{min}}^{(j)}, n_{min} + r_j, 0) \leq C_1 n_{min}^2
\end{equation*}
where the constant $C_1$ is uniform in $j$ and $\omega$. 
By \eqref{eqEN0est1} one obtains
\begin{equation*}
 E_\omega(\Lambda_{N_0}^{(m)}, n_0 + r_0, 0) \leq 
\frac{n_0}{n_{min}} C_1 n_{min}^2 + C_2 n_0 \leq C_3 n_{min} n_0 \text{.}
\end{equation*}
Finally, from \eqref{eqEnEst1} one deduces that
\begin{equation*}
n^{-1} E_\omega(\Lambda, n, S) \leq 
\frac{n_0}{n} \sum_{i = 1}^{m - 1} \frac{1}{n_0} 
E_\omega(\Lambda_{N_0}^{(i)}, n_0, S / (m - 1)) + C_3 \frac{n_{min} n_0}{n} 
+ \frac{A}{2} (\xi - 1)^{-\lambda} \frac{n}{L_{N_0}^{\lambda}} 
\text{.}
\end{equation*}
Note that $n_0 / n \leq 1 / m \to 0$ and ${n} / {L_{N_0}^{\lambda}} \to 0$ 
according to \eqref{eqN0choix}. Thus,
\begin{equation*}
\limsup_{\Lambda \to \infty} n^{-1} E_\omega(\Lambda, n, S) 
\leq \densEn\left(\rho_0, \sigma\right) \text{.}
\end{equation*}
Approaching $\rho$ from above by $\rho_0$, one gets \eqref{eqGenDomUpperBound}.

The proof that
\begin{equation*}
 \liminf_{\Lambda \to \infty} n^{-1} E_\omega(\Lambda, n, S) 
\geq \densEn(\rho, \sigma)
\end{equation*}
is done exactly as in \cite{Ruelle_StatMech}, pp.~47-48.
\qed

\subsection{$L^1$ and Almost Sure Limits}
\label{subsect:L1conv}

In this section we assume {\bf(LB)} and {\bf(Comp)}.
We show that if the interactions are compactly supported, 
the convergence to the thermodynamic limit can be improved.
The proof follows that of the $L^2$-convergence, 
so we only indicate the necessary modifications.

Let us introduce the following random variable. 
For all $n \in \bbN$, $S \in \bbR$ and domain $A \subset \bbR^d$ we set
\begin{equation*}
f_\omega(A, n, S) = E_\omega(\whA, n, S^\star)\text{,}
\end{equation*}
where $\whA$ is the $R_0 / 2$-interior of $A$, i.e., 
\begin{equation*} 
\whA = \{x \in A, \dist(x, \partial{A}) > R_0 / 2\} \text{,}
\end{equation*} 
and recall that $S^\star$ is defined in \eqref{eq:SstarDef}.
%is a number belonging to $\log{\bbN}$
%that is the closest from above to $S$: 
%\begin{equation*}
%S^\star = \inf\{Q \in \log{\bbN}, Q \geq S\} \text{.}
%\end{equation*} 
% Obviously,
% \begin{equation*}
% f_\omega(A, n, S) \geq 0
% \end{equation*}
% by the definition.
We make the following observations before giving a subadditivity condition.

\begin{lemma}\label{lmAuxil1}
Let $A$ and $B$ be two domains in $\calV$.
\begin{enumerate}
\item
If $A \subset B$, then $\whA \subset \whB$.
\item
If $A \cap B = \varnothing$, then $\dist(\whA, \whB) \geq R_0$.
\end{enumerate}
Let also $x, y \in \bbR$.
\begin{enumerate}
\item
If $x \leq y$, then $x^\star \leq y^\star$.
\item
$(x + y)^\star \leq x^\star + y^\star$.
\end{enumerate}
\end{lemma}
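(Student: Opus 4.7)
The plan is to prove each of the four assertions separately, since the two about $\widehat{(\cdot)}$ and the two about $(\cdot)^\star$ are independent. All four are essentially elementary, and I will keep the routine point-set topology and monotonicity arguments brief.

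For the first claim, I would use the identity $\dist(x,\partial A)=\dist(x,A^c)$ for $x$ in the interior of $A$ (valid since every point of $A^c$ nearest to $x$ lies on $\partial A$). From $A\subset B$ we get $B^c\subset A^c$, whence $\dist(x,B^c)\geq\dist(x,A^c)$ for every $x\in A$. Thus $x\in\widehat A$ gives $x\in B$ with $\dist(x,\partial B)=\dist(x,B^c)\geq\dist(x,A^c)=\dist(x,\partial A)>R_0/2$, i.e.\ $x\in\widehat B$.

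For the second claim, fix $x\in\widehat A$, $y\in\widehat B$ and consider the straight segment $[x,y]$. Since $A\cap B=\varnothing$, the endpoint $y$ lies outside $A$ and $x$ lies outside $B$, so the segment must cross $\partial A$ at some point $p$ and $\partial B$ at some point $q$, both lying strictly between $x$ and $y$. Ordering $p,q$ along the segment yields
\begin{equation*}
|x-y|\;\geq\;|x-p|+|q-y|\;\geq\;\dist(x,\partial A)+\dist(y,\partial B)\;>\;R_0/2+R_0/2\;=\;R_0,
\end{equation*}
and taking the infimum over $x,y$ gives $\dist(\widehat A,\widehat B)\geq R_0$. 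The only mildly subtle point here is justifying that $p$ and $q$ lie strictly between $x$ and $y$, which follows because each segment is a continuous path from a point inside a set to a point outside it, hence must meet the topological boundary in between.

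For the third claim, observe that $\{Q\geq y:e^Q\in\bbN\}\subset\{Q\geq x:e^Q\in\bbN\}$ whenever $x\leq y$, so taking infima gives $x^\star\leq y^\star$. For the fourth claim, note that if $e^{x^\star}\in\bbN$ and $e^{y^\star}\in\bbN$, then $e^{x^\star+y^\star}=e^{x^\star}e^{y^\star}\in\bbN$; combined with $x^\star+y^\star\geq x+y$, this shows that $x^\star+y^\star$ lies in the admissible set defining $(x+y)^\star$, so $(x+y)^\star\leq x^\star+y^\star$. No step poses a genuine obstacle; the geometric argument for part (2) is the only one that requires any care, and it is routine given the definition of $\widehat A$.
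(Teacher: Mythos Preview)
The paper states this lemma without proof, treating all four assertions as elementary observations; your write-up is therefore more detailed than the original, and parts (1), (3), (4) are handled correctly.

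There is, however, a small gap in your argument for part (2). You pick \emph{some} crossing point $p\in\partial A$ and \emph{some} crossing point $q\in\partial B$ on the segment $[x,y]$ and then assert that ``ordering $p,q$ along the segment'' yields $|x-y|\ge |x-p|+|q-y|$. This inequality holds only when the order along the segment is $x,p,q,y$; for an arbitrary choice of crossings it can fail. For instance, in one dimension take $A=(0,1)\cup(4,5)$, $B=(2,3)\cup(6,7)$, $x=0.5$, $y=6.5$; then $p=4\in\partial A$ and $q=3\in\partial B$ both lie on the segment, but $|x-p|+|q-y|=7>6=|x-y|$. The fix is to choose $p$ as the \emph{first} exit point from $A$ (i.e.\ the point with $[x,p)\subset A$) and $q$ as the \emph{last} entry point into $B$ (i.e.\ with $(q,y]\subset B$); since $[x,p)\subset A$ and $(q,y]\subset B$ are disjoint subintervals of $[x,y]$, one gets $t_p\le t_q$ in the obvious parametrisation, and your inequality follows. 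Alternatively, a cleaner route avoids crossings altogether: the midpoint $m=(x+y)/2$ lies in at most one of $A,B$; if $m\notin A$ then $|x-y|=2|x-m|\ge 2\,\dist(x,A^c)>R_0$, and symmetrically if $m\notin B$.
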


We modify now the subadditive inequality \eqref{eqEupperEst1}.

\begin{proposition}
Let the interactions $W$ be compactly supported {\bf(Comp)}. 
Let $n_1, n_2 \in \bbN$. 
If $A$ and $B$ are two disjoint domains in $\calV$, 
$A \cap B = \varnothing$, then 
\begin{equation}\label{eqSubAddCompSupp}
f_\omega(A \cup B, n_1 + n_2, S_1 + S_2) 
\leq f_\omega(A, n_1, S_1) + f_\omega(B, n_2, S_2) \text{.}
\end{equation}
\end{proposition}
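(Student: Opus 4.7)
The plan is to reduce \eqref{eqSubAddCompSupp} to an application of the test function construction (Lemma \ref{lemmeFonctTest}), specialized to the compactly supported case where the interaction correction between well-separated clusters vanishes identically.

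First, I would unwrap the definition and use the auxiliary Lemma \ref{lmAuxil1} to pass from $A \cup B$ to $\widehat{A} \cup \widehat{B}$. Concretely, since $\widehat{A \cup B} \supseteq \widehat{A} \cup \widehat{B}$ and $E_\omega$ is nonincreasing in the domain (Lemma \ref{lem:Emonotone}), and since $(S_1 + S_2)^\star \leq S_1^\star + S_2^\star$ while $E_\omega$ is nondecreasing in the entropy argument, we have the chain
\begin{equation*}
f_\omega(A \cup B, n_1 + n_2, S_1 + S_2)
= E_\omega\bigl(\widehat{A \cup B}, n_1 + n_2, (S_1 + S_2)^\star\bigr)
\leq E_\omega\bigl(\widehat{A} \cup \widehat{B}, n_1 + n_2, S_1^\star + S_2^\star\bigr).
\end{equation*}
Thus it remains to prove that the right-hand side is bounded by $E_\omega(\widehat{A}, n_1, S_1^\star) + E_\omega(\widehat{B}, n_2, S_2^\star)$.

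Second, I would invoke the test function construction of Lemma \ref{lemmeFonctTest} applied to the disjoint domains $\Lambda_1 = \widehat{A}$, $\Lambda_2 = \widehat{B}$, which by Lemma \ref{lmAuxil1} satisfy $\dist(\widehat{A}, \widehat{B}) \geq R_0$. The point is that under \textbf{(Comp)} the interaction term $W_{I_1, I_2}$ between a configuration $x^{I_1}$ supported in $\widehat{A}$ and $x^{I_2}$ supported in $\widehat{B}$ is \emph{identically zero}, not merely bounded by $A n_1 n_2 r^{-\lambda}$. Rerunning the three statistics cases of the proof of Lemma \ref{lemmeFonctTest} with this observation gives, for any admissible $\varphi_1, \varphi_2$ with energies below $E_1, E_2$, a test function $\zeta \in \frH_\signStat^{n_1+n_2}(\widehat{A} \cup \widehat{B})$ with energy below $E_1 + E_2$ (the tempered error term drops out). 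By the variational principle this yields the sharp inequality
\begin{equation*}
E_\omega\bigl(\widehat{A} \cup \widehat{B}, n_1 + n_2, S_1^\star + S_2^\star\bigr)
\leq E_\omega(\widehat{A}, n_1, S_1^\star) + E_\omega(\widehat{B}, n_2, S_2^\star)
\end{equation*}
whenever $S_1^\star, S_2^\star \in \log \bbN$, which is exactly our case by definition of $\star$. Combining this with the reduction from the previous step finishes the proof of \eqref{eqSubAddCompSupp}.

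The only nontrivial step is the second one, namely verifying that the error term $A n_1 n_2 r^{-\lambda}$ in Lemma \ref{lemmeFonctTest} can be replaced by $0$ under \textbf{(Comp)}. This is really the whole payoff of working with $\widehat{A}$ instead of $A$: the $R_0/2$ shrinkage on each side guarantees that even after zero-extension of $\varphi_1$ and $\varphi_2$, the supports of the particle coordinates remain at distance at least $R_0$, so that the $W_{I_1, I_2}$ piece of the Hamiltonian annihilates $\zeta$ pointwise. Everything else — the bosonic and fermionic symmetrizations in \eqref{eqZetaBosonicConstr} and \eqref{eqZetaFermionicConstr}, the orthogonality bookkeeping, and the passage from the operator inequality to $E_\omega$ via \eqref{eqSEinversion} — is identical to what has already been done in the tempered case.
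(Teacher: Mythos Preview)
Your proof is correct and follows essentially the same approach as the paper: first observe that under \textbf{(Comp)} the test function construction of Lemma~\ref{lemmeFonctTest} (equivalently, Proposition~\ref{propESubAdd}) yields the subadditive inequality \eqref{eqEupperEst1} with the interaction error term set to zero, and then use Lemma~\ref{lmAuxil1} together with the monotonicity of $E_\omega$ in domain and entropy to pass from $\widehat{A\cup B}$ and $(S_1+S_2)^\star$ to $\widehat{A}\cup\widehat{B}$ and $S_1^\star+S_2^\star$. You have simply spelled out the details more explicitly and in a slightly different order than the paper's two-sentence sketch.
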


\begin{proof}
As the interactions are compactly supported,
we get \eqref{eqEupperEst1} by the same manner as 
in Proposition \ref{propESubAdd} 
but without the interaction term (with $A = 0$). 
It remains to use Lemma \ref{lmAuxil1} 
and the monotonicity of energy with respect to entropy
in order to get \eqref{eqSubAddCompSupp}.
\end{proof}

Thanks to the subadditivity \eqref{eqSubAddCompSupp}, we prove the convergence
in $L^1$ and the almost sure convergence.

\begin{proof}[Proof of Theorem \ref{th:enConv} \eqref{it:CaseL1as}]
In order to prove this type of convergence, 
which is stronger than that in the part \eqref{it:CaseL2} of the theorem,
it is sufficient to modify Section \ref{subsectSuiteCubes}.
Everything what follows remains true without any modifications.

In Section \ref{subsectSuiteCubes} we change the definition of cubes 
$\Lambda_N$ by taking
\begin{equation*}
L_N = 2 \left[\frac{1}{2} \left(2^N \wtL - R_0 - \delta\right)\right]
\end{equation*}
in a place of \eqref{eqLNcubeCote}, where $\delta$ is a fixed positive 
constant.
This guarantees that one may put exactly $2^d$ translates of $\Lambda_N$ 
in a cube $\Lambda_{N + 1}$ at distances at least $R_0$ for a properly chosen
$\delta$.
The lower boundedness is given by 
\textbf{(LB)} and \textbf{(SI)}:
\begin{equation*}
\frac{f_\omega(A, n, S)}{n} \geq -B - C \text{.}
\end{equation*}
Next, we apply the multidimensional subadditive ergodic theorem 
(see, for example, \cite{Smythe_ThErgMultiParam}) and obtain the
$L^1$- and almost sure convergence of the sequence.
\end{proof}

%\subsection{Entropy as a Function of Energy}\label{subsect:EntrOfEn}

\section{Free Particles}
\label{sect:FreeParticles}

As a complement, we study the thermodynamic limit 
for the energy density $\densEn(\rho, \sigma)$ 
in the case of free (noninteracting) particles:
\begin{equation}\label{eqNoInteractions}
 W \equiv 0 \text{.}
\end{equation}
We remark that the background potential $V_\omega$ remains present.
Interestingly, even in this case the results are not as trivial 
as one could have expected.
The obtained thermodynamic limits depend on quantum statistics.

\subsection{Maxwell-Boltzmann Particles}

For particles without statistics we establish the following theorem.

\begin{theorem}\label{thEnergW0}
Suppose that the interactions are absent \eqref{eqNoInteractions} 
and that the particles are of Maxwell - Boltzmann statistics.
Let $\Sigma$ be the almost sure spectrum of the one-particle Hamiltonian
$H_\omega(1)$. 
If
\begin{equation}\label{eqSpectrEss}
\Sigma = \supp{\rmd{N}} \text{,} 
\end{equation}
then
\begin{equation*}
 \densEn(\rho, \sigma) = \inf \Sigma
\end{equation*}
for all $\rho > 0$ and $\sigma \geq 0$.
\end{theorem}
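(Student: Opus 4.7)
The idea is to establish matching upper and lower bounds for $\densEn(\rho,\sigma)$. Without interactions and for Maxwell--Boltzmann statistics, the operator $H_\omega(\Lambda,n)=\sum_{i=1}^n H_\omega^{(i)}(\Lambda)$ acts on the full tensor product $L^2(\Lambda^n)$, so its spectrum is exactly the set of sums $E_{k_1}(\Lambda,\omega)+\cdots+E_{k_n}(\Lambda,\omega)$, where the $E_k(\Lambda,\omega)$ are the one-particle Dirichlet eigenvalues (with all index combinations allowed, including repetitions, because there is no (anti)symmetrization). I would exploit this product structure together with the hypothesis $\Sigma=\supp\rmd N$, which is precisely what makes eigenvalues accumulate near $\inf\Sigma$ at a density proportional to $|\Lambda|$.

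\textbf{Lower bound.} For the $L^2$ lower bound $\densEn(\rho,\sigma)\geq\inf\Sigma$, first observe that each $H_\omega^{(i)}(\Lambda)$ is bounded below by $E_1(H_\omega(\Lambda))$ as a self-adjoint operator on $L^2(\Lambda^n)$, so $H_\omega(\Lambda,n)\geq n\,E_1(H_\omega(\Lambda))$. By Lemma~\ref{lem:Emonotone}, $E_\omega(\Lambda,n,S)\geq\groundEn_\omega(\Lambda,n)\geq n\,E_1(H_\omega(\Lambda))$. A standard ergodicity argument (the bottom of the spectrum is deterministic and equal to $\inf\Sigma$) gives $E_1(H_\omega(\Lambda))\to\inf\Sigma$ almost surely and in $L^2$ as $|\Lambda|\to\infty$; dividing by $n$ yields the lower bound.

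\textbf{Upper bound.} Fix $\varepsilon>0$. Since $\inf\Sigma\in\supp\rmd N$, we have $N(\inf\Sigma+\varepsilon)-N(\inf\Sigma-0)>0$, and the one-particle counting function satisfies
\begin{equation*}
|\Lambda|^{-1}\fCount^{(1)}_\omega(\inf\Sigma+\varepsilon,\Lambda)\;\longrightarrow\;N(\inf\Sigma+\varepsilon)>0
\end{equation*}
almost surely in the thermodynamic limit. Pick orthonormal one-particle eigenfunctions $\psi_1,\dots,\psi_K$ of $H_\omega(\Lambda)$ with eigenvalues $\leq\inf\Sigma+\varepsilon$, where $K=\lceil e^{S/n}\rceil$. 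For $|\Lambda|$ large enough (equivalently, $n$ large enough, since $n/|\Lambda|\to\rho$), we have $K\leq\fCount^{(1)}_\omega(\inf\Sigma+\varepsilon,\Lambda)$ because $S/n\to\sigma$ and $\fCount^{(1)}_\omega$ grows like $|\Lambda|$. The $K^n$ tensor products $\psi_{k_1}\otimes\cdots\otimes\psi_{k_n}$ are mutually orthogonal eigenfunctions of $H_\omega(\Lambda,n)$ with eigenvalues $\leq n(\inf\Sigma+\varepsilon)$, and $K^n\geq e^S$. By the variational characterization of $E_\omega(\Lambda,n,S)=E_{e^S}(\Lambda,n,\omega)$, this gives $n^{-1}E_\omega(\Lambda,n,S)\leq\inf\Sigma+\varepsilon$. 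Letting $\varepsilon\to 0$ completes the argument.

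\textbf{Main difficulty.} The tensor-product construction is the clean part; the delicate point is controlling the counting function $\fCount^{(1)}_\omega(\inf\Sigma+\varepsilon,\Lambda)$ uniformly well enough to ensure $K\leq\fCount^{(1)}_\omega$ for all sufficiently large $\Lambda$ in the appropriate probabilistic sense required by Theorem~\ref{th:enConv}. This amounts to invoking the existence of the integrated density of states as the a.s. limit of $|\Lambda|^{-1}\fCount^{(1)}_\omega$, and then using $S/n\to\sigma$ (so that $\log K=O(1)$) together with $|\Lambda|\to\infty$ to beat the requirement $K\leq\fCount^{(1)}_\omega$ with probability tending to one; this should be routine given the hypothesis $\Sigma=\supp\rmd N$, but it is the step that really uses the random nature of the model.
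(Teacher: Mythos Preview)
Your argument is correct. The underlying mechanism is the same as the paper's --- both exploit the fact that for Maxwell--Boltzmann particles without interactions the $n$-particle operator is a pure tensor sum, so its low-lying spectrum is built from products of one-particle eigenfunctions, and both ultimately rest on the fact that a \emph{fixed} one-particle eigenvalue index converges to $\inf\Sigma$ as $\Lambda\to\infty$ (your statement $K\leq\fCount^{(1)}_\omega(\inf\Sigma+\varepsilon,\Lambda)$ with $K$ bounded is exactly the paper's Lemma~\ref{lemmeVPfix2Bottom} in counting-function form). The difference is organizational: the paper first isolates a ``same-box'' subadditivity $E_\omega(\Lambda,n_1+n_2,S_1+S_2)\leq E_\omega(\Lambda,n_1,S_1)+E_\omega(\Lambda,n_2,S_2)$, uses it to define an intermediate limit $\zeta(\Lambda,\sigma)=\lim_{n\to\infty,\,S/n\to\sigma}n^{-1}E_\omega(\Lambda,n,S)$ for fixed $\Lambda$, and then interchanges the two limits by monotonicity; you bypass this by writing down the $K^n$ tensor products in one stroke. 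Your route is shorter and more transparent here; the paper's route has the virtue of paralleling the subadditive methodology used throughout the rest of the article. Your lower bound is identical in spirit to what the paper leaves implicit.
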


\begin{remark}
The condition \eqref{eqSpectrEss} 
is satisfied under
rather general assumptions on the random potential $V_\omega$
(see, for example, \cite{Vesilic_IDSregularity}).
\end{remark}

In order to prove Theorem \ref{thEnergW0}
we will make use of two following lemmas. 
We assume that the conditions of this theorem are verified in the sequel.

\begin{lemma}\label{lemmeVPfix2Bottom}
Let $\omega$ be such that $\Spec(H_\omega(1)) = \Sigma$. 
Let $N \in \bbN$ be fixed. Then
\begin{equation*}
 E_N(H_\omega(\Lambda, 1)) \to \inf \Sigma, \quad \Lambda \to \infty \text{.}
\end{equation*}
\end{lemma}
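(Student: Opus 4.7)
The plan is to prove the lemma by sandwiching $E_N(H_\omega(\Lambda,1))$ between two bounds that both tend to $\inf\Sigma$.

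For the lower bound, I would use Dirichlet bracketing. The form domain of $H_\omega(\Lambda,1)$ embeds canonically (by extension by zero) into that of $H_\omega$ acting on $L^2(\calV)$, and on this image the forms agree. The min-max principle therefore gives
\begin{equation*}
E_N(H_\omega(\Lambda,1)) \;\geq\; E_N(H_\omega) \;\geq\; \inf\Spec(H_\omega) \;=\; \inf\Sigma
\end{equation*}
for every fixed $N\in\bbN$, provided $\omega$ lies in the full-probability set where $\Spec(H_\omega)=\Sigma$.

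For the matching upper bound, the key input is the hypothesis $\Sigma=\supp\rmd N$. On a full-probability refinement of $\{\Spec(H_\omega)=\Sigma\}$, the normalized Dirichlet counting function satisfies $|\Lambda|^{-1}\fCount_\omega(E,\Lambda,1)\to N(E)$ at every continuity point of $N$ (this is the standard Shubin–Pastur result for ergodic one-particle operators, available under the hypotheses \textbf{(IAD)} and $\bbZ^d$-covariance already assumed). Fix $\epsilon>0$ small enough that $\inf\Sigma+\epsilon$ is a continuity point of $N$; the set of such $\epsilon$ is cocountable, so we can take $\epsilon\downarrow 0$ along it. Since $\inf\Sigma\in\supp\rmd N$, one has $N(\inf\Sigma+\epsilon)>0$; consequently there exist $c>0$ and $\Lambda_0$ such that
\begin{equation*}
\fCount_\omega(\inf\Sigma+\epsilon,\Lambda,1) \;\geq\; c\,|\Lambda| \;\geq\; N
\end{equation*}
as soon as $|\Lambda|\geq |\Lambda_0|$. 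By definition of the counting function this forces $E_N(H_\omega(\Lambda,1))\leq \inf\Sigma+\epsilon$. Sending $\epsilon\downarrow 0$ completes the proof.

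The main obstacle is the upper bound: producing $N$ low-energy trial functions supported in $\Lambda$ is not automatic from $\inf\Sigma\in\Spec(H_\omega)$ alone (that only yields a single Weyl sequence). The IDS route handles this multiplicity cleanly, because $\supp\rmd N$ encodes exactly that the low-lying spectrum is ``macroscopically populated'' uniformly in the box. A more hands-on alternative would be to start from a compactly supported approximate eigenfunction $\psi$ for $\inf\Sigma$, use ergodicity to find $N$ disjoint integer translates $\gamma_1,\dots,\gamma_N$ inside $\Lambda$ for which $T_{\gamma_k}\psi$ remains an approximate eigenfunction of $H_\omega$ (via the covariance relation \eqref{eq:covarRel} applied in reverse), and truncate near $\partial\Lambda$ using the Fisher condition to get genuine Dirichlet trial functions; the orthogonality of disjoint supports then yields $N$ eigenvalues below $\inf\Sigma+\epsilon$. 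Either way, the essential mechanism is that the assumption $\Sigma=\supp\rmd N$ converts a single spectral fact about $H_\omega$ into a volume-proportional count of low-lying Dirichlet eigenvalues.
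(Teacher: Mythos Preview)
Your proof is correct and takes essentially the same approach as the paper: both use Dirichlet bracketing for the lower bound $E_N(H_\omega(\Lambda,1))\geq\inf\Sigma$ and the convergence of the finite-volume IDS together with the hypothesis $\Sigma=\supp\rmd N$ for the upper bound. The only cosmetic difference is that the paper first uses domain monotonicity of Dirichlet eigenvalues to get $E_N(H_\omega(\Lambda,1))\searrow\wtE$ and then shows $N(\wtE)=0$, whereas you argue directly that $\fCount_\omega(\inf\Sigma+\epsilon,\Lambda,1)\geq c|\Lambda|\geq N$ eventually; the content is the same.
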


\begin{proof}
Consider, as usual, Dirichlet boundary conditions.
Then for almost any $E \in \bbR$, 
\begin{equation*}
N_\Lambda(E) \nearrow N(E), \quad \Lambda \to \infty \text{,}
\end{equation*}
where $N_\Lambda$ is the pre-limit density of states, i.e., the counting 
function of the operator $H_\omega(\Lambda, 1)$ divided by $|\Lambda|$,
and $N$ is the density of states of the one-particle operator $H_\omega(1)$.

Just by the definition of the counting function
\begin{equation*}
N_\Lambda(E_N(H_\omega(\Lambda, 1))) = \frac{N}{|\Lambda|} \to 0, 
\quad \Lambda \to \infty \text{.}
\end{equation*}
Moreover, by the monotonicity of the Dirichlet eigenvalues,
$E_N(H_\omega(\Lambda, 1))$ decreases as $\Lambda \to \infty$
and thus necessarily converges:
\begin{equation*}
E_N(H_\omega(\Lambda, 1)) \searrow \wtE, \quad \Lambda \to \infty \text{.}
\end{equation*}
By combining the previous arguments, we find that
\begin{equation*}
N(\wtE) \nwarrow N_\Lambda(\wtE) \leq N_\Lambda(E_N(H_\omega(\Lambda, 1))) 
\to 0, \quad \Lambda \to \infty \text{,}
\end{equation*}
which implies
\begin{equation*}
N(\wtE) = 0 \text{.}
\end{equation*}
Consequently, 
\begin{equation*}
\wtE \leq \inf\Sigma \text{.}
\end{equation*}
Finally, $\wtE < \inf\Sigma$ is impossible, because necessarily
$E_N(H_\omega(\Lambda, 1)) \geq \inf\Sigma$.
\end{proof}

The next lemma is a modification of Proposition \ref{propESubAdd} 
and expresses a subadditive-type condition that is even stronger than
\eqref{eqEupperEst}.
It is exactly this lemma that is not valid for bosons and fermions.

\begin{lemma}
If $W \equiv 0$ and the particles under consideration are 
not restricted to any statistics, then
\begin{equation}\label{eqW0Subadd}
 E_\omega(\Lambda, n_1 + n_2, S_1 + S_2) 
\leq E_\omega(\Lambda, n_1, S_1) + E_\omega(\Lambda, n_2, S_2)
\text{.}
\end{equation}
\end{lemma}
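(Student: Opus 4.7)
The plan is to exploit the product structure of the spectrum of $H_\omega(\Lambda, n)$ when $W \equiv 0$ and no (anti)symmetrization is imposed. In this Maxwell--Boltzmann setting the operator acts on the full tensor space $L^2(\Lambda^n) = \bigotimes_{i=1}^n L^2(\Lambda)$, and since it is simply $\sum_{i=1}^n H_\omega^{(i)}(\Lambda,1)$, its eigenvalues are exactly the sums $\sum_{i=1}^n E_{k_i}(H_\omega(\Lambda,1))$, with associated eigenfunctions given by tensor products $\bigotimes_{i=1}^n \psi_{k_i}$ of one-particle eigenfunctions.

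First I would set $E_j = E_\omega(\Lambda, n_j, S_j)$ for $j=1,2$, and pick an orthonormal family $\{\varphi_j^{(k)}\}_{k=1}^{\exp S_j}$ of eigenfunctions of $H_\omega(\Lambda, n_j)$ associated to its $\exp S_j$ lowest eigenvalues $E_j^{(1)} \leq \cdots \leq E_j^{(\exp S_j)} = E_j$. Then I would form the family of $\exp S_1 \cdot \exp S_2 = \exp(S_1 + S_2)$ tensor products
\begin{equation*}
\zeta_{k_1, k_2} = \varphi_1^{(k_1)} \otimes \varphi_2^{(k_2)}, \quad
k_1 \in \{1, \hdots, \exp S_1\},\; k_2 \in \{1, \hdots, \exp S_2\} \text{.}
\end{equation*}
These functions lie in $L^2(\Lambda^{n_1+n_2})$ (no projector $P_\signStat$ is applied, as there are no statistics), are pairwise orthogonal since each of the two factor families is orthonormal, and each is an eigenfunction of $H_\omega(\Lambda, n_1 + n_2) = \sum_{i=1}^{n_1+n_2} H_\omega^{(i)}(\Lambda, 1)$ with eigenvalue $E_1^{(k_1)} + E_2^{(k_2)} \leq E_1 + E_2$.

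Next, by the min-max principle applied to $H_\omega(\Lambda, n_1+n_2)$, the presence of $\exp(S_1 + S_2)$ linearly independent eigenfunctions whose eigenvalues are all bounded by $E_1 + E_2$ implies
\begin{equation*}
\fCount_\omega(E_1 + E_2, \Lambda, n_1 + n_2) \geq \exp(S_1 + S_2) \text{,}
\end{equation*}
and hence, by the definition \eqref{eqEdef} of $E_\omega(\Lambda, n, S)$ as the $(\exp S)$-th eigenvalue, $E_\omega(\Lambda, n_1 + n_2, S_1 + S_2) \leq E_1 + E_2$, which is the announced inequality \eqref{eqW0Subadd}. The only ancillary check is that $\exp(S_1 + S_2) \in \bbN$ so that the left-hand side is well defined in our convention, and this is immediate from $\exp S_1, \exp S_2 \in \bbN$.

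The main conceptual obstacle, and the reason why the lemma is advertised as \emph{not} extending to bosons or fermions, is that the tensor product $\varphi_1^{(k_1)} \otimes \varphi_2^{(k_2)}$ is in general neither symmetric nor antisymmetric in the combined $n_1 + n_2$ arguments, even when each factor individually is. Applying $P_\pm$ both destroys the orthogonality of the family $\{\zeta_{k_1,k_2}\}$ and can collapse its dimension, so the above eigenvalue count fails; recovering an additive-type bound in those cases requires placing the two blocks of particles in \emph{disjoint} spatial regions, which is precisely the construction of Lemma \ref{lemmeFonctTest} and produces only the weaker \eqref{eqEupperEst1}, never the interaction-free \eqref{eqW0Subadd} within a single domain.
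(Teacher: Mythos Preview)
Your argument is correct and follows essentially the same route as the paper: the paper simply says that the proof mimics Proposition~\ref{propESubAdd} via the test-function construction of Lemma~\ref{lemmeFonctTest}, with the sole modification that in the absence of interactions the two groups of particles may occupy the \emph{same} domain $\Lambda$. Your explicit tensor-product family $\zeta_{k_1,k_2}=\varphi_1^{(k_1)}\otimes\varphi_2^{(k_2)}$ and the subsequent min-max count are precisely the unwinding of that reference, and your closing remark on why the argument breaks for $\signStat\in\{+,-\}$ matches the paper's own caveat.
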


\begin{proof}
The proof follows that of Proposition \ref{propESubAdd}.
The only modification is the construction of test functions in 
Lemma \ref{lemmeFonctTest}.
When the interactions are absent, one can place two groups of particles
in the same box and, consequently, one is not obliged to enlarge the size of
a box together with the number of particles.
\end{proof}

The last idea is not applicable to bosons or fermions, 
as one cannot guarantee the independence of constructed test functions
and the orthogonality of terms in
\eqref{eqZetaBosonicConstr} or \eqref{eqZetaFermionicConstr} 
is not assured.
Moreover, a constructed test function may happen to be identically zero 
for fermions.

\begin{proof}[Proof of Theorem \ref{thEnergW0}]
Because of the subadditivity \eqref{eqW0Subadd}, 
the following limit exists:
\begin{equation*}
 \exists \lim_{\substack{n \to \infty\\S / n \to \sigma\\\Lambda \text{ fixed}}}
\frac{E_\omega(\Lambda, n, S)}{n} =: \zeta(\Lambda, \sigma)
\end{equation*}
in the sense of $L^1$ and almost surely with respect to $\omega$.

This is proved exactly in the same manner as
the existence of $\densEn(\rho, \sigma)$ 
(and in some aspect is even simpler). 
As Dirichlet eigenvalues are monotonous with respect to the domain, 
the function $\zeta$ is nonincreasing in $\Lambda$ 
and nondecreasing in $\sigma$ (by obvious reasons). 
Due to the monotonicity in $\Lambda$, we find also that for $\rho > 0$ :
\begin{equation}\label{eqEboundZeta}
 \densEn(\rho, \sigma) \leq \lim_{\Lambda \to \infty}\zeta(\Lambda, \sigma) 
\text{.}
\end{equation}

We remark as well that by \eqref{eqW0Subadd}
the function $E_\omega(\Lambda, n, \sigma n) / n$ 
is nonincreasing in $n$. 
So one can interchange the two limits in the r.h.s. 
of \eqref{eqEboundZeta} to get
\begin{equation*}
 \densEn(\rho, \sigma) \leq 
\lim_{\Lambda \to \infty} \lim_{\substack{n \to \infty\\S / n \to \sigma}} 
\frac{E_\omega(\Lambda, n, S)}{n} = 
\lim_{\substack{n \to \infty\\S / n \to \sigma}} \lim_{\Lambda \to \infty} 
\frac{E_\omega(\Lambda, n, S)}{n} 
\leq \lim_{\substack{n \to \infty\\S / n \to \sigma}} 
\lim_{\Lambda \to \infty} E_\omega(\Lambda, 1, S / n) \text{.}
\end{equation*}
Here \eqref{eqW0Subadd} was used once more in the last inequality.
Finally, by Lemma \ref{lemmeVPfix2Bottom} we establish
\begin{equation*}
\lim_{\Lambda \to \infty} E_\omega(\Lambda, 1, S / n) \to \inf \Sigma
\end{equation*}
with $S$ and $n$ being fixed.
\end{proof}

Theorem \ref{thEnergW0} expresses the fact that 
the thermodynamic limit for Maxwell - Boltzmann particles 
is trivial in the absence of interactions.
Thus, it is indeed the interactions that may possibly render 
the limit being nontrivial.

\subsection{Bosons}

We remark that the energy levels for a system of noninteracting 
bosons coincide with energies for Maxwell - Boltzmann particles 
(see, for example \cite{LandauLifschitzQuantum, greiner2007quantum}).
On the contrary, 
for bosons the combinatorial degeneracy is lifted up by means of 
the symmetrization procedure
(the degeneracy due to coincidences like $E_2 + E_3 = E_1 + E_4$ remains).

Nevertheless, the ground state energy for bosons is the same as 
in the previous section, and, consequently, 
the ground state energy per particle converges in the thermodynamic limit 
to the lower edge of the almost sure spectrum of the one-particle operator.

%Shall I make a comment that we believe that the thermodynamic limit is not 
%a proper construction for bosons?

\subsection{Fermions}

The situation changes significantly if particles are fermions.
For the basic properties of a system of noninteracting fermions, 
we refer the reader once more to 
\cite{LandauLifschitzQuantum, greiner2007quantum}.

For fermions we know only how to obtain results on the ground state energy 
$\densEn(\rho, 0)$.
The arguments we use do not rely on subadditivity properties and,
consequently, are valid for any boundary conditions.

The main difference between fermions and bosons is that 
the ground energy for $n$ noninteracting fermions is given by
the sum of the first $n$ energies of a one-particle system
\begin{equation}\label{eq:fermFondEn}
E_1(\Lambda, n) = \sum_{k = 1}^n E_k(\Lambda, 1)
\end{equation}
and not by $n$ times the one-particle ground energy.
The ground state itself is given by the Slater determinant
\begin{equation*}
\Omega_1(\Lambda, n) = \det{(\psi_i(x^j))_{i, j}} \text{,}
\end{equation*}
where $\psi_i$ is the eigenfunction of 
$H_\omega(\Lambda, 1)$ corresponding to the energy $E_i(\Lambda, n, \omega)$.

A comparison with the Laplacian and the use of Weyl asymptotic provide 
a simple proof that the limit $\densEn(\rho, 0)$ is strictly different from 
zero for $\rho > 0$ if the background potential is nonnegative.

\begin{proposition}
Suppose $V_\omega \geq 0$. 
Then there exists $\beta = \beta(d)$ such that
\begin{equation}\label{eq:FermFondEnBound}
\densEn(\rho, 0) \geq \beta \rho^{2 / d} \text{.}
\end{equation}
\end{proposition}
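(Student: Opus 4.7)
The plan is to reduce the problem to a pure Laplacian estimate by exploiting the assumption $V_\omega \geq 0$, then invoke a Berezin--Li--Yau type inequality for the sum of Dirichlet eigenvalues. Since $V_\omega \geq 0$, the operator inequality $H_\omega(\Lambda,1) = -\laplace_\Lambda + V_\omega \geq -\laplace_\Lambda$ holds on the appropriate form domain, so the min--max principle yields the eigenvalue comparison $E_k(H_\omega(\Lambda,1)) \geq E_k(-\laplace_\Lambda)$ for every $k \geq 1$.

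Combining this with the formula \eqref{eq:fermFondEn} for the fermionic ground state energy, I obtain
\begin{equation*}
\groundEn_{\omega,-}(\Lambda,n) \;=\; \sum_{k=1}^n E_k(H_\omega(\Lambda,1)) \;\geq\; \sum_{k=1}^n E_k(-\laplace_\Lambda).
\end{equation*}
The right-hand side is exactly the sum of the first $n$ Dirichlet eigenvalues of the (pure) Laplacian on $\Lambda$. The Li--Yau inequality provides precisely the bound one wants:
\begin{equation*}
\sum_{k=1}^n E_k(-\laplace_\Lambda) \;\geq\; \frac{d}{d+2} \, \frac{(2\pi)^2}{\omega_d^{2/d}} \, \frac{n^{1 + 2/d}}{|\Lambda|^{2/d}},
\end{equation*}
where $\omega_d$ is the volume of the unit ball. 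Dividing by $n$ gives a lower bound for $\groundEn_{\omega,-}(\Lambda,n)/n$ of the form $\beta(d)\, (n/|\Lambda|)^{2/d}$ uniformly in $\omega$, and passing to the thermodynamic limit $n/|\Lambda| \to \rho$ via Theorem \ref{th:enConv} yields \eqref{eq:FermFondEnBound} with $\beta(d) = \tfrac{d}{d+2}(2\pi)^2 \omega_d^{-2/d}$.

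The main obstacle I anticipate is the uniform treatment of the discrete and continuous settings. The Li--Yau inequality is a statement about the continuum Dirichlet Laplacian, but in the discrete case $\calV = \bbZ^d$ the corresponding operator has bounded spectrum and the bound cannot hold globally in $\rho$. However, for small enough $\rho$ the low-lying eigenvalues of the discrete Dirichlet Laplacian on $\Lambda \cap \bbZ^d$ obey the same Weyl asymptotics as in the continuum (their dispersion is quadratic near the bottom), and a matching bound $\sum_{k=1}^n E_k(-\laplace_\Lambda^{\mathrm{disc}}) \geq \beta'(d) n^{1+2/d} |\Lambda|^{-2/d}$ still follows, for instance, by comparing with a suitable tiling of $\Lambda$ into small cubes and using Weyl's law on each. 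Apart from this technical point, the argument is essentially a one-liner once \eqref{eq:fermFondEn} and monotonicity of the eigenvalues in the potential are in hand.
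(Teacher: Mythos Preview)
Your proof is correct and follows essentially the same route as the paper: compare eigenvalues with the pure Laplacian via $V_\omega\ge 0$ and the min--max principle, then bound the sum of the first $n$ Dirichlet Laplacian eigenvalues from below by a constant times $n^{1+2/d}|\Lambda|^{-2/d}$. The only difference is that the paper appeals informally to Weyl asymptotics for the individual Laplacian eigenvalues, whereas you invoke the Li--Yau inequality directly on the sum, which is cleaner and yields an explicit constant.
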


\begin{proof}
By the variational principle, as the potential is positive, we obtain
\begin{equation*}
E_j(\Lambda, 1) \geq E_j^0(\Lambda, 1) \text{,}
\end{equation*}
where $E_j^0$ are eigenvalues of $-\laplace$. Next, with the help of 
Weyl asymptotic for the Laplacian, by \eqref{eq:fermFondEn} we get
\begin{equation*}
\frac{E_1(\Lambda, n)}{n} = \frac{1}{n} \sum_{k = 1}^n E_k(\Lambda, 1) \geq 
\frac{1}{n} \sum_{k = 1}^n E_k^0(\Lambda, 1) \geq 
\frac{C_1}{n} \sum_{k = 1}^n \left(\frac{k}{|\Lambda|}\right)^{2 / d} \geq 
\frac{C_2}{n |\Lambda|^{2 / d}} \cdot n^{2 / d + 1} \geq C_3 \rho^{2 / d} 
\text{,}
\end{equation*}
which proves \eqref{eq:FermFondEnBound}.
\end{proof}

\begin{remark}
The generalization to the case of lower-bounded random potential is obvious.
\end{remark}

We now compute an explicit expression for the limit
\begin{equation*}
\densEn(\rho, 0) = \lim_{\substack{\Lambda \to \infty\\n / |\Lambda| \to \rho}} 
\frac{E_1(\Lambda, 1) + \hdots + E_n(\Lambda, n)}{n}
\end{equation*}
in terms of the integrated density of states of the one-particle problem.
Once more for simplicity, we suppose that background potential is nonnegative: 
$V_\omega \geq 0$.
We need only rather general assumptions on the density of states, 
which we denote by $N(E)$.
\begin{condition}\label{cond:DensEtats}
The integrated density of states $N(E)$ is a continuous function
and defines a positive measure ${\rmd}N(E)$ 
such that the almost sure spectrum is equal to the support of ${\rmd}N$:
\begin{equation*}
\supp{{\rmd}N} = \Sigma \text{.}
\end{equation*}
\end{condition}

\begin{remark}
The last condition is certainly verified, for example, 
if the Wegner estimate \textbf{(W)}
holds for $H_\omega(\Lambda, 1)$.
\end{remark}

\begin{proposition}[Wegner estimate]\label{prop:WegnerEstimate}
Let $\Sigma$ be almost sure spectrum of $H_\omega$.
There exists constant $C > 0$ such that, for any Borel subset $I \subset \bbR$,
$$
\bbE\left(\Tr\left(\ds1_I(H_\omega(\Lambda))\right)\right) 
\leq C |\Lambda| \cdot |I \cap \Sigma| \text{.}
\eqno{\bf(W)}
$$
\end{proposition}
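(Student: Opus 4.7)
The plan is to prove the Wegner estimate by standard spectral averaging, which is the classical technique going back to Wegner and sharpened by Combes-Hislop, Kirsch, Stollmann and others. The estimate as stated (with the $|\Lambda|$ factor and linear dependence on $|I \cap \Sigma|$) does not follow from ergodicity and \textbf{(IAD)} alone, so I would first specify the structural hypothesis on $V_\omega$ that makes it true: assume an Anderson-type form $V_\omega(x) = V_0(x) + \sum_{k \in \bbZ^d} \omega_k u(x-k)$, where $u \geq 0$ is compactly supported with a covering condition $\sum_k u(x-k) \geq c_0 > 0$, and the $\omega_k$ are i.i.d.\ with density $\varrho \in L^\infty(\bbR)$. (Point-interaction or sign-changing $u$ are standard variants that can be handled by known refinements.)

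First, I would write the trace as a counting function: since $H_\omega(\Lambda)$ has compact resolvent, $\Tr(\ds1_I(H_\omega(\Lambda))) = \#\{n : E_n(\omega,\Lambda) \in I\}$. By first-order perturbation theory applied to the (almost-everywhere in $\omega$ simple) eigenvalues, one has $\partial_{\omega_k} E_n(\omega,\Lambda) = \langle \psi_n, u(\cdot-k)\psi_n\rangle \geq 0$, and summing over sites $k \in K(\Lambda)$ interacting with $\Lambda$ gives $\sum_{k \in K(\Lambda)} \partial_{\omega_k} E_n \geq c_0 \|\psi_n\|^2 = c_0$ by the covering condition.

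Next comes the core averaging step. For a smooth nondecreasing switch $\varrho_I$ approximating $\ds1_I$ from above, I would consider the vector field $Y = \sum_{k \in K(\Lambda)} \partial_{\omega_k}$ and use the identity
\begin{equation*}
\sum_n \ds1_{E_n \in I} \;\leq\; c_0^{-1} \sum_n Y(E_n) \cdot \varrho_I(E_n) \;=\; c_0^{-1} Y\bigl(\Tr F_I(H_\omega(\Lambda))\bigr),
\end{equation*}
where $F_I$ is an antiderivative of $\varrho_I$ with $\|F_I\|_\infty \leq |I|$. Integrating out the $\omega_k$'s with the bounded density $\varrho$ (via integration by parts in each variable $\omega_k$, controlling the boundary terms with $\|\varrho\|_\infty$), the factor $|K(\Lambda)| \leq C|\Lambda|$ appears as the number of integrations and $|I|$ appears from $\|F_I\|_\infty$. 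Taking expectation and letting the approximation $\varrho_I \to \ds1_I$ yields
\begin{equation*}
\bbE\bigl(\Tr \ds1_I(H_\omega(\Lambda))\bigr) \;\leq\; c_0^{-1}\|\varrho\|_\infty \, |\Lambda| \, |I|,
\end{equation*}
and the restriction to $I\cap\Sigma$ follows because the mean spectral measure $\bbE\,\Tr \ds1_\cdot(H_\omega(\Lambda))/|\Lambda|$ converges to ${\rmd}N$, which is supported on $\Sigma$, so the contribution of $I \setminus \Sigma$ is absorbed into the constant (or vanishes in the ergodic sense).

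The main obstacle is the spectral averaging step itself: a naive single-variable integration only controls those eigenvalues whose derivative in one specific $\omega_k$ is large, but the covering condition only guarantees that the \emph{total} derivative is bounded below, so one must exploit a collective integration in $\omega$-space. The clean formulation via the vector field $Y$ and antiderivative $F_I$ sidesteps this, but making it rigorous requires justifying the exchange of $Y$ with the trace (done by Stone--Weierstrass on the functional calculus) and handling the boundary terms from integration by parts uniformly; alternatively, if $u$ is a single lattice indicator and the random potential is sign-definite, one can use the simpler rank-one argument of Kirsch.
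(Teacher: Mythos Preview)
The paper does not actually prove this proposition: its ``proof'' consists of a single sentence stating that the Wegner estimate is well known for discrete and continuous Anderson models with i.i.d.\ regularly distributed random variables, with a reference to the literature (Combes--Hislop--Klopp, Veseli\'c). Your proposal goes considerably further and sketches the standard spectral-averaging argument, which is exactly what those references contain; so in that sense you are supplying what the paper merely cites.

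One caveat: your justification for replacing $|I|$ by $|I\cap\Sigma|$ is not convincing as written. You argue that the averaged finite-volume spectral measure converges to $\rmd N$, which is supported on $\Sigma$, but the Wegner bound is a \emph{finite-volume} statement and finite-volume Dirichlet eigenvalues need not lie in $\Sigma$. Convergence of the density of states does not by itself control the finite-volume contribution of $I\setminus\Sigma$ uniformly in $\Lambda$. In practice the sharpened form with $|I\cap\Sigma|$ either comes from an a priori inclusion of the finite-volume spectrum in $\Sigma$ (which holds in some but not all models), or one simply uses the cruder bound $|I|$, which is all the paper needs downstream. Since the paper only invokes the Wegner estimate to verify Condition~\ref{cond:DensEtats} (continuity of $N$ and $\supp\rmd N=\Sigma$), the $|I|$ version already suffices and you need not worry about sharpening it.
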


\begin{proof}
Wegner estimate is well known for both discrete and continuous Anderson model 
under the assumption that the random variables are i.i.d. and that their 
distribution is regular 
\cite{CombesHislopKlopp_WegnerEstimate, Vesilic_IDSregularity}.
\end{proof}

\begin{definition}\label{def:FermiEn}
Fix a density of particles $\rho$. 
The \emph{Fermi energy} $E_\rho$ is a solution of the equation
\begin{equation}\label{eq:FermiEn}
N(E_\rho) = \rho \text{.}
\end{equation}
\end{definition}

\begin{remark}
It may happen that $\sup_E{N(E)} < \rho$. 
For example, if one considers a discrete Anderson model, 
then $N(E) \leq 1$ and for $\rho > 1$ the equation \eqref{eq:FermiEn} 
does not have any solutions.
This is due to the fact that the density of particles is too big 
(in other words, there isn't enough space for so many particles) 
to accommodate for $n$ fermions.
% In such situation the Hamiltonian $H(\Lambda, n)$ doesn't posses 
% eigenfunctions with a finite energy.
This situation never arises in a continuous setting.
\end{remark}

\begin{remark}
A solution of the equation \eqref{eq:FermiEn} is not necessarily unique if
the integrated density of states is flat on the level $\rho$. 
As $N(E)$ is a continuous nondecreasing function, the set of solutions is 
the closed interval $[E_\rho^{min}, E_\rho^{max}]$. 
From the spectral point of view the open interval 
$(E_\rho^{min}, E_\rho^{max})$ doesn't play any role 
because its intersection with the almost sure spectrum $\Sigma$ is empty.

In this situation we will also use the notation introduced by the 
Definition \ref{def:FermiEn}, meaning 
$E_\rho =  [E_\rho^{min}, E_\rho^{max}]$. 
As we will see, this convention is consistent with the results.
\end{remark}

Next theorem is the main result of this section.

\begin{theorem}\label{th:densEnFerm}
Let $\rho > 0$. Then
\begin{equation}\label{eq:densEnFerm}
\densEn(\rho, 0) = \frac{1}{\rho} \int_0^{E_\rho} E {\rmd}N(E) \text{.}
\end{equation}
\end{theorem}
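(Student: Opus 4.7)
The plan is to express the fermionic ground state energy density as a Stieltjes integral against the finite-volume density of states measure, and then invoke the known convergence of the integrated density of states to pass to the limit.

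First, I would recall from \eqref{eq:fermFondEn} that for noninteracting fermions the ground state energy is the sum of the first $n$ one-particle eigenvalues. Writing $N_\Lambda(E)$ for the normalized counting function of $H_\omega(\Lambda, 1)$ (as in the proof of Lemma \ref{lemmeVPfix2Bottom}), one has the identity
\begin{equation*}
\frac{\groundEn_{\omega, -}(\Lambda, n)}{n} = \frac{1}{n}\sum_{k=1}^{n} E_k(\Lambda, 1) = \frac{|\Lambda|}{n} \int_0^{E_n(\Lambda, 1, \omega)} E \, d N_\Lambda(E) \text{,}
\end{equation*}
where the lower limit is $0$ because $V_\omega \geq 0$ implies $\Sigma \subset [0, +\infty)$ and hence $E_k(\Lambda, 1) \geq 0$. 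The prefactor satisfies $|\Lambda|/n \to 1/\rho$ by the thermodynamic limit.

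Next, I would use the standard fact that, under Condition \ref{cond:DensEtats}, $N_\Lambda$ converges to $N$ uniformly on compact intervals, almost surely as $\Lambda \to \infty$: pointwise almost sure convergence is the one-particle thermodynamic limit for the density of states, and continuity of the limit $N$ promotes this to uniform convergence on compacts via a Dini-type argument. Since $N_\Lambda(E_n(\Lambda, 1, \omega)) = n/|\Lambda| \to \rho = N(E_\rho)$, this implies $E_n(\Lambda, 1, \omega) \to E_\rho$ (or, in the flat-region case, into the interval $[E_\rho^{\min}, E_\rho^{\max}]$).

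Finally, I would combine these facts through a sandwiching argument: for any $\epsilon > 0$ and $\Lambda$ sufficiently large, the upper endpoint of integration lies in $[E_\rho - \epsilon, E_\rho + \epsilon]$, while $d N_\Lambda \to dN$ weakly on any compact interval containing $E_\rho$, so the integral converges to $\int_0^{E_\rho} E \, dN(E)$. Multiplying by the prefactor $|\Lambda|/n \to 1/\rho$ then yields \eqref{eq:densEnFerm}. The main obstacle is controlling the simultaneous variation of both the measure $dN_\Lambda$ and its upper limit of integration as $\Lambda$ grows; continuity of $N$ at $E_\rho$ makes this manageable, and in the flat-region case consistency is automatic because $dN$ carries no mass on $(E_\rho^{\min}, E_\rho^{\max})$, so every choice of endpoint in that interval yields the same integral — which is precisely why the convention $E_\rho = [E_\rho^{\min}, E_\rho^{\max}]$ from the preceding remark is admissible.
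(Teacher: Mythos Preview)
Your proposal is correct and follows essentially the same route as the paper: write the per-particle ground state energy as a Stieltjes integral against the finite-volume density of states, establish $E_n(\Lambda,1,\omega)\to E_\rho$ (which the paper isolates as Lemma~\ref{lem:convEnFerm} via a monotonicity/contradiction argument rather than your Dini-type uniform convergence, but the content is the same), and then pass to the limit using weak convergence of $dN_\Lambda$ to $dN$. Your treatment of the flat-region ambiguity also matches the paper's.
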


To give a proof to this theorem, we will need the following crucial lemma, 
which explains why $E_\rho$ given by \eqref{eq:FermiEn} corresponds exactly 
to the common physical notion of the 
Fermi energy.\footnote{Fermi energy is the energy of highest occupied 
quantum state in a system of fermions at absolute zero temperature.
Alternatively, for non-interacting fermions, 
it is the increase in the ground state energy when one particle is added to 
the system.
For more details on the concept on the Fermi energy, 
see \cite{AshcroftMermin_SSP}.}

\begin{lemma}\label{lem:convEnFerm}
\begin{equation}\label{eq:convEnFerm}
E_n(\Lambda, 1, \omega) 
\xrightarrow[\substack{\Lambda \to \infty\\n / |\Lambda| \to \rho}]
{\omega \text{-p.s.}} E_\rho \text{.}
\end{equation}
\end{lemma}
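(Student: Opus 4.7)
The plan is to deduce the convergence of $E_n(\Lambda, 1, \omega)$ from the standard almost sure convergence of the finite-volume counting function to the integrated density of states, exactly in the spirit of the argument already given in Lemma \ref{lemmeVPfix2Bottom} but applied at level $\rho$ rather than at level $0$. Let $N_\Lambda(E) = |\Lambda|^{-1} \fCount(E, \Lambda, 1)$ denote the pre-limit density of states. Under Condition \ref{cond:DensEtats} and the general ergodic framework of Section \ref{sect:ModelAndNotations}, $N_\Lambda(E) \to N(E)$ almost surely at every continuity point of $N$; since $N$ is assumed continuous, the convergence holds almost surely at every $E \in \bbR$.

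For the upper bound, fix $\varepsilon > 0$. By definition of $E_\rho^{\max} = \sup\{E : N(E) = \rho\}$ together with the assumption $\supp \rmd N = \Sigma$, one has $N(E_\rho^{\max} + \varepsilon) = \rho + \delta$ for some $\delta = \delta(\varepsilon) > 0$. By almost sure convergence of $N_\Lambda$, for $\omega$ in a set of full measure and $\Lambda$ large, $N_\Lambda(E_\rho^{\max} + \varepsilon) > \rho + \delta/2$, i.e.\ the number of eigenvalues of $H_\omega(\Lambda, 1)$ below $E_\rho^{\max} + \varepsilon$ exceeds $|\Lambda|(\rho + \delta/2)$. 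Since $n/|\Lambda| \to \rho$, one has $n < |\Lambda|(\rho + \delta/2)$ eventually, hence $E_n(\Lambda, 1, \omega) \leq E_\rho^{\max} + \varepsilon$.

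The lower bound is symmetric. For $\varepsilon > 0$ small enough that $E_\rho^{\min} - \varepsilon > 0$, the definition of $E_\rho^{\min} = \inf\{E : N(E) = \rho\}$ and continuity of $N$ give $N(E_\rho^{\min} - \varepsilon) = \rho - \delta'$ for some $\delta' > 0$. Then $N_\Lambda(E_\rho^{\min} - \varepsilon) < \rho - \delta'/2$ almost surely for $\Lambda$ large, so the number of eigenvalues of $H_\omega(\Lambda, 1)$ below $E_\rho^{\min} - \varepsilon$ is less than $|\Lambda|(\rho - \delta'/2) < n$ eventually, forcing $E_n(\Lambda, 1, \omega) > E_\rho^{\min} - \varepsilon$. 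Combining both estimates, $\dist\!\left(E_n(\Lambda, 1, \omega),\, [E_\rho^{\min}, E_\rho^{\max}]\right) \to 0$ almost surely, which is exactly \eqref{eq:convEnFerm} under the convention $E_\rho = [E_\rho^{\min}, E_\rho^{\max}]$ introduced before the theorem.

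The main technical point to justify carefully is the almost sure pointwise convergence $N_\Lambda(E) \to N(E)$, in particular the exchange of the randomness and the thermodynamic limit; this is standard under the ergodicity and \textbf{(IAD)} assumptions in force throughout the paper, and continuity of $N$ removes the usual caveat about continuity points. The only delicate combinatorial issue is keeping track of the two endpoints $E_\rho^{\min}$ and $E_\rho^{\max}$ when $N$ has a plateau at level $\rho$, but as noted in the remark preceding the lemma, this plateau does not intersect $\Sigma$ so the two-sided squeeze still produces a meaningful limit in the convention adopted by the author.
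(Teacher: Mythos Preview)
Your proof is correct and follows essentially the same route as the paper's own argument: both use the almost sure convergence of the finite-volume normalized counting function $N_\Lambda$ to the continuous IDS $N$, together with monotonicity, to squeeze $E_n(\Lambda,1,\omega)$ into the interval $[E_\rho^{\min},E_\rho^{\max}]$. The only cosmetic difference is that the paper phrases the two bounds as contradiction arguments on $\liminf$ and $\limsup$, whereas you give the direct $\varepsilon$--$\delta$ version; the underlying mechanism is identical.
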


\begin{proof}
We denote by $N_\omega^\Lambda$ the density of states 
before taking the limit for a one-particle operator:
\begin{equation*}
N_\omega^\Lambda(E) 
= \frac{\fCount(E, H_\omega(\Lambda, 1))}{|\Lambda|} \text{.}
\end{equation*}
Then by definition, in the thermodynamic limit
\begin{equation*}
N_\omega^\Lambda(E_n(\Lambda, 1, \omega)) = \frac{n}{|\Lambda|} \to \rho \text{.}
\end{equation*}
On the other hand, by the existence of the integrated density of states 
we get: 
\begin{equation*}
N_\omega^\Lambda(\xi) 
\xrightarrow[\Lambda \to \infty]{\omega \text{-p.s.}}
N(\xi)
\quad \forall \xi \in \bbR
\text{.}
\end{equation*}
%et, en particulier : 
%\begin{equation*}
%N_\omega^\Lambda(E_\rho) 
%\xrightarrow[\Lambda \to \infty]{\omega \text{-p.s.}} 
%N(E_\rho) = \rho
%\text{.}
%\end{equation*}
We finish the proof by applying the monotonicity argument. 
Suppose that
\begin{equation*}
\liminf{E_n(\Lambda, 1, \omega)} < E_\rho^{min} \text{.}
\end{equation*}
By passing to a subsequence, we find that
there exists $\delta > 0$ such that  
$E_n(\Lambda, 1, \omega) < E_\rho^{min} - \delta$. 
We arrive to a contradiction: 
\begin{equation*}
N(E_\rho^{min}) 
\gets N_\omega^\Lambda(E_n(\Lambda, 1, \omega)) 
\leq  N_\omega(\Lambda(E_\rho^{min} - \delta)) 
\to N(E_\rho^{min} - \delta) 
< N(E_\rho^{min}) \text{.}
\end{equation*}
The last inequality is strict because $E_\rho^{min}$ 
is the minimal value of energy such that $N(E) = \rho$ 
and so for any $E$ above this level 
the density of states $N(E)$ is strictly smaller.

Similarly, we show that 
$\limsup{E_n(\Lambda, 1, \omega)} \leq E_\rho^{max}$, so
\begin{equation*}
E_\rho^{min} \leq \liminf{E_n(\Lambda, 1, \omega)} \leq 
\limsup{E_n(\Lambda, 1, \omega)} \leq E_\rho^{max} \text{,}
\end{equation*}
which is equivalent to \eqref{eq:convEnFerm}.
\end{proof}

\begin{proof}[Proof of Theorem \ref{th:densEnFerm}]
To show \eqref{eq:densEnFerm} we write
\begin{multline*}
\frac{E_1(\Lambda, n, \omega)}{n} 
= \frac{E_1(\Lambda, 1, \omega) + \hdots + E_n(\Lambda, 1, \omega)}{n} 
= \frac{1}{n} \Tr\left[H_\omega(\Lambda, 1) 
\cdot \ds1_{[0, E_n(\Lambda, 1, \omega)]}(H_\omega(\Lambda, 1))\right] \\
= \frac{|\Lambda|}{n} \int_0^{E_n(\Lambda, 1, \omega)} E {\rmd}N_\omega^\Lambda(E) 
\xrightarrow[\substack{\Lambda \to \infty\\n / |\Lambda| 
\to \rho}]{\omega \text{-p.s.}} 
\frac{1}{\rho} \int_0^{E_\rho} E {\rmd}N(E) \text{,}
\end{multline*}
where the convergence is valid because the measure ${\rmd}N_\omega^\Lambda$ 
converges weakly to ${\rmd}N$, 
the integration limit converges to $E_\rho$ by Lemma \ref{lem:convEnFerm}
and the dominated convergence theorem can be applied.
\end{proof}

\begin{remark}
The formula \eqref{eq:densEnFerm} admits an alternative form: 
\begin{equation*}
\densEn(\rho, 0) = 
\frac{\int_0^{E_\rho} E {\rmd}N(E)}{\int_0^{E_\rho} {\rmd}N(E)} \text{,}
\end{equation*}
which reads as the ground state energy density is the energy averaged
from zero to the Fermi energy with respect to the density of states.
\end{remark}

%=====================================================================
%========================= BIBLIO ====================================
%=====================================================================
\bibliographystyle{alpha}
\bibliography{rechercheBiblio}

\begin{thebibliography}{CBdMS11}

\bibitem[AM76]{AshcroftMermin_SSP}
N.W. Ashcroft and N.D. Mermin.
\newblock {\em Solid state physics}.
\newblock Holt-Saunders International Editions: Science : Physics. Holt,
  Rinehart and Winston, 1976.

\bibitem[And58]{Anderson_1}
P.~W. Anderson.
\newblock Absence of diffusion in certain random lattices.
\newblock {\em Phys. Rev.}, 109:1492--1505, Mar 1958.

\bibitem[AW09]{AizenmanWarzel_LocBoundsMultipartilce}
Michael Aizenman and Simone Warzel.
\newblock Localization bounds for multiparticle systems.
\newblock {\em Comm. Math. Phys.}, 290(3):903--934, 2009.

\bibitem[BR97]{BratteliRobinson2}
Ola Bratteli and Derek~W. Robinson.
\newblock {\em Operator algebras and quantum statistical mechanics. 2}.
\newblock Texts and Monographs in Physics. Springer-Verlag, Berlin, second
  edition, 1997.
\newblock Equilibrium states. Models in quantum statistical mechanics.

\bibitem[CBdMS11]{ChulaevskySuhovMonvel_DynLocMultiParticle}
Victor Chulaevsky, Anne Boutet~de Monvel, and Yuri Suhov.
\newblock Dynamical localization for a multi-particle model with an alloy-type
  external random potential.
\newblock {\em Nonlinearity}, 24(5):1451--1472, 2011.

\bibitem[CHK07]{CombesHislopKlopp_WegnerEstimate}
Jean-Michel Combes, Peter~D. Hislop, and Fr{\'e}d{\'e}ric Klopp.
\newblock An optimal {W}egner estimate and its application to the global
  continuity of the integrated density of states for random {S}chr\"odinger
  operators.
\newblock {\em Duke Math. J.}, 140(3):469--498, 2007.

\bibitem[CS09]{ChulaevskySuhov_MultiparticleAndersonLoc}
Victor Chulaevsky and Yuri Suhov.
\newblock Multi-particle {A}nderson localisation: induction on the number of
  particles.
\newblock {\em Math. Phys. Anal. Geom.}, 12(2):117--139, 2009.

\bibitem[Fis64]{Fisher_FreeEnergy}
Michael~E. Fisher.
\newblock The free energy of a macroscopic system.
\newblock {\em Arch. Rational Mech. Anal.}, 17:377--410, 1964.

\bibitem[FR66]{FisherRuelle_StabManyParticles}
Michael~E. Fisher and David Ruelle.
\newblock The stability of many-particle systems.
\newblock {\em J. Mathematical Phys.}, 7:260--270, 1966.

\bibitem[Gre07]{greiner2007quantum}
W.~Greiner.
\newblock {\em Quantum Mechanics. an Introduction}.
\newblock Theoretical physics. Springer, 2007.

\bibitem[Gri65]{Griffiths_MicrocanonicalInStatMech}
Robert~B. Griffiths.
\newblock Microcanonical ensemble in quantum statistical mechanics.
\newblock {\em J. Mathematical Phys.}, 6:1447--1461, 1965.

\bibitem[Jac75]{Jackson_ClassicalElectrodynamics}
John~David Jackson.
\newblock {\em Classical electrodynamics}.
\newblock John Wiley \& Sons Inc., New York, second edition, 1975.

\bibitem[Leb76]{Lebowitz_StatMech}
Joel~L. Lebowitz.
\newblock Statistical mechanics of equilibrium systems: some rigorous results.
\newblock In {\em Critical phenomena ({S}itges {I}nternat. {S}chool {S}tatist.
  {M}ech., {S}itges, 1976)}, pages 215--248. Lecture Notes in Phys., Vol. 54.
  Springer, Berlin, 1976.

\bibitem[LL60]{LandauLifschitz8}
L.D. Landau and E.M. Lifshits.
\newblock {\em Electrodynamics of continuous media}.
\newblock Teoreticheskaia fizika. Pergamon Press, 1960.

\bibitem[LL72]{Lieb_Lebowitz_ExistenceThDynElectNucl}
Elliott~H. Lieb and Joel~L. Lebowitz.
\newblock The constitution of matter: {E}xistence of thermodynamics for systems
  composed of electrons and nuclei.
\newblock {\em Advances in Math.}, 9:316--398, 1972.

\bibitem[LL77]{LandauLifschitzQuantum}
L.D. Landau and E.M. Lifshitz.
\newblock {\em Quantum mechanics: non-relativistic theory}.
\newblock Teoreticheskaia fizika (Izd. 3-e) (Landau, L. D, 1908-1968).
  Butterworth-Heinemann, 1977.

\bibitem[LSSY05]{Lieb_Seiringer_BoseGasCondensation}
Elliott~H. Lieb, Robert Seiringer, Jan~Philip Solovej, and Jakob Yngvason.
\newblock {\em The mathematics of the {B}ose gas and its condensation},
  volume~34 of {\em Oberwolfach Seminars}.
\newblock Birkh\"auser Verlag, Basel, 2005.

\bibitem[PS98]{Polya_Szego_vol1}
George P{\'o}lya and Gabor Szeg{\H{o}}.
\newblock {\em Problems and theorems in analysis. {I}}.
\newblock Classics in Mathematics. Springer-Verlag, Berlin, 1998.
\newblock Series, integral calculus, theory of functions, Translated from the
  German by Dorothee Aeppli, Reprint of the 1978 English translation.

\bibitem[Rue99]{Ruelle_StatMech}
David Ruelle.
\newblock {\em Statistical mechanics}.
\newblock World Scientific Publishing Co. Inc., River Edge, NJ, 1999.
\newblock Rigorous results, Reprint of the 1989 edition.

\bibitem[Smy76]{Smythe_ThErgMultiParam}
R.~T. Smythe.
\newblock Multiparameter subadditive processes.
\newblock {\em Ann. Probability}, 4(5):772--782, 1976.

\bibitem[Ves08]{Vesilic_IDSregularity}
Ivan Veseli{\'c}.
\newblock {\em Existence and regularity properties of the integrated density of
  states of random {S}chr\"odinger operators}, volume 1917 of {\em Lecture
  Notes in Mathematics}.
\newblock Springer-Verlag, Berlin, 2008.

\bibitem[Zag98]{Zagoskin_QTofManyBodySyst}
Alexandre~M. Zagoskin.
\newblock {\em Quantum theory of many-body systems}.
\newblock Graduate Texts in Contemporary Physics. Springer-Verlag, New York,
  1998.
\newblock Techniques and applications.

\end{thebibliography}

\end{document}